\DeclareMathOperator*{\argmin}{argmin}
\def\be{\begin{equation}}
\def\ee{\end{equation}}
\def\ba{\begin{array}}
\def\ea{\end{array}}
\def\bc{\begin{center}}
\def\ec{\end{center}}
\def\ZZ{\rm {{\rm Z}\kern-.48em{\rm Z}}}
\def\RR{\rm \hbox{I\kern-.2em\hbox{R}}}
\def\CC{\rm \hbox{C\kern -.5em{\raise .32ex
\hbox{$\scriptscriptstyle |$}}\kern - .22em{\raise .6ex
\hbox{$\scriptscriptstyle |$}}\kern
.4em}}
\def\CC{ {\mathbb C} }
\def\RR{ {\mathbb R} }
\def\ZZ{ {\mathbb Z} }
\newtheorem{Tm}{Theorem}[section]
\newtheorem{Df}[Tm]{Definition}
\newtheorem{pr}[Tm]{Proposition}
\newtheorem{Lm}[Tm]{Lemma}
\newtheorem{Ex}[Tm]{Example} 
\newtheorem{re}[Tm]{Remark} 
\newcommand{\Rmnum}[1]{\expandafter\@slowromancap\romannumeral #1@}
\newcommand{\medcap}{\mathbin{\scalebox{1.5}{\ensuremath{\cap}}}}
\newcommand{\medcup}{\mathbin{\scalebox{1.5}{\ensuremath{\cup}}}}
\numberwithin{equation}{section}
\begin{document}
\bibliographystyle{plain}

\title[Spatially Distributed  Sampling and Reconstruction]{Spatially Distributed Sampling and Reconstruction}

\author{Cheng Cheng, Yingchun Jiang  and Qiyu  Sun}
\address{Cheng: Department of Mathematics, University of Central Florida, Orlando, Florida 32816, USA}
\email{cheng.cheng@knights.ucf.edu}

\address{Jiang: School of Mathematics and Computational Science, Guilin University of
 Electronic Technology, Guilin, Guangxi 541004, China.
}
\email{guilinjiang@126.com}

\address{Sun: Department of Mathematics, University of Central Florida, Orlando, Florida 32816, USA}
\email{qiyu.sun@ucf.edu}

\thanks{This project is partially supported by the National
Natural Science Foundation of China (Nos. 11201094
and 11161014), Guangxi Natural Science Foundation
(2014GXNSFBA118012), and the National Science
Foundation (DMS-1412413).}

\begin{abstract}\  A spatially distributed system contains a large amount
of agents with limited sensing, data processing, and communication
capabilities. Recent technological advances have opened
up possibilities to deploy  spatially distributed systems for signal sampling and reconstruction.
In this paper,  we introduce a graph structure for a distributed sampling and reconstruction system
by coupling agents in a spatially distributed system with innovative positions of signals.
A fundamental problem in sampling theory
is  the robustness of  signal reconstruction  in the presence of
sampling noises. For a distributed sampling and reconstruction system, the robustness could be reduced to the stability of its sensing matrix.
In a traditional centralized sampling and reconstruction system,
the stability
of the sensing matrix could be verified by its central processor,
but the above procedure is infeasible in a distributed sampling and reconstruction system as it is decentralized.
 In this paper, we
 split a distributed sampling and reconstruction system
into a family of overlapping smaller subsystems, and we show that
the stability of the sensing matrix holds if and only if its quasi-restrictions to those  subsystems
have uniform stability. This new stability criterion could be  pivotal for the design of a robust
distributed sampling and reconstruction system
against supplement, replacement and impairment of agents, as we only need to check the uniform stability of affected
subsystems. In this paper, we also propose
an exponentially convergent distributed algorithm for  signal  reconstruction,
that provides
a suboptimal approximation to the original signal in the presence of bounded sampling noises.
\end{abstract}


\keywords{Spatially distributed systems, distributed sampling and reconstruction systems,
signals on a graph, distributed algorithms,
 finite rate of innovation,  Beurling dimensions,
 localized matrices, inverse-closed subalgebras.}

 \maketitle

\section{Introduction}

Spatially distributed
systems (SDS) 
have been
 widely used in (underwater) multivehicle and  multirobot networks, wireless sensor networks,  smart grids, etc (\cite{aky02, chong2003, curtin89,  yick2008, zhaobook2004}).
Comparing with traditional centralized systems that have a powerful central processor and 
reliable communication
between agents and the central processor,
an  SDS  could give unprecedented capabilities
especially when 
 creating a data exchange network
 requires significant  efforts
(due to physical barriers such as interference),
or when establishing a centralized  processor presents the daunting
challenge of 
processing all the information
(such as  big-data problems).
 In this paper, we consider  SDSs for signal sampling and reconstruction, and we
 describe the  topology  of an SDS by an undirected (in)finite  graph
 \begin{equation}\label{g.graph}
 {\mathcal G}:=(G, S),\end{equation}
 where a vertex 
represents  
 an agent
and an  edge
between two vertices means that a  direct communication link exists.

In the  SDS described above, 
  sampling data of a signal $f$
 acquired by the agent $\lambda\in G$ is
 \begin{equation}\label{data.def}
y(\lambda):=\langle f, \psi_\lambda\rangle,
\end{equation}
where $\psi_{\lambda}$ is
 the impulse response of
 the agent $\lambda\in G$ (\cite{akramacha13, aldroubisiamreview01, ast2005, eldar06,  ns10, smale04,
sunsiam06, vetterli13, unser00,  vmb02}).
Fundamental signal reconstruction problems are whether and how the signal $f$ can be recovered from its sampling data $y(\lambda),\ \lambda \in G$.
For  well-posedness, the signal $f$ of interest is usually assumed to  have additional properties, such as band-limitedness, finite rate of innovation,  smoothness, and
sparse expansion in a dictionary (\cite{aldroubisiamreview01, candes06, donoho92,   donoho06,  donoho03, unser00, vmb02}). 
In this paper, we consider spatial
 signals with the following parametric representation,
 \begin{equation}\label{signalongraph.def} f:=\sum_{i\in V} c(i) \varphi_i,\end{equation}
 where   amplitudes  $c(i), i\in V$,
are  bounded,
 and   generators   $\varphi_i, i\in V$,
  are essentially supported in a  spatial neighborhood
of the  innovative position $i$. The above family of  spatial signals
appears  in magnetic resonance spectrum, mass spectrometry,  global positioning system,
  cellular radio, ultra wide-band communication,
 electrocardiogram,   and many engineering applications, see \cite{donoho06, sunaicm08, vmb02} and references therein.

 In this paper, we associate every innovative position $i\in V$ 
with some anchor agents $\lambda\in G$,
and denote  the set of such associations
 $(i, \lambda)$
 by $T$. These associations can be easily understood as 
agents within certain (spatial) range of every innovative position.
  With the above associations, we describe our  distributed  sampling and reconstruction system  (DSRS)  by an undirected (in)finite graph
\begin{equation}\label{h.graph}
{\mathcal H}:=(G\cup V, \ S\cup T\cup T^*),\end{equation}
where
$T^*=\{(\lambda, i)\in G\times V,\ (i,\lambda)\in T\}$, see Figure \ref{graphstructure.fig}.
 The above graph description of a DSRS
 plays  a crucial role for us to study signal sampling and reconstruction.

 Given a DSRS described by the above  graph ${\mathcal H}$, set
 \begin{equation}\label{signalgraph.def} E  := \{(i,i')\in V\times V,\ \  i\ne i' \ {\rm and} \
    (i, \lambda), (i',\lambda)\in T\ {\rm for \ some} \ \lambda\in G\}.
\end{equation}
We then generate a graph structure
\begin{equation}\label{v.graph} {\mathcal V}:=(V, E)\end{equation}
 for signals in \eqref{signalongraph.def}, where
 an edge between two distinct innovative positions in ${V}$  means that  a common anchor agent  exists. The above graph structure for signals 
  is different from
 the conventional one 
  in most of the literature, where the graph 
  is usually preassigned.
  The reader may refer to \cite{pesenson08,  sandryhaila2013, shuman2013} and Remark \ref{signal.remark}.
  \begin{figure}[h]
\begin{center}
\includegraphics[width=88mm, height=58mm]{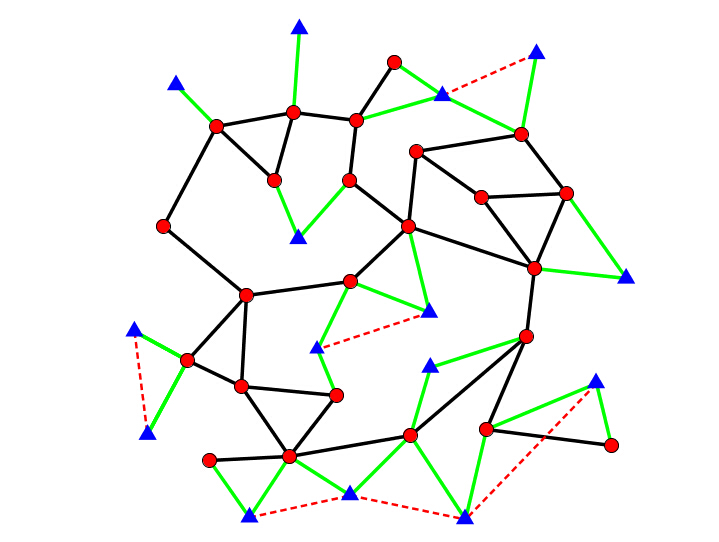}
\caption{The graph ${\mathcal H}=(G\cup V, \ S\cup T\cup T^*)$ in \eqref{h.graph} to describe a DSRS,
 where  vertices in  $G$ and  $V$ are plotted in red circles and blue triangles,
 and  edges in  $S, T$ and $E$ are in  black solid  lines, green solid lines and red dashed lines respectively.
 }
\label{graphstructure.fig}
\end{center}
\end{figure}

Define sensing matrix ${\bf S}$
 of our DSRS by
  \begin{equation}\label{sensingmatrix.def}
 {\bf S}:= (\langle \varphi_i, \psi_\lambda\rangle)_{\lambda\in G, i\in V}.
 \end{equation}
 The sensing matrix ${\bf S}$
 is stored by agents in a distributed manner. Due to the  storage limitation, each agent in our SDS  stores its corresponding  row (and perhaps also  its neighboring rows) in the sensing matrix ${\bf S}$, but it does not have the whole matrix available.
   Agents
in our SDS have limited acquisition ability  and  they could essentially catch signals not far from their
physical locations.
So the sensing matrix ${\bf S}$
 has certain   {\em polynomial off-diagonal decay}, i.e., there exist positive constants $D$ and $\alpha$ such that
\begin{equation}\label{sensingjaffard.cond}
|\langle \varphi_i, \psi_\lambda\rangle|\le D (1+\rho_{\mathcal H}(\lambda,i))^{-\alpha} \ {\rm for \ all} \  \lambda\in G \ {\rm and} \ i\in V,
\end{equation}
where  $\rho_{\mathcal H}$ is the
geodesic distance on the graph ${\mathcal H}$. 
For most  DSRSs in applications, such as
multivehicle and  multirobot networks and wireless sensor networks,
the signal generated at any innovative position could be detected by its anchor agents
and some of  their neighboring agents, but not by agents in the SDS far away.
Thus the sensing matrix ${\bf S}$ may have finite bandwidth $s\ge 0$,
\begin{equation} \label{bandwidth.intro}
 \langle \varphi_i, \psi_\lambda\rangle=0 \quad {\rm if} \ \rho_{\mathcal H}(\lambda,i)>s.
\end{equation}
 The  above global requirements \eqref{sensingjaffard.cond} and  \eqref{bandwidth.intro}  
 could be fulfilled in a distributed manner.

The sensing matrix ${\bf S}$ characterizes the sampling procedure \eqref{data.def}
of signals  with the parametric representation
\eqref{signalongraph.def}.  Applying the sensing matrix ${\bf S}$,
we obtain
the sample vector ${\bf y}=(\langle f, \psi_\lambda\rangle)_{\lambda\in G}$
of the signal $f$ 
from
its amplitude vector  ${\bf c} := (c(i))_{i\in V}$,
\begin{equation}\label{signalsamples}
{\bf y}= {\bf S} {\bf c}.
\end{equation} Under the  assumptions \eqref{sensingjaffard.cond} and  \eqref{bandwidth.intro},
it is shown in Proposition \ref{jaffard.pr} that
a signal $f$ with bounded
amplitude vector  ${\bf c}$ generates a bounded sample vector ${\bf y}$. Thus there exists a
positive constant $C$ such that
$$ \|{\bf y}\|_\infty\le C \|{\bf c}\|_\infty \ {\rm for \ all} \ {\bf c}\in \ell^\infty,$$
 where for $1\le p\le \infty$, $\ell^p$ is the space of all $p$-summable sequences  with   norm 
 $\|\cdot\|_p$.

\smallskip

A fundamental problem in sampling theory
is  the robustness of  signal reconstruction  in the presence of sampling noises
(\cite{
 bns09,  eldar06, mv05,  mvb06,  ns10, pawlak03, smale04}).
In this paper, we consider the scenario that
 the sampling data  ${\bf y}={\bf S}{\bf c}$ 
is corrupted by  bounded  deterministic/random noise $\pmb \eta=(\eta(\lambda))_{\lambda\in G}$,
\begin{equation}\label{noisysamples}
{\bf z} ={\mathbf S}{\bf c}+{\pmb \eta}
\end{equation}
(\cite{sunaicm14, wang09}).
For the robustness of our DSRS, one desires that
the signal reconstructed by some (non)linear algorithm  $\Delta$ is a suboptimal approximation
 to the original signal, in the sense that the differences between their corresponding amplitude vectors
 $\Delta({\bf z})$ and ${\bf c}$ are bounded by a multiple of noise level $\delta=\|\pmb \eta\|_\infty$, i.e.,
\begin{equation}\label{rougherror}
\|\Delta({\bf z})-{\bf c}\|_\infty\le  C \delta
\end{equation}
for some absolute  constant $C$   (\cite{adcock2013, aldroubisiamreview01,  cjs15}).

Given the noisy sampling vector ${\bf z}$ in \eqref{noisysamples}, solve the following nonlinear problem of maximal sampling error
(\cite{cadzow73, cadzow74}),
\begin{equation}\label{minimizationproblem}
 \Delta_\infty({\bf z}):=\argmin_{{\bf d}\in \ell^\infty} \|{\bf S} {\bf d}-{\bf z}\|_\infty.
\end{equation}
Observe from \eqref{noisysamples} and \eqref{minimizationproblem} that
\begin{equation*} \|{\bf S}\Delta_\infty({\bf z})-{\bf S}{\bf c}\|_\infty
  \le  
 \|{\bf S}\Delta_\infty({\bf z})-{\bf z}\|_\infty + \|\pmb\eta\|_\infty \le   \|{\bf S}{\bf c}-{\bf z}\|_\infty+ \|\pmb\eta\|_\infty\le 2\|\pmb\eta\|_\infty.\end{equation*}
Thus the solution of the $\ell^\infty$-minimization problem
   \eqref{minimizationproblem} gives a suboptimal approximation to the true amplitude vector ${\bf c}$  if
 the sensing matrix ${\bf S}$ of the DSRS has
$\ell^\infty$-stability (\cite{aldroubisiamreview01, sunxian14, unser00}).

\begin{Df}
For $1\le p\le \infty$,
 a  matrix ${\mathbf A}$ is said to have $\ell^p$-stability if there exist positive constants
$A$ and $B$ such that
\begin{equation}\label{stability.def}
A\|{\mathbf c}\|_p \le \|{\mathbf A}{\mathbf c}\|_p \le B\|{\mathbf c}\|_p \ \ {\rm for \ all}  \ {\mathbf c}\in \ell^p.
\end{equation}
We call the minimal constant $B$ and the  maximal constant $A$ for
\eqref{stability.def} to hold the upper  and lower
$\ell^p$-stability bounds respectively.
\end{Df}

The $\ell^\infty$-stability of a matrix can not be verified in a distributed manner, up to our knowledge.
We circumvent such a verification problem by showing in Theorem \ref{lpstability.tm}
that a  matrix with some polynomial off-diagonal decay has
$\ell^\infty$-stability
if it has $\ell^2$-stability.

Next we consider the problem how to  verify  $\ell^2$-stability of the sensing matrix ${\bf S}$
of our DSRS in a distributed manner.
It is well known that a finite-dimensional matrix ${\bf S}$
 has  $\ell^2$-stability if and only if ${\bf S}^T{\bf S}$ is strictly positive, and
 its  upper and lower stability bounds are  the same as  square roots of largest and  smallest eigenvalues
of ${\bf S}^T{\bf S}$.
  The above procedure
to establish  $\ell^2$-stability for the sensing matrix of our DSRS
is not feasible, because
the whole sensing matrix ${\bf S}$  is not available for any  agent in the DSRS and
there is no centralized processor to evaluate eigenvalues
of ${\bf S}^T{\bf S}$ of large size.
In Theorems \ref{l2stablility.thm1} and \ref{l2stablility.thm},
 we introduce a method to split the DSRS into a family of overlapping subsystems of small size, 
 and we show that the sensing matrix ${\bf S}$ with polynomial off-diagonal decay has $\ell^2$-stability
if and only if its quasi-restrictions
 to those  subsystems have uniform $\ell^2$-stability.
The new local criterion in  Theorems \ref{l2stablility.thm1} and \ref{l2stablility.thm}
provides a reliable tool for the  verification of the $\ell^2$-stability in a distributed manner.
 Also it  is pivotal for the design of
 a robust DSRS against
 supplement, replacement and impairment of agents,
 as it suffices to verify  the uniform stability of affected subsystems.

Then we consider signal reconstructions in a distributed manner, under the assumption that the sensing matrix $\mathbf S$ of our DSRS has $\ell^2$-stability. For centralized  signal reconstruction systems, there are many robust algorithms,
 such as the frame algorithm  and the  approximation-projection algorithm,
 to approximate  signals from their (non)linear noisy sampling data
 (\cite{Akram98, cjs15,  Christensen05,   dem08,
 Feichtinger95,
 ns10, sunsiam06, sunaicm14}).
In this paper, we develop a distributed algorithm  to  find the suboptimal approximation
  \begin{equation}\label{d2.eqn}
 \Delta_2({\bf z}):=({\bf S}^T{\bf S})^{-1} {\bf S}^T {\bf z}
 \end{equation}
to the original signal $f$ in \eqref{signalongraph.def}. 
For the case that our DSRS has finitely many agents (which is the case in most of practical applications),
the suboptimal approximation $\Delta_2({\bf z})$ in \eqref{d2.eqn}
is the unique least squares solution,
\begin{equation}\label{squareminimizationproblem}
  \Delta_2({\bf z})=\argmin_{{\bf d}\in \ell^2} \|{\bf S} {\bf d}-{\bf z}\|_2^2
  =\argmin_{{\bf d}\in \ell^2} \sum_{\lambda\in G} f_\lambda({\bf d}, {\bf z}),
\end{equation}
where  ${\bf d}=(d(i))_{i\in V}$, ${\bf z}=(z(\lambda))_{\lambda\in G}$, and
\begin{equation}
f_\lambda({\bf d}, {\bf z})=
  \Big|\sum_{i\in V} \langle \varphi_i, \psi_\lambda\rangle
 d(i)- z(\lambda)\Big|^2, \ \lambda\in G.
\end{equation}
As our SDS has strict constraints in its data processing power and communication bandwidth,  
we need 
develop distributed algorithms
    to  solve the  
    optimization problem
\begin{equation} \label{squareminimizationproblem2}
\min \sum_{\lambda\in G} f_\lambda({\bf d}, {\bf z}).
\end{equation}
For  the case that $G = V$ and the sensing matrix ${\bf S}$ is strictly diagonally dominant, the  Jacobi iterative method,
$$\left\{\begin{array}{l} d_1(\lambda)=0\\
\begin{array}{rcl} \hskip-0.1in
d_{n+1}(\lambda)
 & \hskip-0.08in = & \hskip-0.08in (\langle\varphi_\lambda, \psi_\lambda\rangle)^{-1}
\big(\sum_{i\ne \lambda} \langle \varphi_i, \psi_\lambda\rangle d_n(i)-z(\lambda)\big)\\
&\hskip-0.08in=& \hskip-0.08in {\rm argmin}_{t\in \RR} f_\lambda({\bf d}_{n; t, \lambda}, {\bf z}), \lambda\in G=V,\ n\ge 1,
\end{array}
\end{array}
\right.$$
is a distributed algorithm to solve the minimization problem \eqref{squareminimizationproblem2},
where ${\bf d}_{n; t, \lambda}$ is obtained from  ${\bf d}_n=(d_n(i))_{i\in V}$ by replacing  its $\lambda$-component
$d_n(\lambda)$  with $t$.
 The reader may refer to \cite{bertsekasbook1989,  ckl08,
 koshal2011, lopes2007,  nedic2015}
and references therein for
  historical remarks, motivations, applications and recent
advances on distributed algorithms, especially for the case that $G=V$.

In our DSRS, the set $G$ of agents  is not necessarily
 the same as the set $V$ of innovative positions,
 and even for the case that the sets $G$ and $V$ are the same,
 the sensing matrix ${\bf S}$  
 need not be strictly diagonally dominant in general.
 In this paper,  we introduce a  distributed algorithm \eqref{wn.def1} and \eqref{wn.def2}
 to  approximate $\Delta_2({\bf z})$  in
\eqref{d2.eqn}, when the sensing matrix ${\mathbf S}$ has $\ell^2$-stability  and
 satisfies the  requirements \eqref{sensingmatrix.def} and \eqref{sensingjaffard.cond}.
In the above distributed algorithm for  signal reconstruction, 
 each agent in the SDS   collects noisy observations of neighboring agents, then interacts
with its neighbors per
iteration,  and continues
the above recursive procedure until arriving at  an accurate
 approximation 
 to the solution $\Delta_2(z)$ in \eqref{d2.eqn}.
More importantly, we show in Theorems \ref{convergence.prop} and \ref{convergence.thm}
 that the proposed distributed algorithm  \eqref{wn.def1} and \eqref{wn.def2}  converges
  exponentially to the solution $\Delta_2(z)$ in \eqref{d2.eqn}. 
The establishment for the above convergence is virtually based on Wiener's lemma for localized matrices (\cite{Grochenig10, GrochenigL06, jaffard,suncasp05,suntams07,sunca11}) and on the observation
 that our sensing matrices are quasi-diagonal block dominated. 

The paper is organized as follows. In Section
\ref{sds.section}, we make some basic assumptions on the SDS and we introduce
 its Beurling dimension and sampling density. In Section
 \ref{signal.section}, we impose some 
 constraints on the graph
 ${\mathcal H}$  to describe our DSRS and then we define
  dimension and maximal rate of innovation for signals
on the graph ${\mathcal V}$. We show in Theorem \ref{Vdim.cr}
that the dimension  for signals is the
same as the Beurling dimension for the  SDS, and the
maximal rate of innovation is approximately proportional
to the sampling density  of the SDS. In Section \ref{sensing.section},
we prove in Proposition \ref{jaffard.pr} that sampling  a signal with bounded
amplitude vector by the procedure \eqref{data.def} produces a bounded sampling data vector
when the sensing matrix of the SDS has certain
polynomial off-diagonal decay. In Section \ref{stablesampling.section}, we establish
in Theorem \ref{lpstability.tm} that
if a matrix with certain off-diagonal decay has $\ell^2$-stability then it has $\ell^p$-stability for all $1\le p\le \infty$, and also in Theorem \ref{leastsqaure.tm}
that the solution $\Delta_2({\bf z})$ in \eqref{d2.eqn} is a suboptimal approximation to the
true amplitude vector. In   Theorems
\ref{l2stablility.thm1} and \ref{l2stablility.thm} of Section \ref{criterion.section},
we introduce a  criterion for the $\ell^2$-stability of a sensing matrix, that could be verified in a distributed manner.
In Section \ref{distributedalgorithm.section}, we propose a  distributed algorithm
to solve
the minimization problem \eqref{squareminimizationproblem}.
In Section \ref{simulations.section},
we present simulations to demonstrate our proposed algorithm  for robust signal reconstruction.
In Section \ref{proofs.section}, we include  proofs of all conclusions.

The sampling theory developed in this paper enjoys
the advantages of scalability  of  network sizes and data privacy preservation. Some results of this paper were announced in \cite{cjssampta}.

Notation: ${\bf A}^T$ is the transpose of a matrix ${\mathbf A}$;
$\|c\|_p$ is the norm on $\ell^p$; $\chi_F$ is the index function on a set $F$;
$\lceil x\rceil$ is the ceiling of $x \in \RR$; $\lfloor x\rfloor$ is the floor of $x \in \RR$;
$\# F$ is the cardinality of  a set $F$; and  $\|{\mathbf A}\|_{{\mathcal B}^2}$  is the operator norm of a matrix ${\bf A}$ on $\ell^2$.

\section{Spatially distributed 
systems}
\label{sds.section}

Let ${\mathcal G}$ be  the graph in \eqref{g.graph} 
 to  describe our  SDS. In this paper, we always assume that
 ${\mathcal G}$ is {\em connected} and {\em simple} (i.e., undirected, unweighted,  no graph loops nor multiple edges),
  which can be interpreted as follows:
   \begin{itemize}
   \item Agents in the SDS can communicate across the entire
network, but they have direct communication links only to adjacent agents. 
  \item Direct communication links between agents are bidirectional.
      \item
Agents have the same communication specification.
\item The communication  component 
    is not used for data  transmission within an agent.
\item
No multiple direct communication channels between agents exists.
   \end{itemize} 

In this section,
we recall  geodesic distance 
 on the graph ${\mathcal G}$
to measure 
communication cost between agents. Then
we consider doubling and polynomial growth properties of the counting measure 
on the graph ${\mathcal G}$,  and  we
 introduce Beurling dimension  and sampling density of the SDS. 
 For a discrete sampling set in the $d$-dimensional Euclidean space, 
  the reader may refer to \cite{cks08, dhsw11} for its Beurling dimension and to \cite{aldroubisiamreview01, ns10, sunsiam06,  unser00} for its sampling  density.
Finally, 
we introduce a special family of balls to cover the graph $\mathcal G$,
which will be used in Section
\ref{distributedalgorithm.section} for
the consensus of our proposed distributed algorithm.

\subsection{Geodesic distance and communication cost}
\label{distance.subsection}

 For a connected simple graph ${\mathcal G}:=(G,S)$,  let $\rho_{\small {\mathcal G}}(\lambda,\lambda)=0$ for $\lambda\in G$, and $\rho_{\small {\mathcal G}}(\lambda,\lambda')$ be the  number of edges
 in a shortest path  connecting two distinct vertices $\lambda,\lambda'\in G$.
The above function
 $\rho_{\small {\mathcal G}}$ on $G\times G$   is known as
   {\em geodesic distance} on the graph ${\mathcal G}$ (\cite{chungbook}).
It is  nonnegative and symmetric:
\begin{itemize}
\item [{(i)}] 
 $\rho_{\small {\mathcal G}} (\lambda,\lambda') \ge 0$ for all $\lambda,\lambda' \in G$;

 \item[{(ii)}] 
$\rho_{\small {\mathcal G}}(\lambda,\lambda')=\rho_{\small {\mathcal G}}(\lambda',\lambda)$ for all $\lambda,\lambda'\in G$.
\end{itemize}
And it satisfies
identity of indiscernibles and the triangle inequality:
\begin{itemize}

\item [{(iii)}] 
 $\rho_{\small {\mathcal G}} (\lambda,\lambda')=0$ if and only if $\lambda=\lambda'$;

\item[{(iv)}] 
 $\rho_{\small {\mathcal G}}(\lambda,\lambda')\le \rho_{\small {\mathcal G}}(\lambda,\lambda^{\prime\prime})+\rho_{\small {\mathcal G}}(\lambda^{\prime\prime},\lambda')$ for all $\lambda,\lambda',\lambda^{\prime\prime} \in G$.

\end{itemize}

Given two nonadjacent agents $\lambda$ and $ \lambda'\in G$,
  the distance $\rho_{\mathcal G}(\lambda, \lambda')$ can be used to measure 
   the communication cost between these two agents  if the communication is processed through  their shortest path.

\subsection{Counting measure, Beurling dimension and sampling density}
\label{counting.subsection}

For a connected simple
 graph ${\mathcal G}:=(G,S)$,
 denote  its {\em counting measure} by $\mu_{\mathcal G}$,
$$\mu_{\mathcal G} (F) := \sharp(F) \ \ {\rm for}  \ F \subset  G.$$

\begin{Df}\label{doublingmeausre.def}
 The counting measure $\mu_{\mathcal G}$  is said to be a doubling measure if there exists a positive number $D_0({\mathcal G})$ such that
\begin{equation}\label{doublingconstants.def}
 \mu_{\mathcal G}(B_{\mathcal G}(\lambda,2r)) \le  D_0({\mathcal G})\mu_{\mathcal G}(B_{\mathcal G}(\lambda,r)) \  {\rm \ for \  all } \   \lambda\in G \ {\rm and} \ r \ge 0,\ \end{equation}
 where
 $$B_{\mathcal G}(\lambda,r):=\{\lambda'\in G,\ \ \rho_{\small {\mathcal G}}(\lambda,\lambda') \le r\}$$
 is the closed ball with center $\lambda$ and radius $r$.
 \end{Df}

The doubling property of the counting measure $\mu_{\mathcal G}$ can be
interpreted as numbers of agents in $r$-neighborhood and $(2r)$-neighborhood of any agent are comparable.
 The  doubling constant of 
 $\mu_{\mathcal G}$ is the minimal
  constant $D_0({\mathcal G})\ge 1$ for \eqref{doublingconstants.def} to hold (\cite{coifman71, dh08}). 
It
dominates  the maximal vertex degree 
of the graph ${\mathcal G}$,
\begin{equation}\label{defd0.eqn}
\deg({\mathcal G})\le D_0({\mathcal G}),\end{equation}
because
\begin{equation*} \deg({\mathcal G})
= 
 \max_{\lambda\in G}
\#\{\lambda'\in G,\ (\lambda,\lambda')\in S\} 
 \le \max_{\lambda\in G} \# (B_{\mathcal G}(\lambda,1)\big)\le D_0({\mathcal G}).
\end{equation*}
We remark that for a finite graph ${\mathcal G}$, its doubling constant $D_0({\mathcal G})$
could be much larger than
 its maximal vertex degree $\deg({\mathcal G})$. For instance,
a tree  with one branch for the first $L$ levels
 and two branches for the next $L$ levels
 has $3$ as its maximal vertex degree and $(2^{L+1}+L-1)/(L+1)$ as its doubling constant, see Figure \ref{tree.fig} with $L=3$.
 \begin{figure}[h]
\begin{center}
\includegraphics[width=78mm, height=38mm]{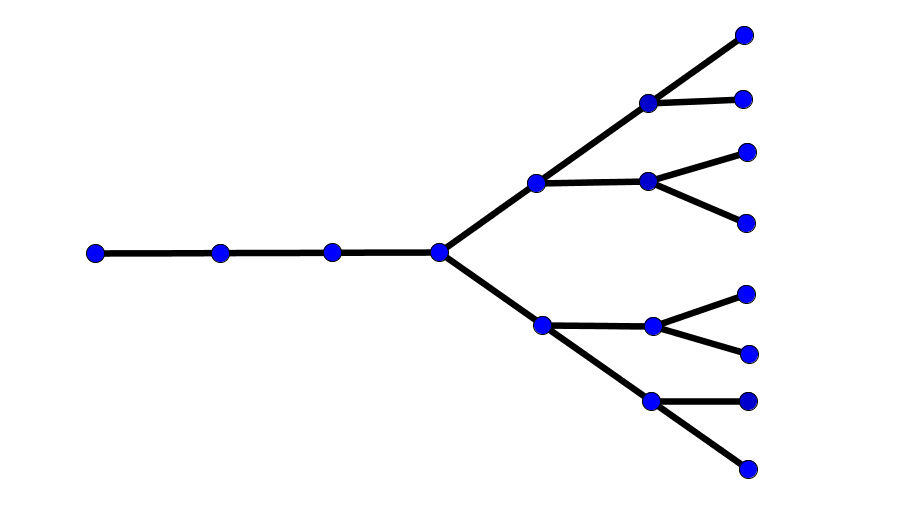}
\caption{A tree with large doubling constant but limited maximal vertex degree.
 }
\label{tree.fig}
\end{center}
\end{figure}

The counting measure on  an infinite graph is not necessarily a doubling  measure.
However,  the counting measure on a finite graph is a doubling  measure
 and its  doubling constant  could depend on the local topology and size of the graph, cf. the tree in
Figure \ref{tree.fig}.
In this paper, the graph ${\mathcal G}$ to describe our SDS
is assumed to have  its  counting measure
 with the doubling property \eqref{doublingconstants.def}.

\noindent{\bf Assumption 1}: {\em The counting measure $\mu_{\mathcal G}$ of the graph ${\mathcal G}$ is a doubling measure,
\begin{equation}\label{basicassumption0}
D_0({\mathcal G})<\infty.
\end{equation}}
 Therefore the maximal vertex degree of graph $\mathcal G$ 
 is finite, 
 \begin{equation*} 
 \deg({\mathcal G})<\infty,
 \end{equation*}
 which could be understood as that there are limited direct communication channels for every  agent in the SDS.

\begin{Df}\label{polynomial.def}
 The counting measure $\mu_{\mathcal G}$  is said to have polynomial growth if
there exist positive constants $D_1({\mathcal G})$ and $d({\mathcal G})$ such that
   \begin{equation}\label{countmeasure.pr.eq1}
   \mu_{\mathcal G}(B_{\mathcal G}(\lambda,r))\le D_1({\mathcal G})(1+r)^{d({\mathcal G})} \  \ {\rm  for \ all} \ \lambda \in G\ {\rm and} \ r\ge 0.\end{equation}
\end{Df}
For the graph ${\mathcal G}$ associated with an SDS,
we may consider minimal constants $d({\mathcal G})$
and $D_1({\mathcal G})$ in \eqref{countmeasure.pr.eq1}
as  {\em Beurling dimension}
 and {\em sampling density} of the SDS 
 respectively. We remark that
 \begin{equation}\label{dimension.remark}
 d({\mathcal G})\ge 1,
 \end{equation}
 because
$$\sup_{\lambda\in G} \mu_{\mathcal G}(B_{\mathcal G}(\lambda,r))\ge 1+r \ \ {\rm for \ all} \  0\le r\le {\rm diam}({\mathcal G}),$$
where ${\rm diam}({\mathcal G}):=\sup_{\lambda,\lambda'\in G} \rho_{\mathcal G}(\lambda,\lambda')$ is the diameter of the graph ${\mathcal G}$.

 \smallskip
 Applying  \eqref{doublingconstants.def} repeatedly leads to the following general doubling property:
 \begin{equation*}
 \mu_{\mathcal G}(B_{\mathcal G}(\lambda,sr))
 \le (D_0({\mathcal G}))^{\lceil \log_2 s\rceil} \mu_{\mathcal G}(B_{\mathcal G}(\lambda,r))
  \le D_0({\mathcal G}) s^{\log_2 D_0({\mathcal G})} \mu_{\mathcal G}(B_{\mathcal G}(\lambda,r))
 \end{equation*}
   for all $\lambda\in G$, $s\ge 1$ and $r\ge 0$.
Thus
\begin{equation*} 
\mu_{\mathcal G}(B_{\mathcal G}(\lambda, r))
\le D_0({\mathcal G}) (1+r)^{\log_2 D_0({\mathcal G})}  \mu_{\mathcal G}\Big(B_{\mathcal G}\Big(\lambda, \frac{r}{1+r}\Big)\Big) 
=D_0({\mathcal G}) (1+r)^{\log_2 D_0({\mathcal G})}, \ r\ge 0.\end{equation*}
This shows that a doubling measure 
 has polynomial growth.

  \begin{pr} \label{countmeasure.pr}
  If the counting measure $\mu_{\mathcal G}$ on a
   connected simple graph ${\mathcal G}$ is a doubling measure, then it has polynomial growth.
  \end{pr}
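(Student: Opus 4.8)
The plan is to iterate the doubling inequality \eqref{doublingconstants.def}, bootstrapping from the trivial observation that a ball of radius strictly less than one is a single vertex. First I would note that for every $\lambda\in G$ and $r\ge 0$ we have $\frac{r}{1+r}<1$, so $B_{\mathcal G}(\lambda,\frac{r}{1+r})=\{\lambda\}$ and hence $\mu_{\mathcal G}(B_{\mathcal G}(\lambda,\frac{r}{1+r}))=1$. The idea is then to climb from radius $\frac{r}{1+r}$ up to radius $r$ by repeatedly doubling, paying a factor $D_0({\mathcal G})$ at each step.

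Next I would establish a general dilation estimate: for all $\lambda\in G$, $s\ge 1$ and $r\ge 0$,
\[
\mu_{\mathcal G}(B_{\mathcal G}(\lambda,sr))\le \big(D_0({\mathcal G})\big)^{\lceil \log_2 s\rceil}\,\mu_{\mathcal G}(B_{\mathcal G}(\lambda,r)).
\]
This follows by applying \eqref{doublingconstants.def} exactly $\lceil \log_2 s\rceil$ times, using the inclusion $B_{\mathcal G}(\lambda,sr)\subseteq B_{\mathcal G}(\lambda,2^{\lceil\log_2 s\rceil}r)$ together with the monotonicity of $r\mapsto \mu_{\mathcal G}(B_{\mathcal G}(\lambda,r))$. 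Since $\lceil\log_2 s\rceil\le 1+\log_2 s$ and $D_0({\mathcal G})\ge 1$, the right-hand side is bounded by $D_0({\mathcal G})\,s^{\log_2 D_0({\mathcal G})}\,\mu_{\mathcal G}(B_{\mathcal G}(\lambda,r))$.

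Finally I would specialize this estimate with $r$ replaced by $\frac{r}{1+r}$ and $s=1+r\ge 1$, so that $sr$ becomes $r$, which yields
\[
\mu_{\mathcal G}(B_{\mathcal G}(\lambda,r))\le D_0({\mathcal G})\,(1+r)^{\log_2 D_0({\mathcal G})}\,\mu_{\mathcal G}\Big(B_{\mathcal G}\Big(\lambda,\tfrac{r}{1+r}\Big)\Big)=D_0({\mathcal G})\,(1+r)^{\log_2 D_0({\mathcal G})}.
\]
Hence \eqref{countmeasure.pr.eq1} holds with $D_1({\mathcal G})=D_0({\mathcal G})$ and $d({\mathcal G})=\log_2 D_0({\mathcal G})$. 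The only mildly delicate point is the bookkeeping of the integer number of doubling steps — ensuring that $\lceil\log_2 s\rceil$, rather than $\log_2 s$, appears — but because $D_0({\mathcal G})\ge 1$ this incurs only the harmless extra factor $D_0({\mathcal G})$, so I do not anticipate a genuine obstacle; the statement is essentially a one-line iteration argument.
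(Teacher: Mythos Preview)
Your argument is correct and matches the paper's own proof essentially line for line: iterate the doubling inequality $\lceil\log_2 s\rceil$ times, bound $(D_0({\mathcal G}))^{\lceil\log_2 s\rceil}\le D_0({\mathcal G})\,s^{\log_2 D_0({\mathcal G})}$, and then specialize to $s=1+r$ with base radius $\tfrac{r}{1+r}<1$ so that the starting ball is a single vertex. There is nothing to add.
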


For a connected simple graph ${\mathcal G}$, its maximal vertex degree 
is finite if the counting measure $\mu_{\mathcal G}$ has polynomial growth, 
but the converse is  not true. 
 We observe that if the maximal vertex degree $\deg({\mathcal G})$ is finite, then
 the counting measure $\mu_{\mathcal G}$ has exponential growth,
\begin{equation}\label{maximalvertex.def} \mu_{\mathcal G}(B_{\mathcal G}(\lambda,r))\le \frac{(\deg({\mathcal G}))^{r+1}-1}{\deg({\mathcal G})-1}
 \  \ {\rm  for \ all} \ \lambda \in G\ {\rm and} \ r\ge 0.
\end{equation}


\subsection{Spatially distributed subsystems}
\label{covering.subsection}

 For a connected simple graph ${\mathcal G}:=(G,S)$, take  a maximal $N$-disjoint subset $G_N \subset G, 0 \le N\in \RR$,
such that
 \begin{equation}\label{vn.def1}
 B_{\mathcal G}(\lambda, N)\medcap  \big(\medcup_{\lambda_m\in G_N}B_{\mathcal G}(\lambda_m, N)\big) \neq \emptyset \  \ {\rm for\  all}\  \lambda\in G,
  \end{equation}
and
   \begin{equation}\label{vn.def2}
 B_{\mathcal G}(\lambda_m, N)\medcap B_{\mathcal G}(\lambda_{m'}, N)= \emptyset \ \ {\rm for \ all}\  \lambda_m, \lambda_{m'}\in G_N.\end{equation}
 For $0 \le N <1$, it follows from \eqref{vn.def1} that $G_N=G$. For $N\ge 1$,
 there are many subsets $G_N$ of vertices satisfying \eqref{vn.def1} and \eqref{vn.def2}.
For instance, we can construct $G_N=\{\lambda_m\}_{m\ge 1}$ as follows:  take a $\lambda_1\in G$
and define $\lambda_m,\ m\ge 2$, recursively by 
\begin{equation*}
   \lambda_m = \mathop{\argmin}_{\lambda \in A_m}{\rho_{\mathcal G}(\lambda, \lambda_1)},
   \end{equation*}
where $A_m=\{\lambda\in G, \ B_{\mathcal G}(\lambda, N)\cap B_{\mathcal G}(\lambda_{m'}, N)=\emptyset, 1\le m'\le m-1\}$.

For a set $G_N$ satisfying \eqref{vn.def1} and \eqref{vn.def2},
the family of  balls $\{B_{\mathcal G}(\lambda_m, N'), \lambda_m\in G_N\}$   with $N'\ge 2N$
  provides a finite covering for  $G$.

 \begin{pr}\label{covering.pr}
  Let  ${\mathcal G}:=(G, S)$
   be a connected simple graph and  $\mu_{\mathcal G}$ have the doubling property \eqref{basicassumption0} with constant $D_0({\mathcal G})$.
 If $G_N$ satisfies
\eqref{vn.def1} and \eqref{vn.def2},
then 
\begin{equation}\label{covering.pr.eq1} 1
\le \inf_{\lambda\in G}\Sigma_{\lambda_m \in G_N}\chi_{B_{\mathcal G}(\lambda_m, N')}(\lambda) 
\le \sup_{\lambda\in G}\Sigma_{\lambda_m \in G_N}\chi_{B_{\mathcal G}(\lambda_m, N')}(\lambda)
\le (D_0({\mathcal G}))^{\lceil\log_2 (2N'/N+1)\rceil}
\end{equation}
for all $N'\ge 2N$.
 \end{pr}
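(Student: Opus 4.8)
The plan is to establish the two outer inequalities separately; the middle one, $\inf\le\sup$, is automatic.

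\textbf{The lower bound.} First I would fix $\lambda\in G$ and use \eqref{vn.def1} to produce $\lambda_m\in G_N$ with $B_{\mathcal G}(\lambda,N)\cap B_{\mathcal G}(\lambda_m,N)\neq\emptyset$. Choosing a vertex $\nu$ in this intersection and applying the triangle inequality (item (iv) of Subsection \ref{distance.subsection}) gives $\rho_{\mathcal G}(\lambda,\lambda_m)\le \rho_{\mathcal G}(\lambda,\nu)+\rho_{\mathcal G}(\nu,\lambda_m)\le 2N\le N'$, so $\lambda\in B_{\mathcal G}(\lambda_m,N')$. Hence the sum $\sum_{\lambda_m\in G_N}\chi_{B_{\mathcal G}(\lambda_m,N')}(\lambda)$ is at least $1$ for every $\lambda\in G$, which is the first inequality.

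\textbf{The upper bound.} Here I may assume $N>0$, since for $N=0$ the right-hand side is $+\infty$ and there is nothing to prove. Fix $\lambda\in G$ and set $I:=\{\lambda_m\in G_N:\rho_{\mathcal G}(\lambda,\lambda_m)\le N'\}$, so that the sum in question equals $\#I$. For each $\lambda_m\in I$ the triangle inequality yields the two nested inclusions $B_{\mathcal G}(\lambda_m,N)\subseteq B_{\mathcal G}(\lambda,N'+N)$ and $B_{\mathcal G}(\lambda,N'+N)\subseteq B_{\mathcal G}(\lambda_m,2N'+N)$. Feeding the second inclusion into the general doubling property recalled just before Proposition \ref{countmeasure.pr}, applied with radius $N$ and dilation factor $s=2N'/N+1\ (\ge 1$, since $N'\ge 2N)$, gives
\[
\mu_{\mathcal G}\big(B_{\mathcal G}(\lambda,N'+N)\big)\le \mu_{\mathcal G}\big(B_{\mathcal G}(\lambda_m,2N'+N)\big)\le (D_0({\mathcal G}))^{\lceil\log_2(2N'/N+1)\rceil}\,\mu_{\mathcal G}\big(B_{\mathcal G}(\lambda_m,N)\big).
\]

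Finally I would rearrange this display into the per-term lower bound $\mu_{\mathcal G}(B_{\mathcal G}(\lambda_m,N))\ge (D_0({\mathcal G}))^{-\lceil\log_2(2N'/N+1)\rceil}\mu_{\mathcal G}(B_{\mathcal G}(\lambda,N'+N))$, sum it over $\lambda_m\in I$, and invoke the disjointness \eqref{vn.def2}: since the balls $\{B_{\mathcal G}(\lambda_m,N)\}_{\lambda_m\in I}$ are pairwise disjoint subsets of $B_{\mathcal G}(\lambda,N'+N)$, one obtains
\[
\#I\cdot (D_0({\mathcal G}))^{-\lceil\log_2(2N'/N+1)\rceil}\,\mu_{\mathcal G}\big(B_{\mathcal G}(\lambda,N'+N)\big)\le \sum_{\lambda_m\in I}\mu_{\mathcal G}\big(B_{\mathcal G}(\lambda_m,N)\big)\le \mu_{\mathcal G}\big(B_{\mathcal G}(\lambda,N'+N)\big).
\]
Dividing through by $\mu_{\mathcal G}(B_{\mathcal G}(\lambda,N'+N))\ge 1$ (the ball contains $\lambda$) yields $\#I\le (D_0({\mathcal G}))^{\lceil\log_2(2N'/N+1)\rceil}$, and taking the supremum over $\lambda\in G$ completes the argument. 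I expect the only delicate point to be the radius bookkeeping — arranging the two inclusions and the dilation factor $2N'/N+1$ so that the doubling estimate lines up exactly — together with the harmless observations that the degenerate range $0\le N<1$ (where $G_N=G$) is covered by the same computation and that $N=0$ needs no treatment at all.
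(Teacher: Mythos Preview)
Your proposal is correct and follows essentially the same approach as the paper. The lower bound is identical, and for the upper bound both you and the paper use disjointness of the balls $B_{\mathcal G}(\lambda_m,N)$ together with the doubling estimate at dilation $2N'/N+1$; the only cosmetic difference is that the paper contains the union of small balls in $B_{\mathcal G}(\lambda_{m'},2N'+N)$ for a single center $\lambda_{m'}\in G_N$, whereas you route through the intermediate ball $B_{\mathcal G}(\lambda,N'+N)$ before passing to $B_{\mathcal G}(\lambda_m,2N'+N)$.
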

 For $N'\ge 0$, define a family of
spatially distributed subsystems
 $${\mathcal G}_{\lambda, N'}:= (
 B_{\mathcal G}(\lambda, N'), S_{\lambda, N'})$$
 with fusion agents $\lambda\in G_N$,
 where $(\lambda', \lambda^{\prime\prime})\in S_{\lambda, N'}$
 if $\lambda', \lambda^{\prime\prime}\in B_{\mathcal G}(\lambda, N')$ and $(\lambda', \lambda^{\prime\prime})\in S$.
Then the maximal $N$-disjoint property of the set $G_N$ means that the $N$-neighboring
 subsystems ${\mathcal G}_{\lambda_m, N}, \lambda_m\in G_N$, have no common agent.
 On the other hand, it follows from Proposition \ref{covering.pr} that for any $N'\ge 2N$,
 every agent in our SDS is in at least one
  and at most finitely many of the $N'$-neighboring
 subsystems ${\mathcal G}_{\lambda_m, N'}, \lambda_m\in G_N$.
 The above idea to  split the SDS  into
 subsystems of small sizes is crucial in  our proposed distributed algorithm  in  Section \ref{distributedalgorithm.section} for stable signal reconstruction. 

\section{Signals on the graph $\mathcal V$} 
\label{signal.section}

Let $V$ be the set
of innovative positions of signals
$f$ in \eqref{signalongraph.def}, and ${\mathcal G = (G, S)}$ be the graph in \eqref{g.graph} to represent our SDS. We build
the graph ${\mathcal H}$ 
in  \eqref{h.graph} to describe our DSRS
by associating every innovative position in $V$  with some anchor agents  in $G$.
In this paper, we consider DSRS with the following properties.

\noindent {\bf Assumption 2}: {\em There is a direct communication link between distinct anchor agents of an innovative position,
\begin{equation}\label{basicassumption1}
(\lambda_1, \lambda_2)\in S \ {\rm if} \ (i, \lambda_1) \ {\rm and}\ (i, \lambda_2)\in T \ {\rm for \ some} \ i\in V.
\end{equation}}

 \noindent {\bf Assumption 3}: {\em There are finitely many innovative positions for any anchor agent,
\begin{equation}\label{basicassumption2}
L:=\sup_{\lambda\in G} \#\{i\in V, \ (i, \lambda)\in T\}<\infty.
\end{equation}}

 \noindent {\bf Assumption 4}: {\em Any agent has an anchor agent within bounded distance,}
\begin{equation}\label{basicassumption3}
M:= \sup_{\lambda\in G} \inf \{\rho_{\mathcal G}(\lambda, \lambda'),  \ (i, \lambda')\in T \ {\rm for \ some}\ i\in V\}<\infty.
\end{equation}

The  graph ${\mathcal H}$ associated with the above DSRS is a connected simple graph. Moreover, we have the following important properties about
shortest paths between different vertices in ${\mathcal H}$.

\begin{pr}\label{shortestpath.pr}
Let  the graph ${\mathcal H}$ in \eqref{h.graph} satisfy \eqref{basicassumption1}. Then all intermediate vertices  in the shortest paths in ${\mathcal H}$ to connect distinct vertices in ${\mathcal H}$
belong to the subgraph ${\mathcal G}$.
\end{pr}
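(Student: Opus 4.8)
The plan is to proceed by contradiction. Suppose $u,v$ are distinct vertices of $\mathcal H$ and that $v_0 v_1 \cdots v_n$ is a shortest path in $\mathcal H$ joining them, so $v_0=u$, $v_n=v$, consecutive vertices are adjacent in $\mathcal H$, and $n=\rho_{\mathcal H}(u,v)$. Assume, toward a contradiction, that some interior vertex $w=v_k$ with $1\le k\le n-1$ lies in $V$. I would first record the elementary minimality fact that $v_{k-1}\ne v_{k+1}$: if they coincided, then $v_0\cdots v_{k-1}v_{k+2}\cdots v_n$ would be a walk from $u$ to $v$ of length $n-2$, contradicting $n=\rho_{\mathcal H}(u,v)$.

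Next I would exploit the structure of $\mathcal H=(G\cup V,\ S\cup T\cup T^*)$ in \eqref{h.graph}: since the edge set of $\mathcal H$ is $S\cup T\cup T^*$ and the edge set $E$ of $\mathcal V$ plays no role here, every neighbor in $\mathcal H$ of a vertex in $V$ is one of its anchor agents and hence lies in $G$. In particular $v_{k-1}$ and $v_{k+1}$ are both anchor agents of $w$, i.e. $(w,v_{k-1}),(w,v_{k+1})\in T$, and by the previous step they are distinct. Now Assumption 2 applies: \eqref{basicassumption1} gives $(v_{k-1},v_{k+1})\in S\subset S\cup T\cup T^*$, so $v_{k-1}$ and $v_{k+1}$ are adjacent in $\mathcal H$. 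Replacing the subpath $v_{k-1}\,v_k\,v_{k+1}$ by the single edge $v_{k-1}\,v_{k+1}$ produces a walk from $u$ to $v$ of length $n-1<n$, again contradicting the minimality of $n$. Hence no interior vertex of a shortest path of $\mathcal H$ lies in $V$; equivalently, all of them lie in $G$, which is exactly the claimed statement.

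I do not expect a genuine obstacle here; the argument is short and purely combinatorial. The two points that require care are: (i) the conclusion concerns only \emph{interior} vertices — an endpoint of the path may well lie in $V$, and the argument genuinely uses that $w$ possesses both a predecessor $v_{k-1}$ and a successor $v_{k+1}$ on the path; and (ii) one must verify $v_{k-1}\ne v_{k+1}$ before invoking \eqref{basicassumption1}, since otherwise that assumption does not apply and the proposed shortcut would be vacuous. The degenerate cases are immediate: if $n=0$ then $u=v$, which is excluded, and if $n=1$ there are no interior vertices and the statement holds trivially.
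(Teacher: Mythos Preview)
Your proof is correct and follows essentially the same approach as the paper's own proof: suppose an interior vertex lies in $V$, observe its two neighbors on the path are anchor agents in $G$, and use \eqref{basicassumption1} to shortcut. The only organizational difference is that the paper first reduces to the case of endpoints in $G$ and then splits into the two cases $u_{k-1}=u_{k+1}$ and $u_{k-1}\ne u_{k+1}$, whereas you dispose of the equal-neighbors case at the outset via minimality and treat general endpoints directly; this is a presentational simplification, not a different idea.
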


By Proposition \ref{shortestpath.pr},
\begin{equation}\label{hgball.eq}
\rho_{\mathcal H}(\lambda, \lambda') = \rho_{\mathcal G}(\lambda, \lambda') \quad {\rm for \ all}\ \lambda, \lambda'\in G,
\end{equation}
and
\begin{equation}\label{rhorhoH.eq}
\rho_{\mathcal H}(i,i')=2+\inf_{\lambda, \lambda'\in G} \{\rho_{\mathcal G}(\lambda, \lambda'):
(i,\lambda), (i', \lambda')\in T\} \quad {\rm for \ all\ distinct} \  i, i'\in V,
\end{equation}
 where  $\rho_{\mathcal H}$ is the  geodesic distance
for the
graph
${\mathcal H}$.

Let
${\mathcal V}$ be the graph in \eqref{v.graph},
 where there is an edge between
 two distinct innovative positions if they share a common anchor agent.
One may easily verify that the graph ${\mathcal V}$  is undirected  
and its maximal vertex degree 
 is finite,
\begin{equation}\label{degV.eq}
\deg ({\mathcal V}) \le 
L  \sup_{i\in V} \#\{\lambda\in G, \ (i,\lambda)\in T\} 
\le  L (\deg ({\mathcal G})+1)
\end{equation}
by \eqref{defd0.eqn}, \eqref{basicassumption0}, \eqref{basicassumption1} and \eqref{basicassumption2}.


We cannot define a  geodesic distance 
on ${\mathcal V}$
as in Subsection \ref{distance.subsection}, since the graph ${\mathcal V}$ is unconnected in general.
With the help of the graph ${\mathcal H}$ to describe our DSRS, we define a distance $\rho$ on the graph ${\mathcal V}$.
%

\begin{pr}\label{Vdistance.pr}
Let ${\mathcal H}$ be the graph in \eqref{h.graph}. Define a function $\rho: V\times V \longmapsto \RR$ by
\begin{equation}\label{rho.def}
\rho(i,i')=\left\{\begin{array}{ll}
0 & {\rm if }\  i = i'\\
\rho_{\mathcal H}(i,i')-1 &  {\rm if} \ i\neq i'.\end{array}\right.
\end{equation}
If the graph $\mathcal H$ satisfies \eqref{basicassumption1}, then
$\rho$ is  a distance
on the graph ${\mathcal V}$:
\begin{itemize}
\item [{(i)}] 
 $\rho (i,i') \ge 0$ for all $i,i'\in V$;

 \item[{(ii)}] 
$\rho(i,i')=\rho(i',i)$ for all $i,i'\in V$;

\item [{(iii)}] 
 $\rho(i,i')=0$ if and only if $i=i'$; and

\item[{(iv)}] 
 $\rho(i,i')\le \rho(i,i^{\prime\prime})+\rho(i^{\prime\prime},i')$ for all $i,i',i^{\prime\prime} \in V$.

\end{itemize}
\end{pr}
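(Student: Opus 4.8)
The plan is to verify the four properties in turn, dispatching (i), (ii) and (iii) quickly and concentrating the effort on the triangle inequality (iv). Since the graph ${\mathcal H}$ in \eqref{h.graph} is connected, $\rho_{\mathcal H}$ is a genuine (finite) metric on $G\cup V$, so property (ii) for $\rho$ is immediate from the symmetry of $\rho_{\mathcal H}$. For (i) and the nontrivial implication in (iii), I would use \eqref{rhorhoH.eq}: for distinct $i,i'\in V$ it gives $\rho_{\mathcal H}(i,i')\ge 2$, hence $\rho(i,i')=\rho_{\mathcal H}(i,i')-1\ge 1>0$; combined with $\rho(i,i)=0$ this yields both $\rho\ge 0$ and the equivalence $\rho(i,i')=0\iff i=i'$.

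For the triangle inequality I would first reduce to the generic case. If any two of $i,i',i''$ coincide, the inequality $\rho(i,i')\le \rho(i,i'')+\rho(i'',i')$ is trivial or an equality (for instance if $i=i''$ then the left side equals $\rho(i'',i')$ and the term $\rho(i,i'')$ on the right vanishes), so assume $i,i',i''$ pairwise distinct. By the definition \eqref{rho.def} the claim is then equivalent to
\begin{equation*}
\rho_{\mathcal H}(i,i')\le \rho_{\mathcal H}(i,i'')+\rho_{\mathcal H}(i'',i')-1,
\end{equation*}
i.e.\ I must beat the ordinary triangle inequality for $\rho_{\mathcal H}$ by exactly one edge; this is the only real obstacle in the proof.

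The saving comes from \eqref{rhorhoH.eq} together with Assumption 2. Because $\rho_{\mathcal G}$ is integer-valued and every innovative position carries at least one anchor agent (so that ${\mathcal H}$ is connected and the infima make sense), the infima in \eqref{rhorhoH.eq} are attained: pick anchor agents $\lambda_1$ of $i$ and $\mu_1$ of $i''$ with $\rho_{\mathcal H}(i,i'')=2+\rho_{\mathcal G}(\lambda_1,\mu_1)$, and anchor agents $\mu_2$ of $i''$ and $\lambda_2$ of $i'$ with $\rho_{\mathcal H}(i'',i')=2+\rho_{\mathcal G}(\mu_2,\lambda_2)$. Now $\mu_1$ and $\mu_2$ are both anchor agents of $i''$, so \eqref{basicassumption1} forces $\mu_1=\mu_2$ or $(\mu_1,\mu_2)\in S$; in either case $\rho_{\mathcal G}(\mu_1,\mu_2)\le 1$. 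Applying the triangle inequality for $\rho_{\mathcal G}$,
\begin{equation*}
\rho_{\mathcal G}(\lambda_1,\lambda_2)\le \rho_{\mathcal G}(\lambda_1,\mu_1)+\rho_{\mathcal G}(\mu_1,\mu_2)+\rho_{\mathcal G}(\mu_2,\lambda_2)\le \rho_{\mathcal G}(\lambda_1,\mu_1)+\rho_{\mathcal G}(\mu_2,\lambda_2)+1.
\end{equation*}
Since $\lambda_1$ anchors $i$ and $\lambda_2$ anchors $i'$, \eqref{rhorhoH.eq} gives $\rho_{\mathcal H}(i,i')\le 2+\rho_{\mathcal G}(\lambda_1,\lambda_2)$, and substituting the previous line together with $\rho_{\mathcal G}(\lambda_1,\mu_1)=\rho_{\mathcal H}(i,i'')-2$ and $\rho_{\mathcal G}(\mu_2,\lambda_2)=\rho_{\mathcal H}(i'',i')-2$ yields exactly $\rho_{\mathcal H}(i,i')\le \rho_{\mathcal H}(i,i'')+\rho_{\mathcal H}(i'',i')-1$, which is what was needed.

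Thus the crux — and the only place the hypothesis \eqref{basicassumption1} enters — is the observation that two anchor agents of the common midpoint $i''$ are joined by at most one edge in ${\mathcal G}$, so a path through $i''$ can be rerouted through that single edge, discarding the two edges $\mu_1\!-\!i''$ and $i''\!-\!\mu_2$ at the cost of one. Everything else is the bookkeeping of the degenerate cases and the verification that the relevant infima are over nonempty sets of nonnegative integers, hence attained.
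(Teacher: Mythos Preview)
Your proof is correct and follows essentially the same route as the paper: the crux in both is that two anchor agents of the intermediate point $i''$ are either equal or adjacent in ${\mathcal G}$ by \eqref{basicassumption1}, which saves exactly one edge in the concatenated path and yields the strengthened triangle inequality $\rho_{\mathcal H}(i,i')\le \rho_{\mathcal H}(i,i'')+\rho_{\mathcal H}(i'',i')-1$. The only cosmetic difference is that you invoke the formula \eqref{rhorhoH.eq} to name the anchors $\mu_1,\mu_2$, whereas the paper argues directly with shortest paths $iv_1\cdots v_m i''$ and $i''u_1\cdots u_n i'$ and splices them at $v_m,u_1$; the content is identical.
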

Clearly, the above distance between two endpoints of an edge in $\mathcal V$ is one.
Denote  the closed ball with center $i\in V$ and radius $r$ by
$$B(i,r)=\{i'\in V,\ \rho(i,i')\le r\},$$
and  the counting measure on $V$ by $\mu$.
We say that $\mu$ is a {\em doubling measure} if 
\begin{equation}\label{mudoubling.def}
\mu(B(i, 2r))\le D_0 \mu(B(i,r)) \ {\rm for \ all}\ i\in V \ {\rm and} \ r\ge 0,
\end{equation}
and it has {\em polynomial growth}
if
\begin{equation}\label{mupolynomial.def}
\mu(B(i,r))\le D_1 (1+r)^{d}\ {\rm for \ all}\ i\in V \ {\rm and} \ r\ge 0,
\end{equation}
where $D_0$, $D_1$ and $d$ are positive constants.
The minimal constant $D_0$ for \eqref{mudoubling.def} to hold is known as the doubling constant,
and the minimal constants $d$ and $D_1$ in \eqref{mupolynomial.def}
are called {\em dimension} and {\em maximal rate of innovation} for signals on the graph ${\mathcal V}$ respectively. The concept of rate of innovation  was introduced
in \cite{vmb02} and later extended in \cite{sunaicm08, sunxian14}.
The reader may refer to \cite{bns09, bdvmc08, dvb07, mv05, pbd14,  sd07, sunsiam06, sunaicm08,    sunxian14,   vmb02} and references therein  for sampling and reconstruction of  signals with  finite rate of innovation.
\smallskip

In the next two propositions, we show that  the counting measure $\mu$ on ${\mathcal V}$
has the doubling property (respectively, the  polynomial growth property) if and only if
  the counting measure $\mu_{\mathcal G}$  on ${\mathcal G}$ does.

\begin{pr}\label{Vdoublingmeasure.pr}
Let  ${\mathcal G}$ and ${\mathcal H}$
satisfy Assumptions 1 -- 4.
If $\mu_{\mathcal G}$ is a doubling measure with constant $D_0({\mathcal G})$, then
\begin{equation}\label{Vdoublingmeasure.pr.eq1}
\mu(B(i, 2r))\le L (D_0({\mathcal G}))^2
\Big(\frac{(\deg ({\mathcal G}))^{2M+3}-1}{\deg({\mathcal G})-1}\Big)
 \mu(B(i,r)) \  \ {\rm  for\  all} \  i\in V\  {\rm  and} \ r\ge 0.
\end{equation}
 Conversely, if
$\mu$ is a doubling measure with constant $D_0$,
then
\begin{equation}\label{Vdoublingmeasure.pr.eq2}
\mu_{\mathcal G}(B_{\mathcal G}(\lambda, 2r))\le LD_0^2
\Big(\frac{(\deg({\mathcal G}))^{2M+3}-1}{\deg({\mathcal G})-1}\Big)^2 \mu_{\mathcal G}(B_{\mathcal G}(\lambda,r))
\  \ {\rm for\ all} \  \lambda\in G \ {\rm  and} \ r\ge 0.\end{equation}
\end{pr}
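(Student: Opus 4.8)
The plan is to transfer the doubling property back and forth by comparing balls of $({\mathcal V},\rho)$ with balls of $({\mathcal G},\rho_{\mathcal G})$, the dictionary between the two being supplied by the anchor agents. Fix $i\in V$ and an anchor agent $\lambda_i$ of $i$, and for $F\subseteq G$ write $N(F):=\{i'\in V:\ (i',\lambda)\in T\ \text{for some}\ \lambda\in F\}$ for the set of innovative positions anchored inside $F$. Starting from the formula \eqref{rhorhoH.eq} for $\rho_{\mathcal H}$ on pairs of distinct innovative positions, together with \eqref{rho.def} and Assumption 2 (which forces any two anchors of the same position to be adjacent in ${\mathcal G}$), I would first prove the two-sided inclusion
\[
N\big(B_{\mathcal G}(\lambda_i,r-1)\big)\ \subseteq\ B(i,r)\ \subseteq\ N\big(B_{\mathcal G}(\lambda_i,r)\big),\qquad r\ge 0 .
\]
Indeed, if $i'\in B(i,r)$ with $i'\ne i$ then, by \eqref{rhorhoH.eq} and \eqref{rho.def}, some anchor $\mu$ of $i$ and some anchor $\mu'$ of $i'$ satisfy $\rho_{\mathcal G}(\mu,\mu')\le r-1$, and $\rho_{\mathcal G}(\lambda_i,\mu)\le 1$ by Assumption 2, so $\mu'\in B_{\mathcal G}(\lambda_i,r)$; the reverse inclusion is similar.

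Next I would record two counting estimates. By Assumption 3, $\#N(F)\le L\,\#F$ for every $F\subseteq G$. Conversely, using Assumption 4 (an anchor agent lies within $\rho_{\mathcal G}$-distance $M$ of every agent), the bound $\#\{\lambda\in G:(i',\lambda)\in T\}\le\deg({\mathcal G})+1$ contained in \eqref{degV.eq}, and the elementary volume estimate \eqref{maximalvertex.def} for the radius-$M$ ball: each agent in $B_{\mathcal G}(\lambda_i,\max(r-M,0))$ has an anchor agent in $B_{\mathcal G}(\lambda_i,r)$, which anchors some position in $N(B_{\mathcal G}(\lambda_i,r))$, and this correspondence is boundedly many-to-one, so $\#N\big(B_{\mathcal G}(\lambda_i,r)\big)\ge c\,\mu_{\mathcal G}\big(B_{\mathcal G}(\lambda_i,\max(r-M,0))\big)$ with $c$ depending only on $\deg({\mathcal G})$ and $M$. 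Combining this with the displayed inclusion gives the basic comparison
\[
c\,\mu_{\mathcal G}\big(B_{\mathcal G}(\lambda_i,\max(r-M-1,0))\big)\ \le\ \mu\big(B(i,r)\big)\ \le\ L\,\mu_{\mathcal G}\big(B_{\mathcal G}(\lambda_i,r)\big),\qquad r\ge 0 .
\]
To prove \eqref{Vdoublingmeasure.pr.eq1} I would bound $\mu(B(i,2r))\le L\,\mu_{\mathcal G}(B_{\mathcal G}(\lambda_i,2r))$ and, when $r$ is not too small, use the doubling hypothesis on $\mu_{\mathcal G}$ to descend from radius $2r$ to radius $\max(r-M-1,0)$ at a cost of $(D_0({\mathcal G}))^{2}$, then apply the lower comparison; in the remaining bounded range of $r$ one writes $\mu(B(i,2r))\le L\,\mu_{\mathcal G}(B_{\mathcal G}(\lambda_i,4M+4))\le L\,D_0({\mathcal G})\,\mu_{\mathcal G}(B_{\mathcal G}(\lambda_i,2M+2))$, estimates the last factor by \eqref{maximalvertex.def} with radius $2M+2$, and uses $\mu(B(i,r))\ge1$. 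The converse inequality \eqref{Vdoublingmeasure.pr.eq2} is obtained symmetrically: fix $\lambda\in G$, use Assumption 4 to pick an anchor agent $\lambda''$ with $\rho_{\mathcal G}(\lambda,\lambda'')\le M$ and a position $i_0$ it anchors, observe that $B_{\mathcal G}(\lambda'',r-M)\subseteq B_{\mathcal G}(\lambda,r)\subseteq B_{\mathcal G}(\lambda'',r+M)$, and feed the same two inclusions through to compare $\mu_{\mathcal G}(B_{\mathcal G}(\lambda,r))$ from above with a multiple of $\mu(B(i_0,r+2M+1))$ and from below with a multiple of $\mu(B(i_0,\max(r-M,0)))$; then the doubling hypothesis on $\mu$, together with \eqref{maximalvertex.def} for the bounded range of $r$, closes the argument. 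The square on $\big((\deg({\mathcal G}))^{2M+3}-1\big)/(\deg({\mathcal G})-1)$ in \eqref{Vdoublingmeasure.pr.eq2} is the price of incurring the $M$-shift on both ends of the comparison; concretely, the inequality $\big((\deg({\mathcal G}))^{2M+3}-1\big)^2\ge (\deg({\mathcal G})-1)\big((\deg({\mathcal G}))^{4M+5}-1\big)$ is what makes the bounded-range estimate close.

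The triangle-inequality verification of the inclusion display and the subsequent constant bookkeeping are routine. The delicate point is the small-radius regime, where a ball of radius $\sim M$ in one graph corresponds to a ball of trivial mass in the other and the doubling hypothesis on the other measure supplies no leverage: there one must fall back on the universal exponential volume bound \eqref{maximalvertex.def}, and the additive shifts ($M$, $M+1$, $2M+1$, $2M+2$) have to be lined up so that the resulting constants are absorbed into those displayed in \eqref{Vdoublingmeasure.pr.eq1} and \eqref{Vdoublingmeasure.pr.eq2} — which is precisely why the exponent $2M+3$ and the stated powers of $D_0({\mathcal G})$, $D_0$ and $L$ appear.
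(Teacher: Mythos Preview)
Your proposal is correct and follows essentially the same architecture as the paper's proof: establish two-sided comparisons between $\mu$-balls and $\mu_{\mathcal G}$-balls via anchor agents (the paper's Lemmas~\ref{mumuG.lem1} and~\ref{mumuG.lem2}), then split into large- and small-radius regimes, invoking doubling in the former and the crude volume bound~\eqref{maximalvertex.def} in the latter. The only substantive variation is in the lower comparison, where you count pre-images of the many-to-one map from agents to nearby anchored positions, whereas the paper packs $B_{\mathcal G}(\lambda,r)$ with a maximal $(M{+}1)$-disjoint family and assigns one innovative position to each packing ball; both routes close under the stated constant $\big((\deg({\mathcal G}))^{2M+3}-1\big)/(\deg({\mathcal G})-1)$.
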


\begin{pr}
 \label{Vpolynomial.pr}
Let  ${\mathcal G}$ and ${\mathcal H}$
satisfy Assumptions 1 -- 4. 
 If $\mu_{\mathcal G}$ has polynomial growth  with Beurling dimension $d({\mathcal G})$ and
  sampling density $D_1({\mathcal G})$, then
 \begin{equation}\label{Vpolynomial.pr.eq1}
 \mu(B(i,r))\le  L D_1({\mathcal G}) (1+r)^{d({\mathcal G})}\ \ {\rm for\ all} \ i\in V\ {\rm  and} \  r\ge 0.
 \end{equation}
 Conversely, if $\mu$ has polynomial growth with dimension $d$ and
 maximal rate of innovation $D_1$, then
 \begin{equation}\label{Vpolynomial.pr.eq2}
 \mu_{\mathcal G}(B_{\mathcal G}(\lambda, r))\le  2^d  \Big(\frac{(\deg ({\mathcal G}))^{2M+3}-1}{\deg({\mathcal G})-1}\Big) D_1 (1+r)^d  \ \ {\rm  for\ all} \ \lambda\in G\ {\rm  and} \ r\ge 0.
 \end{equation}
 \end{pr}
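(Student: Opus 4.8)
The plan is to prove both estimates by transporting balls between ${\mathcal V}$ and ${\mathcal G}$ along the anchoring relation $T$. Combining \eqref{rho.def}, \eqref{hgball.eq} and \eqref{rhorhoH.eq}, one has for distinct $i,i'\in V$
\begin{equation*}
\rho(i,i')=1+\inf\{\rho_{\mathcal G}(\lambda,\lambda'):\ (i,\lambda),(i',\lambda')\in T\},
\end{equation*}
so the metrics $\rho$ and $\rho_{\mathcal G}$ differ only through the ``passage to anchors''. The three structural assumptions control that passage: Assumption 3 makes the correspondence between an innovative position and its anchor agents at most $L$-to-one; Assumption 2 confines all anchor agents of a fixed innovative position to a single ball $B_{\mathcal G}(\cdot,1)$; and Assumption 4 lets one replace an arbitrary agent by an anchor agent at a cost of at most $M$ in $\rho_{\mathcal G}$. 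I will also use the exponential growth bound \eqref{maximalvertex.def} together with $\deg({\mathcal G})\le D_0({\mathcal G})<\infty$ from Assumption 1.

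For \eqref{Vpolynomial.pr.eq1}, fix $i\in V$ and $r\ge 0$, and fix an anchor agent $\lambda_0$ of $i$ (one exists since ${\mathcal H}$ is connected). If $i'\in B(i,r)$ with $i'\ne i$, the displayed identity gives an anchor agent $\nu$ of $i$ and an anchor agent $\nu'$ of $i'$ with $\rho_{\mathcal G}(\nu,\nu')\le r-1$, while Assumption 2 forces $\rho_{\mathcal G}(\lambda_0,\nu)\le 1$; hence $\nu'\in B_{\mathcal G}(\lambda_0,r)$ by the triangle inequality. Sending $i$ to $\lambda_0$ and each such $i'$ to one such $\nu'$ defines a map $B(i,r)\to B_{\mathcal G}(\lambda_0,r)$ that is at most $L$-to-one by Assumption 3, since any agent anchors at most $L$ innovative positions. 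Consequently $\mu(B(i,r))\le L\,\mu_{\mathcal G}(B_{\mathcal G}(\lambda_0,r))\le L D_1({\mathcal G})(1+r)^{d({\mathcal G})}$.

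For the converse \eqref{Vpolynomial.pr.eq2}, fix $\lambda\in G$ and $r\ge 0$, and by Assumption 4 pick an anchor agent $\lambda_*$ with $\rho_{\mathcal G}(\lambda,\lambda_*)\le M$, say $(i_0,\lambda_*)\in T$. Given $\lambda'\in B_{\mathcal G}(\lambda,r)$, Assumption 4 again yields an anchor agent $\lambda''$ of some $i'$ with $\rho_{\mathcal G}(\lambda',\lambda'')\le M$; then $\rho_{\mathcal G}(\lambda_*,\lambda'')\le 2M+r$ by the triangle inequality, so $\rho(i_0,i')\le 1+\rho_{\mathcal G}(\lambda_*,\lambda'')\le 2M+1+r$, i.e. $i'\in B(i_0,2M+1+r)$. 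The assignment $\lambda'\mapsto i'$ on $B_{\mathcal G}(\lambda,r)$ has each fibre contained in $B_{\mathcal G}(\mu',M+1)$ for any anchor agent $\mu'$ of the corresponding $i'$: if $\lambda'$ is sent to $i'$ then $\lambda'$ lies within distance $M$ of some anchor agent of $i'$, and by Assumption 2 that anchor agent lies in $B_{\mathcal G}(\mu',1)$. Bounding $\#B_{\mathcal G}(\mu',M+1)$ by \eqref{maximalvertex.def} yields
\begin{equation*}
\mu_{\mathcal G}(B_{\mathcal G}(\lambda,r))\le\frac{(\deg({\mathcal G}))^{M+2}-1}{\deg({\mathcal G})-1}\,D_1\,(2M+2+r)^d.
\end{equation*}
If $r\ge 2M$, then $2M+2+r\le 2(1+r)$ and $M+2\le 2M+3$, which gives \eqref{Vpolynomial.pr.eq2}; if $r<2M$, then $B_{\mathcal G}(\lambda,r)\subset B_{\mathcal G}(\lambda,2M)$, so $\mu_{\mathcal G}(B_{\mathcal G}(\lambda,r))\le((\deg({\mathcal G}))^{2M+1}-1)/(\deg({\mathcal G})-1)$ by \eqref{maximalvertex.def}, and this is at most $2^d((\deg({\mathcal G}))^{2M+3}-1)/(\deg({\mathcal G})-1)D_1(1+r)^d$ because $2^dD_1\ge 1$ and $(1+r)^d\ge 1$.

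The forward estimate is routine once the anchor dictionary is in place; the real work sits in the converse, where the cost $M$ of replacing an agent by an anchor agent must be absorbed into the advertised constants. I expect the main obstacle to be precisely this bookkeeping: a naive reduction inflates the bound by a factor $(2M+2)^d$, and the case split $r<2M$ versus $r\ge 2M$, combined with $D_1\ge 1$, is what lets one pay only the exponential-growth price $(\deg({\mathcal G}))^{O(M)}$ already present in \eqref{maximalvertex.def}.
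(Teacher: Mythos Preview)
Your proof is correct. The forward estimate \eqref{Vpolynomial.pr.eq1} matches the paper exactly: both fix an anchor $\lambda_0$ of $i$, show that every $i'\in B(i,r)$ has an anchor in $B_{\mathcal G}(\lambda_0,r)$, and invoke the $L$-to-one bound from Assumption~3 (this is the content of the paper's Lemma~\ref{mumuG.lem1}).

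For the converse \eqref{Vpolynomial.pr.eq2} you take a slightly different and somewhat more elementary route. The paper first proves an auxiliary inequality (Lemma~\ref{mumuG.lem2}) via a \emph{packing} argument: it chooses a maximal $(M+1)$-disjoint subset $\Lambda\subset B_{\mathcal G}(\lambda,r)$, covers the ball by $(2M+2)$-balls centered at $\Lambda$, and then shows that the sets $V_{\lambda_m}$ of innovative positions anchored near each $\lambda_m\in\Lambda$ are nonempty and pairwise disjoint, yielding $\#\Lambda\le \mu(B(i,r+M+1))$. You instead define a direct map $B_{\mathcal G}(\lambda,r)\to B(i_0,2M+1+r)$ and bound each fibre by $\mu_{\mathcal G}(B_{\mathcal G}(\mu',M+1))$ using Assumption~2. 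This avoids the packing construction entirely and in fact produces the sharper intermediate constant $(\deg({\mathcal G}))^{M+2}$ rather than $(\deg({\mathcal G}))^{2M+3}$, which you then relax to match the stated bound. The subsequent case split (large $r$ versus small $r$, with the small-$r$ case handled by \eqref{maximalvertex.def} and $D_1\ge 1$) is the same in spirit as the paper's, though your threshold is $r=2M$ rather than $r=M$. Both approaches rely on the same structural inputs (Assumptions~2--4 and \eqref{maximalvertex.def}); yours trades the packing/covering machinery for a single fibre-counting step.
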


By \eqref{dimension.remark}, Propositions \ref{Vdoublingmeasure.pr} and \ref{Vpolynomial.pr},
 we  conclude that signals in \eqref{signalongraph.def} have their dimension $d$ being the same as the Beurling dimension $d({\mathcal G})$, and their maximal rate $D_1$ of innovation being approximately proportional to
 the sampling density  $D_1({\mathcal G})$. 

\begin{Tm}\label{Vdim.cr}
 Let  ${\mathcal G}$ and ${\mathcal H}$ satisfy Assumptions 1 -- 4.
Then
\begin{equation} \label{Vdim.cr.eq1}
d({\mathcal G})=d\ge 1
\end{equation}
and
\begin{equation}\label{Vdim.cr.eq2}
L^{-1} D_1\le D_1({\mathcal G})\le 2^d  \Big(\frac{(\deg ({\mathcal G}))^{2M+3}-1}{\deg({\mathcal G})-1}\Big)D_1.
\end{equation}
\end{Tm}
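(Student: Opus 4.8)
The plan is to obtain Theorem \ref{Vdim.cr} as essentially a bookkeeping consequence of Propositions \ref{Vdoublingmeasure.pr} and \ref{Vpolynomial.pr}, together with the elementary lower bounds \eqref{dimension.remark} for $d(\mathcal G)$ and the trivial fact that $d \ge 1$ (which follows from \eqref{Vpolynomial.pr.eq2} applied with a chain of vertices realizing the diameter of $\mathcal G$, exactly as in the derivation of \eqref{dimension.remark}). So the real content was already proved earlier; here I only need to translate the two polynomial-growth estimates into statements about the \emph{minimal} constants $d, d(\mathcal G)$ and $D_1, D_1(\mathcal G)$.

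First I would argue the equality \eqref{Vdim.cr.eq1}. Recall $d(\mathcal G)$ is by definition the infimum of exponents $\delta$ for which $\mu_{\mathcal G}(B_{\mathcal G}(\lambda,r)) \le D_1'(1+r)^{\delta}$ holds uniformly for some $D_1'$, and $d$ is the analogous infimum for $\mu$ on $\mathcal V$. Proposition \ref{Vpolynomial.pr} says: if $\mu_{\mathcal G}$ satisfies such a bound with exponent $d(\mathcal G)$, then $\mu$ satisfies one with the \emph{same} exponent (and constant $L D_1(\mathcal G)$); hence $d \le d(\mathcal G)$. Conversely, \eqref{Vpolynomial.pr.eq2} shows that a polynomial bound for $\mu$ with exponent $d$ forces one for $\mu_{\mathcal G}$ with the same exponent $d$ (and an inflated constant), so $d(\mathcal G) \le d$. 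Combining gives $d(\mathcal G) = d$, and $d \ge 1$ comes from \eqref{dimension.remark} since $d(\mathcal G) \ge 1$. One subtlety I should be careful about: the propositions are stated as implications about particular constants, not directly about the infima, so I must note that if $\mu_{\mathcal G}$ has polynomial growth at all then Assumption 1 guarantees it is doubling and the minimal exponent is attained in the sense needed — strictly speaking I take any $\delta > d(\mathcal G)$, apply the proposition, let $\delta \downarrow d(\mathcal G)$, and use that the set of admissible exponents is closed upward; the same for the reverse direction.

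Next, the sandwich \eqref{Vdim.cr.eq2}. Having fixed the common exponent $d = d(\mathcal G)$, inequality \eqref{Vpolynomial.pr.eq1} reads $\mu(B(i,r)) \le L D_1(\mathcal G)(1+r)^d$, so $L D_1(\mathcal G)$ is an admissible constant for $\mu$ with exponent $d$; since $D_1$ is the minimal such constant, $D_1 \le L D_1(\mathcal G)$, i.e. $L^{-1} D_1 \le D_1(\mathcal G)$. In the other direction \eqref{Vpolynomial.pr.eq2} reads $\mu_{\mathcal G}(B_{\mathcal G}(\lambda,r)) \le 2^d \big(\frac{(\deg(\mathcal G))^{2M+3}-1}{\deg(\mathcal G)-1}\big) D_1 (1+r)^d$, exhibiting $2^d \big(\frac{(\deg(\mathcal G))^{2M+3}-1}{\deg(\mathcal G)-1}\big) D_1$ as an admissible constant for $\mu_{\mathcal G}$, whence by minimality $D_1(\mathcal G) \le 2^d \big(\frac{(\deg(\mathcal G))^{2M+3}-1}{\deg(\mathcal G)-1}\big) D_1$. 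That is exactly \eqref{Vdim.cr.eq2}.

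The only place that requires genuine care — the "main obstacle," such as it is — is the passage between "minimal constant" and "inequality valid with some constant," since Propositions \ref{Vdoublingmeasure.pr} and \ref{Vpolynomial.pr} are phrased with the minimal constants $D_1(\mathcal G), d(\mathcal G)$ (resp. $D_1, d$) already plugged in. The clean way around this is to observe that the minimal constants are themselves admissible (the polynomial-growth bound holds \emph{with} the minimal exponent and \emph{with} the minimal constant, by definition of these as the minimal values for which the bound holds — and that minimum is achieved because, for fixed exponent, $D_1(\mathcal G) = \sup_{\lambda, r} \mu_{\mathcal G}(B_{\mathcal G}(\lambda,r))(1+r)^{-d(\mathcal G)}$ is finite under Assumption 1 by Proposition \ref{countmeasure.pr}). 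Once that is granted, everything above is just "an admissible constant is $\ge$ the minimal one," and the theorem follows. I would also remark at the end, matching the paragraph preceding the theorem, that the upshot is the stated interpretation: the dimension of the signal model on $\mathcal V$ coincides with the Beurling dimension of the SDS, and the maximal rate of innovation is comparable (up to the factor $L$ and a degree/$M$-dependent constant) to the sampling density.
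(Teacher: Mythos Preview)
Your proposal is correct and matches the paper's own treatment: the paper does not give a separate proof of Theorem \ref{Vdim.cr} but simply states it as a consequence of \eqref{dimension.remark} together with Propositions \ref{Vdoublingmeasure.pr} and \ref{Vpolynomial.pr}, which is exactly the bookkeeping you carry out. Your extra care about whether the minimal constants are actually attained is a reasonable point the paper leaves implicit, but it does not change the argument.
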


We finish this section with a remark about signals on our graph $\mathcal V$, cf.   \cite{pesenson08,  sandryhaila2013, shuman2013}.

\begin{re}\label{signal.remark} {\rm
Signals on the graph ${\mathcal V}$ are analog in nature, while
signals on graphs in most of the literature are discrete (\cite{pesenson08,  sandryhaila2013, shuman2013}).
Let  ${\bf p}_\lambda$  and ${\bf p}_i$ be the physical positions of the agent $\lambda\in G$
and  innovative position $i\in V$, respectively.
If there exist positive constants $A$ and $B$ such that
\begin{equation*}
A \sum_{i\in V} |c(i)|^2\le \sum_{i\in V}| f({\bf p}_i)|^2+ \sum_{\lambda\in G} |f({\bf p}_\lambda)|^2\le B
\sum_{i\in V} |c(i)|^2
\end{equation*}
for all signals $f$ with the parametric representation  \eqref{signalongraph.def}, then we can  establish a one-to-one correspondence between the analog signal $f$   and the discrete signal $F$ on the graph ${\mathcal H}$,
where
$$F(u)= f({\bf p}_u),\ u\in G\cup V.$$
The above family of discrete signals $F$ forms a  linear space, which could
be a  Paley-Wiener space associated with some positive-semidefinite operator (such as Laplacian) on the graph ${\mathcal H}$.
Using the above correspondence, our theory for signal sampling and reconstruction applies
by assuming that the impulse response $\psi_\lambda$ of every agent
$\lambda\in G$  is supported on ${\bf p}_u,  u \in G \cup V$.}
\end{re}

\section{Sensing matrices with polynomial off-diagonal decay}
\label{sensing.section}

Let ${\mathcal  H}$ be the
connected simple graph in \eqref{h.graph} to describe our  DSRS,
and  the  sensing matrix ${\bf S}$  associated with the DSRS be as in \eqref{sensingmatrix.def}.
As agents
in the DSRS have limited sensing ability,
 we assume in this paper that the sensing matrix ${\bf S}$ in \eqref{sensingmatrix.def}
satisfies 
\begin{equation}\label{sensingmatrix.assumption}
{\bf S}\in {\mathcal J}_{\alpha}({\mathcal G}, {\mathcal V}) \ {\rm for\ some}\ \alpha > d,
\end{equation}
where
  \begin{equation}\label{jaffard.def1}
  {\mathcal J}_{\alpha}({\mathcal G}, {\mathcal V}):=\big \{ {\bf A}:=(a(\lambda,i))_{\lambda\in G,i\in V}, \ \|{\bf A}\|_{{\mathcal J}_{\alpha}({\mathcal G}, {\mathcal V})}<\infty \big \} \end{equation}
is  the
{\em Jaffard class} ${\mathcal J}_{\alpha}({\mathcal G}, {\mathcal V})$ of  matrices with
polynomial off-diagonal decay, and
 \begin{equation}\label{jaffard.def2}
 \|{\bf A}\|_{{\mathcal J}_{\alpha}({\mathcal G}, {\mathcal V})} :=   \sup_{\lambda\in G,i\in V}(1+\rho_{\mathcal H}(\lambda,i))^{\alpha}|a(\lambda,i)|, \quad \alpha\ge 0.
  \end{equation}
  The reader may refer to \cite{    Grochenig10, GrochenigL06, jaffard,  suncasp05, suntams07, sunca11} for matrices with various off-diagonal decay.

\smallskip

We observe that a matrix in  ${\mathcal J}_\alpha({\mathcal G}, {\mathcal V}), \alpha>d$, defines a bounded operator from $\ell^p(V)$ to $\ell^p(G), 1\le p\le\infty$.

 \begin{pr}\label{jaffard.pr}
  Let  ${\mathcal G}$ and ${\mathcal H}$
satisfy Assumptions 1 -- 4,  
${\mathcal V}$ be as in \eqref{v.graph}, and
 let  $\mu_{\mathcal G}$
 have polynomial growth  
  with  Beurling dimension $d$ and  sampling density $D_1({\mathcal G})$.
  If
 ${\mathbf A}\in {\mathcal J}_\alpha({\mathcal G}, {\mathcal V})$ for some $\alpha>d$, then
\begin{equation} \label{jaffardpr.eq1}
\|{\mathbf A}{\mathbf c}\|_p\le  \frac{D_1({\mathcal G}) L\alpha}{\alpha-d}\|{\mathbf A}\|_{{\mathcal J}_\alpha({\mathcal G}, {\mathcal V})} \|{\bf c}\|_p
\ \ {\rm for\ all} \ {\mathbf c}\in \ell^p, 1\le p\le \infty.
\end{equation}
  \end{pr}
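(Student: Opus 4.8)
The plan is to prove the bound \eqref{jaffardpr.eq1} by the standard Schur test: for a matrix $\mathbf{A}=(a(\lambda,i))$ acting from $\ell^p(V)$ to $\ell^p(G)$, it suffices to bound the two quantities $\sup_{\lambda\in G}\sum_{i\in V}|a(\lambda,i)|$ and $\sup_{i\in V}\sum_{\lambda\in G}|a(\lambda,i)|$, since then $\|\mathbf{A}\|_{\ell^p\to\ell^p}$ is at most the geometric mean of these two (and in particular at most their maximum) for all $1\le p\le\infty$. So the first step is to reduce the claim to estimating row sums and column sums of $\mathbf{A}$, using the off-diagonal decay hypothesis $|a(\lambda,i)|\le \|\mathbf{A}\|_{\mathcal J_\alpha}(1+\rho_{\mathcal H}(\lambda,i))^{-\alpha}$.

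Next I would estimate a single row sum. Fix $\lambda\in G$. Then
$$
\sum_{i\in V}|a(\lambda,i)|\le \|\mathbf{A}\|_{\mathcal J_\alpha}\sum_{i\in V}(1+\rho_{\mathcal H}(\lambda,i))^{-\alpha}
=\|\mathbf{A}\|_{\mathcal J_\alpha}\sum_{r\ge 0}(1+r)^{-\alpha}\,\#\{i\in V:\rho_{\mathcal H}(\lambda,i)=r\}.
$$
Here I would use the relation \eqref{rhorhoH.eq} (together with \eqref{rho.def}) to pass from $\rho_{\mathcal H}$-distances between an agent $\lambda$ and an innovative position $i$ to $\rho$-distances on $\mathcal V$, or more directly just estimate $\#\{i\in V:\rho_{\mathcal H}(\lambda,i)\le r\}$ by the number of anchor agents $\lambda'$ with $\rho_{\mathcal G}(\lambda,\lambda')$ small times the bound $L$ on innovative positions per agent (Assumption 3). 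Concretely, each $i$ counted has an anchor agent $\lambda'$ with $\rho_{\mathcal G}(\lambda,\lambda')\le r-1$ (by \eqref{rhorhoH.eq}), and each such $\lambda'$ accounts for at most $L$ innovative positions, so $\#\{i:\rho_{\mathcal H}(\lambda,i)\le r\}\le L\,\mu_{\mathcal G}(B_{\mathcal G}(\lambda,r))\le L D_1({\mathcal G})(1+r)^{d}$ by the polynomial growth hypothesis \eqref{countmeasure.pr.eq1}. An Abel summation (or the elementary telescoping estimate $\sum_{r\ge 0}\bigl((1+r)^{-\alpha}-(2+r)^{-\alpha}\bigr)N(r)$ with $N(r)=\#\{i:\rho_{\mathcal H}(\lambda,i)\le r\}$, then bounding $(1+r)^{-\alpha}-(2+r)^{-\alpha}\le \alpha(1+r)^{-\alpha-1}$) converts the ball-cardinality bound into
$$
\sum_{i\in V}(1+\rho_{\mathcal H}(\lambda,i))^{-\alpha}\le L D_1({\mathcal G})\sum_{r\ge 0}\alpha(1+r)^{-\alpha-1}(1+r)^{d}\le \frac{D_1({\mathcal G})L\alpha}{\alpha-d},
$$
using $\sum_{r\ge 0}(1+r)^{d-\alpha-1}\le 1+\int_0^\infty (1+t)^{d-\alpha-1}dt=1+\tfrac{1}{\alpha-d}=\tfrac{\alpha-d+1}{\alpha-d}$, and then absorbing constants; one has to be slightly careful with the exact constant so that it comes out as $\frac{D_1({\mathcal G})L\alpha}{\alpha-d}$ rather than something marginally larger, which may require a cleaner Abel-summation bookkeeping than the crude estimate above. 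The column sums $\sum_{\lambda\in G}|a(\lambda,i)|$ are handled the same way but more simply: by \eqref{basicassumption1} and \eqref{hgball.eq}, for fixed $i$ the agents within $\rho_{\mathcal H}$-distance $r$ of $i$ all lie in a $\mathcal G$-ball of radius $r-1$ around any anchor agent of $i$, so $\#\{\lambda\in G:\rho_{\mathcal H}(\lambda,i)\le r\}\le \mu_{\mathcal G}(B_{\mathcal G}(\lambda_0,r))\le D_1({\mathcal G})(1+r)^d$ for an anchor $\lambda_0$ of $i$, and the same summation gives the column-sum bound $\le \frac{D_1({\mathcal G})\alpha}{\alpha-d}\|\mathbf{A}\|_{\mathcal J_\alpha}$, which is even smaller than the row bound by the factor $L\ge 1$. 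Taking the geometric mean of the row and column bounds in the Schur test then yields \eqref{jaffardpr.eq1} for all $1\le p\le\infty$.

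The main obstacle I anticipate is purely bookkeeping: getting the constant to land exactly at $\frac{D_1({\mathcal G})L\alpha}{\alpha-d}$. The conceptual content — Schur test plus polynomial volume growth plus a summable tail because $\alpha>d$ — is routine, and the only subtlety is choosing the Abel-summation decomposition and the comparison of the series $\sum_r(1+r)^{d-\alpha-1}$ with the corresponding integral carefully enough that no stray additive or multiplicative factor survives. If necessary, one can slightly restructure the distance-to-cardinality step (using \eqref{rhorhoH.eq} with the explicit ``$2+\dots$'' shift) to make the indices line up so the claimed constant is exact; I do not expect any genuine difficulty beyond that.
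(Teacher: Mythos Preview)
Your proposal is correct and follows essentially the same route as the paper: the Schur/H\"older test reduces the bound to controlling the row sum $\sup_{\lambda}\sum_i(1+\rho_{\mathcal H}(\lambda,i))^{-\alpha}$ and the column sum $\sup_i\sum_\lambda(1+\rho_{\mathcal H}(\lambda,i))^{-\alpha}$, each of which is pushed to a pure sum over $G$ via anchor agents (picking up the factor $L$ on the row side only) and then estimated by Abel summation against the polynomial-growth bound on $\mu_{\mathcal G}$; the paper packages the $G$-sum estimate as a separate lemma (Lemma~\ref{jaffardpr.lem}). Your worry about the exact constant is well-founded: the paper obtains precisely $\frac{D_1({\mathcal G})\alpha}{\alpha-d}$ not by bounding $(n+1)^{-\alpha}-(n+2)^{-\alpha}\le \alpha(n+1)^{-\alpha-1}$ as you suggest, but by a second Abel shift that produces $(n+1)^d-n^d\le d(n+1)^{d-1}$ and then compares with the integral $d\int_1^\infty t^{d-\alpha-1}\,dt=\frac{d}{\alpha-d}$, which together with the boundary term $1$ gives exactly $\frac{\alpha}{\alpha-d}$.
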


 For a DSRS with its sensing matrix in ${\mathcal J}_\alpha({\mathcal G}, {\mathcal V})$, we obtain from
 \eqref{signalsamples} and
   Proposition \ref{jaffard.pr}
    that a signal with bounded amplitude vector generates a bounded sampling data vector.

\smallskip

Define  band matrix approximations of a matrix ${\mathbf A}= (a(\lambda,i))_{\lambda\in G,i\in V}$ by 
\begin{equation}\label{bandapproximation.def}
{\mathbf A}_s:=(a_s(\lambda,i))_{\lambda\in G,i\in V},\  s\ge 0, 
\end{equation}
where
$$a_s(\lambda,i)=\left\{\begin{array}{ll} a(\lambda,i) & {\rm if}\ \ \rho_{\mathcal H}(\lambda,i)\le s\\
 0 &  {\rm if}\ \ \rho_{\mathcal H}(\lambda,i)>s.\end{array}
 \right.$$
We say a matrix $\mathbf A$ has {\em bandwidth} $s$ if $\mathbf A = \mathbf A_s$. Clearly, any matrix $\mathbf A$ with bounded entries and bandwidth $s$ belongs to Jaffard class ${\mathcal J}_{\alpha}({\mathcal G}, {\mathcal V})$,
  $$\|{\bf A}\|_{{\mathcal J}_{\alpha}({\mathcal G}, {\mathcal V})}
\le (s+1)^{\alpha} \|{\bf A}\|_{{\mathcal J}_{0}({\mathcal G}, {\mathcal V})} \ {\rm \ for \ all \ }  \alpha \ge 0.$$
 In our DSRS, the sensing matrix $\mathbf S$ has bandwidth $s$ means that any agent can only detect signals at innovative positions within their geodesic distance less than or equal to $s$.
 In the next proposition, we show that matrices in the Jaffard class
 can be well approximated by band matrices.

\begin{pr}\label{bandapproximation.pr}
 Let  graphs ${\mathcal G}$, ${\mathcal H}$,  ${\mathcal V}$, $d$ and $D_1({\mathcal G})$
 be as in Proposition \ref{jaffard.pr}.
 If
 ${\mathbf A}\in {\mathcal J}_\alpha({\mathcal G}, {\mathcal V})$ for some $\alpha>d$,
 then
 \begin{equation}\label{bandapproximation.pr.eq1}
 \|({\mathbf A}-{\mathbf A}_s) {\mathbf c}\|_p\le \frac{ D_1({\mathcal G}) L\alpha}{\alpha-d} (s+1)^{-\alpha+d} \|{\mathbf A}\|_{{\mathcal J}_\alpha({\mathcal G}, {\mathcal V})}\|{\mathbf c}\|_p
 \ \ {\rm  for\ all} \ {\bf c}\in \ell^p, 1\le p\le \infty,
\end{equation}
where
 ${\mathbf A}_s, s\ge 1$, are band matrices in \eqref{bandapproximation.def}.
\end{pr}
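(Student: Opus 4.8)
The plan is to reduce the bound on $\|(\mathbf A-\mathbf A_s)\mathbf c\|_p$ to the same kind of estimate already established in Proposition~\ref{jaffard.pr}, observing that $\mathbf A-\mathbf A_s$ is itself a matrix in a Jaffard class, but with a much smaller norm because all its surviving entries sit at geodesic distance $>s$. Concretely, write $\mathbf A-\mathbf A_s=(b(\lambda,i))_{\lambda\in G,i\in V}$ where $b(\lambda,i)=a(\lambda,i)$ when $\rho_{\mathcal H}(\lambda,i)>s$ and $b(\lambda,i)=0$ otherwise. For the nonzero entries we have $1+\rho_{\mathcal H}(\lambda,i)>s+1$, so
\[
|b(\lambda,i)|\le (1+\rho_{\mathcal H}(\lambda,i))^{-\alpha}\|\mathbf A\|_{{\mathcal J}_\alpha({\mathcal G},{\mathcal V})}
= (1+\rho_{\mathcal H}(\lambda,i))^{-d}\,(1+\rho_{\mathcal H}(\lambda,i))^{-(\alpha-d)}\|\mathbf A\|_{{\mathcal J}_\alpha({\mathcal G},{\mathcal V})}
\le (s+1)^{-(\alpha-d)}(1+\rho_{\mathcal H}(\lambda,i))^{-d}\|\mathbf A\|_{{\mathcal J}_\alpha({\mathcal G},{\mathcal V})},
\]
using $\alpha>d$ so that the factor $(1+\rho_{\mathcal H}(\lambda,i))^{-(\alpha-d)}$ is maximized over the support at the boundary value $s+1$. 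This shows $\mathbf A-\mathbf A_s\in{\mathcal J}_d({\mathcal G},{\mathcal V})$ with
\[
\|\mathbf A-\mathbf A_s\|_{{\mathcal J}_d({\mathcal G},{\mathcal V})}\le (s+1)^{-(\alpha-d)}\|\mathbf A\|_{{\mathcal J}_\alpha({\mathcal G},{\mathcal V})}.
\]

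The remaining step is to apply an $\ell^p$-boundedness estimate to $\mathbf A-\mathbf A_s$. One cannot quote Proposition~\ref{jaffard.pr} verbatim, since that requires membership in ${\mathcal J}_\alpha$ with $\alpha$ \emph{strictly} greater than $d$, whereas we have only placed $\mathbf A-\mathbf A_s$ in ${\mathcal J}_d$. The cleanest fix is to keep a sliver of decay in reserve: split $(1+\rho_{\mathcal H}(\lambda,i))^{-\alpha}=(1+\rho_{\mathcal H}(\lambda,i))^{-\alpha'}(1+\rho_{\mathcal H}(\lambda,i))^{-(\alpha-\alpha')}$ for some fixed $\alpha'$ with $d<\alpha'<\alpha$ (for definiteness one may take $\alpha'=(\alpha+d)/2$), bound the second factor on the support by $(s+1)^{-(\alpha-\alpha')}$, and conclude $\mathbf A-\mathbf A_s\in{\mathcal J}_{\alpha'}({\mathcal G},{\mathcal V})$ with $\|\mathbf A-\mathbf A_s\|_{{\mathcal J}_{\alpha'}}\le (s+1)^{-(\alpha-\alpha')}\|\mathbf A\|_{{\mathcal J}_\alpha}$; then Proposition~\ref{jaffard.pr} applies directly to $\mathbf A-\mathbf A_s$ and gives
\[
\|(\mathbf A-\mathbf A_s)\mathbf c\|_p\le \frac{D_1({\mathcal G})L\alpha'}{\alpha'-d}\,(s+1)^{-(\alpha-\alpha')}\|\mathbf A\|_{{\mathcal J}_\alpha}\|\mathbf c\|_p .
\]
However, this route produces the exponent $-(\alpha-\alpha')<-(\alpha-d)$ and a worse constant than the one claimed in~\eqref{bandapproximation.pr.eq1}. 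To recover the stated inequality exactly, I would instead redo the proof of Proposition~\ref{jaffard.pr} directly for the tail matrix: estimate $\|(\mathbf A-\mathbf A_s)\mathbf c\|_\infty$ by summing $|b(\lambda,i)|$ over $i\in V$ with $\rho(\lambda,i)>s-1$ (using $\rho=\rho_{\mathcal H}-1$ on distinct pairs), organize the sum in dyadic/integer annuli $\{i:\rho(i,i_\lambda)\in[k,k+1)\}$ whose cardinalities are controlled by $\mu(B(i_\lambda,r))\le D_1({\mathcal G})L(1+r)^d$ via Proposition~\ref{Vpolynomial.pr} (equivalently~\eqref{Vpolynomial.pr.eq1}), and compare the resulting series $\sum_{k\ge s}(1+k)^{d-1-\alpha}$ with the integral $\int_s^\infty (1+t)^{d-1-\alpha}\,dt=(1+s)^{d-\alpha}/(\alpha-d)$; the analogous computation on columns handles $\|\cdot\|_1$, and Riesz--Thorin (or direct interpolation of the two endpoint bounds) handles $1<p<\infty$. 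Tracking constants through exactly this computation is what yields the factor $D_1({\mathcal G})L\alpha/(\alpha-d)$ and the clean exponent $-\alpha+d$.

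The main obstacle is bookkeeping rather than conceptual: matching the precise constant $D_1({\mathcal G})L\alpha/(\alpha-d)$ and the exact power $(s+1)^{-\alpha+d}$ requires the annulus/integral-comparison argument to be carried out with the tail starting at the right index, and requires using the \emph{full} decay exponent $\alpha$ (not a reserved $\alpha'$) when summing the tail — the convergence of $\sum_{k\ge s}(1+k)^{d-1-\alpha}$ is exactly what is bought by $\alpha>d$, and its value $\le (1+s)^{d-\alpha}/(\alpha-d)$ is where both the decay rate in $s$ and the $\alpha/(\alpha-d)$ factor originate. I would therefore present the argument as a near-verbatim repetition of the proof of Proposition~\ref{jaffard.pr} with the sum restricted to $\rho_{\mathcal H}(\lambda,i)>s$, remarking only on the one change, namely that the tail of the convergent series contributes the extra factor $(s+1)^{-\alpha+d}$.
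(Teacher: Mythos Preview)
Your final approach---repeating the proof of Proposition~\ref{jaffard.pr} verbatim with the sums restricted to the tail $\rho_{\mathcal H}(\lambda,i)>s$---is exactly what the paper does: the tail versions of the row and column sums \eqref{jaffardpr.pf.eq5b}--\eqref{jaffardpr.pf.eq5c} are bounded via Lemma~\ref{jaffardpr.lem} and produce precisely the extra factor $(s+1)^{-\alpha+d}$. The initial detours through ${\mathcal J}_d$ and the reserved exponent $\alpha'$ are unnecessary, and the paper handles all $1\le p\le\infty$ at once via the H\"older/Schur bound (as in \eqref{jaffardpr.pf.eq5}) rather than endpoint estimates plus Riesz--Thorin, but these are inessential differences.
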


The above band matrix approximation property
will be used later  in the establishment of a local stability criterion in Section \ref{criterion.section}
 and exponential convergence of a distributed  reconstruction algorithm in Section \ref{distributedalgorithm.section}.

\section{Robustness of distributed
 sampling and reconstruction systems}
\label{stablesampling.section}

 Let  ${\bf S}$ be the sensing matrix associated with our DSRS. 
We say that a reconstruction algorithm $\Delta$ is a {\em perfect reconstruction}
in noiseless environment if
\begin{equation}\label{perfect.condition}
\Delta({\bf S} {\bf c})={\bf c}\ \ {\rm for \ all}\  {\bf c}\in \ell^\infty.
\end{equation}
In this section, we first study robustness of the DSRS in term of the $\ell^\infty$-stability. 

 \begin{pr}\label{linftystability.pr}
   Let  ${\mathcal G}$ and ${\mathcal H}$
satisfy Assumptions 1 -- 4,
${\mathcal V}$ be as in \eqref{v.graph},
$\mu_{\mathcal G}$
 have polynomial growth with Beurling dimension $d$, and let $\mathbf S$ satisfy \eqref{sensingmatrix.assumption}.
 Then  there is   a  reconstruction algorithm $\Delta$ with the suboptimal approximation property
 \eqref{rougherror} and
 the perfect reconstruction property \eqref{perfect.condition}
 if and only if
  ${\bf S}$ has $\ell^\infty$-stability.
 \end{pr}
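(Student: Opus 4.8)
The plan is to prove the two implications separately. For the "if" direction, suppose $\mathbf{S}$ has $\ell^\infty$-stability with bounds $A,B$. The natural candidate is the $\ell^\infty$-minimization map $\Delta_\infty$ of \eqref{minimizationproblem}, but to guarantee that a minimizer exists and behaves well I would instead (or in addition) invoke a bounded left inverse. The key point is that $\mathbf{S}\in\mathcal{J}_\alpha(\mathcal{G},\mathcal{V})$ with $\alpha>d$ and has $\ell^2$-stability (which follows from $\ell^\infty$-stability via Theorem~\ref{lpstability.tm}, reading that implication in the direction that actually applies here — more carefully, one shows $\ell^\infty$-stability of a Jaffard-class matrix implies $\ell^2$-stability of it, or one argues directly). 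Then $(\mathbf{S}^T\mathbf{S})^{-1}\mathbf{S}^T$ lies again in a Jaffard class by Wiener's lemma for localized matrices (the references \cite{jaffard,suntams07} etc.), hence defines a bounded operator on $\ell^\infty$ by Proposition~\ref{jaffard.pr}, and it is a left inverse of $\mathbf{S}$. Taking $\Delta=\Delta_2$ restricted to $\ell^\infty$ gives $\Delta(\mathbf{S}\mathbf{c})=\mathbf{c}$, which is \eqref{perfect.condition}, and for noisy data $\mathbf{z}=\mathbf{S}\mathbf{c}+\pmb\eta$ one gets $\|\Delta(\mathbf{z})-\mathbf{c}\|_\infty=\|\Delta(\pmb\eta)\|_\infty\le C\|\pmb\eta\|_\infty$, which is \eqref{rougherror}. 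So existence of a good $\Delta$ follows.

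For the "only if" direction, suppose a reconstruction algorithm $\Delta$ satisfies both \eqref{perfect.condition} and \eqref{rougherror}. The upper bound in $\ell^\infty$-stability, $\|\mathbf{S}\mathbf{c}\|_\infty\le B\|\mathbf{c}\|_\infty$, is immediate from $\mathbf{S}\in\mathcal{J}_\alpha(\mathcal{G},\mathcal{V})$ via Proposition~\ref{jaffard.pr}, so only the lower bound $A\|\mathbf{c}\|_\infty\le\|\mathbf{S}\mathbf{c}\|_\infty$ needs work. The idea is the standard one: for $\mathbf{c},\mathbf{c}'\in\ell^\infty$ apply \eqref{rougherror} with true vector $\mathbf{c}$ and noisy data $\mathbf{z}=\mathbf{S}\mathbf{c}'$, so that the "noise" is $\pmb\eta=\mathbf{S}(\mathbf{c}'-\mathbf{c})$ with $\|\pmb\eta\|_\infty=\|\mathbf{S}(\mathbf{c}'-\mathbf{c})\|_\infty$. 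Then, using \eqref{perfect.condition} in the form $\Delta(\mathbf{S}\mathbf{c}')=\mathbf{c}'$,
\[
\|\mathbf{c}-\mathbf{c}'\|_\infty=\|\mathbf{c}-\Delta(\mathbf{z})\|_\infty\le C\|\mathbf{S}(\mathbf{c}'-\mathbf{c})\|_\infty.
\]
Setting $\mathbf{c}'=0$ (which is legitimate since $0\in\ell^\infty$ and $\Delta(\mathbf{S}\cdot 0)=0$) yields $\|\mathbf{c}\|_\infty\le C\|\mathbf{S}\mathbf{c}\|_\infty$ for all $\mathbf{c}\in\ell^\infty$, i.e.\ the lower $\ell^\infty$-stability bound with $A=1/C$. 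This completes the equivalence.

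The main obstacle is the "if" direction: one must produce a reconstruction algorithm whose error is controlled in $\ell^\infty$ \emph{uniformly}, and the naive minimizer $\Delta_\infty$ only controls $\|\mathbf{S}\Delta_\infty(\mathbf{z})-\mathbf{S}\mathbf{c}\|_\infty$, after which one still needs $\ell^\infty$-stability of $\mathbf{S}$ to pass back to amplitude vectors — so $\Delta_\infty$ alone does not obviously give a bounded (let alone linear) algorithm without the stability hypothesis being used. The clean route is to use $\Delta_2=(\mathbf{S}^T\mathbf{S})^{-1}\mathbf{S}^T$ and the inverse-closedness (Wiener's lemma) of the Jaffard class, which requires checking that $\mathbf{S}^T\mathbf{S}$ is invertible on $\ell^2$ and lies in the appropriate localized algebra over the graph $\mathcal{V}$; here the hypotheses $\alpha>d$ and Assumptions 1--4 (ensuring $\mu_{\mathcal G}$, hence $\mu$ on $\mathcal V$, has polynomial growth) are exactly what makes the algebra machinery apply. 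Care is also needed to confirm that $\ell^\infty$-stability of $\mathbf{S}$ does entail $\ell^2$-stability of $\mathbf{S}$ in this setting — this is where Theorem~\ref{lpstability.tm} (or its proof technique) is invoked — since without it we cannot conclude $\mathbf{S}^T\mathbf{S}$ is boundedly invertible on $\ell^2$.
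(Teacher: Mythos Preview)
Your necessity (``only if'') argument is correct and essentially identical to the paper's: apply \eqref{rougherror} with noise ${\pmb\eta}={\bf S}{\bf d}$ and use perfect reconstruction to get $\|{\bf d}\|_\infty\le C\|{\bf S}{\bf d}\|_\infty$.

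For sufficiency, however, you have overcomplicated matters and introduced a genuine gap. Theorem~\ref{lpstability.tm} only asserts $\ell^2$-stability $\Rightarrow$ $\ell^p$-stability; it does \emph{not} give you $\ell^\infty$-stability $\Rightarrow$ $\ell^2$-stability. You flag this yourself (``reading that implication in the direction that actually applies here'') but never resolve it, so your route through $\Delta_2$ and Wiener's lemma is not justified by the results available in the paper. The reverse implication is in fact true for Jaffard-class matrices, but establishing it requires essentially the same machinery as Theorem~\ref{lpstability.tm} and is not a free byproduct.

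The paper avoids all of this by taking $\Delta=\Delta_\infty$ directly. Your objection that ``$\Delta_\infty$ alone does not obviously give a bounded algorithm without the stability hypothesis being used'' misses the point: in the sufficiency direction you \emph{are} assuming $\ell^\infty$-stability, so the chain
\[
A\,\|\Delta_\infty({\bf z})-{\bf c}\|_\infty \le \|{\bf S}\Delta_\infty({\bf z})-{\bf S}{\bf c}\|_\infty \le 2\|{\pmb\eta}\|_\infty
\]
(the second inequality having been derived in the introduction) immediately yields \eqref{rougherror} with $C=2/A$, and perfect reconstruction follows by setting ${\pmb\eta}=0$. Your concern about the existence of the $\ell^\infty$-minimizer is legitimate in infinite dimensions, but the paper does not dwell on it; one may take any near-minimizer or note that under the stability hypothesis ${\bf c}$ itself already achieves the infimum when ${\pmb\eta}=0$.
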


 The sufficiency in Proposition \ref{linftystability.pr} holds by taking $\Delta=\Delta_\infty$ in \eqref{minimizationproblem}, while the
 necessity follows by applying \eqref{rougherror} to
   $\pmb\eta={\bf S}{\bf d}$ with ${\bf d}\in \ell^\infty$.

The $\ell^\infty$-stability of a matrix can not be verified in a distributed manner, up to our knowledge.
In the next theorem,
we circumvent such a verification problem by reducing  $\ell^\infty$-stability of
a  matrix in Jaffard class 
  to its $\ell^2$-stability,
 for which a distributed verifiable criterion will be provided  in Section \ref{criterion.section}.

\begin{Tm}\label{lpstability.tm}
 Let  ${\mathcal G}, {\mathcal H}, {\mathcal V}$ and   $d$ be as in Proposition  \ref{linftystability.pr}, and let
 ${\mathbf A}\in {\mathcal J}_\alpha({\mathcal G}, {\mathcal V})$ for some $\alpha>d$.
 If ${\mathbf A}$ has $\ell^2$-stability, then
 it has $\ell^p$-stability for all $1\le p\le \infty$.
\end{Tm}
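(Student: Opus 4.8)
The plan is to establish $\ell^p$-stability by a bootstrapping argument anchored at $p=2$, using the fact that the Jaffard class is an inverse-closed Banach algebra (Wiener's lemma for matrices with polynomial off-diagonal decay, \cite{jaffard,suncasp05,suntams07,sunca11}). First I would observe that the upper bound $\|{\mathbf A}{\mathbf c}\|_p\le B_p\|{\mathbf c}\|_p$ for all $1\le p\le\infty$ is immediate from Proposition \ref{jaffard.pr}, since ${\mathbf A}\in{\mathcal J}_\alpha({\mathcal G},{\mathcal V})$ with $\alpha>d$; so the whole content is the lower bound. The $\ell^2$-stability hypothesis says ${\mathbf A}^T{\mathbf A}$ is bounded below on $\ell^2$, hence invertible on $\ell^2$ (its range is closed and, being self-adjoint and injective with closed range, it is onto). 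Now ${\mathbf A}^T{\mathbf A}\in{\mathcal J}_\alpha({\mathcal V},{\mathcal V})$ because the Jaffard class is closed under transposition and under products (the product of two matrices with polynomial off-diagonal decay of order $\alpha>d$ again has such decay — this uses the polynomial growth of the counting measure, exactly as in Proposition \ref{jaffard.pr}). By inverse-closedness of the Jaffard class, $({\mathbf A}^T{\mathbf A})^{-1}\in{\mathcal J}_\alpha({\mathcal V},{\mathcal V})$ as well.

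Next I would use this to produce a bounded left inverse of ${\mathbf A}$ lying in the Jaffard class: set ${\mathbf B}:=({\mathbf A}^T{\mathbf A})^{-1}{\mathbf A}^T$, so ${\mathbf B}{\mathbf A}={\mathbf I}$ and ${\mathbf B}\in{\mathcal J}_\alpha({\mathcal V},{\mathcal G})$ since it is a product of two Jaffard-class matrices. Applying Proposition \ref{jaffard.pr} to ${\mathbf B}$ gives a finite constant $C_p$ with $\|{\mathbf B}{\mathbf d}\|_p\le C_p\|{\mathbf d}\|_p$ for all ${\mathbf d}\in\ell^p$, $1\le p\le\infty$. Then for any ${\mathbf c}\in\ell^p$,
\begin{equation*}
\|{\mathbf c}\|_p=\|{\mathbf B}{\mathbf A}{\mathbf c}\|_p\le C_p\|{\mathbf A}{\mathbf c}\|_p,
\end{equation*}
which is precisely the lower $\ell^p$-stability bound with constant $A_p=C_p^{-1}$. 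Combined with the upper bound this yields $\ell^p$-stability for every $p\in[1,\infty]$.

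The main obstacle — really the only nontrivial ingredient — is justifying that the Jaffard class ${\mathcal J}_\alpha$ on the graphs in question is a Banach algebra under multiplication and is inverse-closed, for $\alpha>d$. The algebra property is a Schur-test/Hölder computation entirely parallel to the proof of Proposition \ref{jaffard.pr}: one needs $\sum_{k}(1+\rho(i,k))^{-\alpha}(1+\rho(k,j))^{-\alpha}\le C(1+\rho(i,j))^{-\alpha}$, which follows by splitting the sum according to whether $\rho(i,k)$ or $\rho(k,j)$ is at least $\rho(i,j)/2$ and invoking the polynomial-growth bound \eqref{mupolynomial.def}. Inverse-closedness of Jaffard classes on spaces of homogeneous type is by now standard \cite{jaffard,suncasp05,suntams07,sunca11}, and since Propositions \ref{countmeasure.pr} and \ref{Vpolynomial.pr} give the polynomial growth of both $\mu_{\mathcal G}$ and $\mu$, the hypotheses of those results are met here; one should, however, remark that the distance $\rho$ on ${\mathcal V}$ and the geodesic distance on ${\mathcal G}$ are quasi-isometric via \eqref{hgball.eq} and \eqref{rhorhoH.eq}, so that off-diagonal decay measured by $\rho_{\mathcal H}$ transfers cleanly to decay on ${\mathcal V}$ and ${\mathcal G}$ separately. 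Everything else is bookkeeping with constants that I would not grind through in the body of the proof.
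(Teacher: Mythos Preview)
Your proposal is correct and follows essentially the same route as the paper's own proof: show ${\mathbf A}^T{\mathbf A}\in{\mathcal J}_\alpha({\mathcal V})$ (the paper's Lemma~\ref{jaffard.algebra2.lm}), invoke Wiener's lemma (Theorem~\ref{wienerlemma.tm}) to get $({\mathbf A}^T{\mathbf A})^{-1}\in{\mathcal J}_\alpha({\mathcal V})$, and then use the pseudoinverse $({\mathbf A}^T{\mathbf A})^{-1}{\mathbf A}^T$ together with Proposition~\ref{jaffard.pr} for both the upper and lower $\ell^p$ bounds. The only cosmetic difference is that the paper phrases the lower bound via ${\mathbf A}({\mathbf A}^T{\mathbf A})^{-1}\in{\mathcal J}_\alpha({\mathcal G},{\mathcal V})$ and an $\ell^p$--$\ell^{p'}$ duality step, rather than introducing a separate class ${\mathcal J}_\alpha({\mathcal V},{\mathcal G})$ as you do; this is the same computation read from the transpose side.
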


The reader may refer to  \cite{akramjfa09, shincjfa09, sunca11}
for equivalence of  $\ell^p$-stability of localized  matrices
 for different $1\le p\le \infty$.
The lower and upper $\ell^p$-stability
 bounds of the matrix ${\bf A}$
depend on its  $\ell^2$-stability bounds and local features of the graph ${\mathcal H}$.
From the proof of Theorem \ref{lpstability.tm}, we  observe that  they
depend only on
the $\ell^2$-stability bounds,   ${\mathcal J}_\alpha({\mathcal G}, {\mathcal V})$-norm of the  matrix
${\mathbf A}$,
maximal vertex degree ${\rm deg}({\mathcal G})$, the Beurling dimension $d$, the sampling density $D_1({\mathcal G})$, and the constants $L$ and $M$ in \eqref{basicassumption2} and \eqref{basicassumption3}. So the sensing matrix of our DSRS may have its  $\ell^p$-stability bounds
 independent of the size of the DSRS.

\smallskip

For  the graph ${\mathcal V}$ in \eqref{v.graph}
 and the distance $\rho$  in \eqref{rho.def},
define 
  \begin{equation}\label{Vjaffard.def1}
  {\mathcal J}_{\alpha}({\mathcal V}):=\big \{ {\bf A}:=(a(i,i'))_{i, i'\in V}, \ \|{\bf A}\|_{{\mathcal J}_{\alpha}({\mathcal V})}<\infty \big \}, \end{equation}
  where
 \begin{equation}\label{Vjaffard.def2}
 \|{\bf A}\|_{{\mathcal J}_{\alpha}({\mathcal V})} :=   \sup_{i, i'\in V}(1+\rho(i,i'))^{\alpha}|a(i,i')|, \ \alpha\ge 0.
  \end{equation}
 The proof of Theorem \ref{lpstability.tm} depends highly on the following
 Wiener's lemma  for the matrix algebra ${\mathcal J}_\alpha({\mathcal V}), \alpha>d$.

\begin{Tm}\label{wienerlemma.tm}
 Let ${\mathcal V}$ be as in \eqref{v.graph} and its counting measure $\mu$
 satisfy \eqref{mupolynomial.def}.
 If ${\mathbf A}\in {\mathcal J}_\alpha({\mathcal V}), \alpha>d$,  and ${\bf A}^{-1}$ is bounded on $\ell^2$, then
 ${\mathbf A}^{-1}\in {\mathcal J}_\alpha({\mathcal V})$ too. 
\end{Tm}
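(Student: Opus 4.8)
The plan is to show that, for $\alpha>d$, the Jaffard class $\mathcal{J}_\alpha(\mathcal{V})$ is a unital Banach $*$-subalgebra of $\mathcal{B}(\ell^2(V))$ with continuous inclusion, and then to invoke the inverse-closedness of such polynomially localized matrix algebras over an index set carrying a polynomially growing counting measure, which is the substance of the localization theory in \cite{jaffard, suntams07, sunca11}. I would use throughout that $\rho$ in \eqref{rho.def} is a genuine metric on $V$ (Proposition \ref{Vdistance.pr}) and that the counting measure satisfies the polynomial volume bound $\mu(B(i,r))\le D_1(1+r)^d$ of \eqref{mupolynomial.def}; these are precisely the structural hypotheses that the cited machinery asks of the index set. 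Note also that $\rho$ is integer-valued and finite (since $\mathcal H$ is connected), so the relevant sums over $V$ decompose cleanly into shells.

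First I would establish that $\mathcal{J}_\alpha(\mathcal{V})$ is a Banach algebra. The only nontrivial ingredient is the convolution-type inequality
\begin{equation*}
\sum_{k\in V}(1+\rho(i,k))^{-\alpha}(1+\rho(k,j))^{-\alpha}\le C\,(1+\rho(i,j))^{-\alpha},\qquad i,j\in V,
\end{equation*}
with $C=C(\alpha,d,D_1)$, valid because $\alpha>d$: split the sum according to whether $\rho(i,k)\ge\rho(i,j)/2$ or $\rho(k,j)\ge\rho(i,j)/2$ (at least one holds by the triangle inequality), bound the large factor by $2^{\alpha}(1+\rho(i,j))^{-\alpha}$, and control the leftover sum $\sum_{k}(1+\rho(k,j))^{-\alpha}<\infty$ by Abel summation against the cumulative bound $\#B(i,r)\le D_1(1+r)^{d}$ --- the very computation that produces the factor $\alpha/(\alpha-d)$ in Proposition \ref{jaffard.pr}. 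From this, $\|AB\|_{\mathcal{J}_\alpha(\mathcal{V})}\le C\|A\|_{\mathcal{J}_\alpha(\mathcal{V})}\|B\|_{\mathcal{J}_\alpha(\mathcal{V})}$; symmetry of $\rho$ gives $\|A^{*}\|_{\mathcal{J}_\alpha(\mathcal{V})}=\|A\|_{\mathcal{J}_\alpha(\mathcal{V})}$, so the class is $*$-closed; the identity matrix belongs to it; and the Schur test with the same summation bound (the square analogue of Proposition \ref{jaffard.pr}) gives $\|A\|_{{\mathcal B}^2}\le\frac{D_1\alpha}{\alpha-d}\|A\|_{\mathcal{J}_\alpha(\mathcal{V})}$, so the inclusion into $\mathcal{B}(\ell^{2}(V))$ is continuous.

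Next I would prove inverse-closedness. Given $A\in\mathcal{J}_\alpha(\mathcal{V})$ with $A^{-1}\in\mathcal{B}(\ell^{2}(V))$, I would first reduce to a positive-definite matrix: $A^{*}\in\mathcal{J}_\alpha(\mathcal{V})$ and hence $A^{*}A\in\mathcal{J}_\alpha(\mathcal{V})$ is positive and boundedly invertible on $\ell^{2}$, and $A^{-1}=(A^{*}A)^{-1}A^{*}$, so it is enough to invert $A^{*}A$; thus I may assume $A=A^{*}$ with spectrum on $\ell^2$ contained in $[m,M]$, $0<m\le M$. For the inversion itself I would follow the localization scheme of the cited references: using the band-truncations $A_{N}$ of \eqref{bandapproximation.def}, which by the square analogue of Proposition \ref{bandapproximation.pr} satisfy $\|A-A_{N}\|_{{\mathcal B}^2}\le\frac{D_1\alpha}{\alpha-d}(N+1)^{-\alpha+d}\|A\|_{\mathcal{J}_\alpha(\mathcal{V})}$ (so $A_{N}$ is invertible with $\|A_{N}^{-1}\|_{{\mathcal B}^2}\le 2\|A^{-1}\|_{{\mathcal B}^2}$ for $N$ large), together with commutators of $A$ and $A^{-1}$ with diagonal multipliers $M_\phi$ by Lipschitz functions $\phi$ on $(V,\rho)$ and the identity $[A^{-1},M_\phi]=-A^{-1}[A,M_\phi]A^{-1}$, one bootstraps the off-diagonal decay of $A^{-1}$, raising its admissible exponent in finitely many steps until it reaches $\alpha$. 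The polynomial-growth bound \eqref{mupolynomial.def} is exactly what keeps each such step inside the scale $\{\mathcal{J}_\beta(\mathcal{V})\}_{\beta>d}$, and since $(V,\rho,\mu)$ meets these requirements the conclusion follows as in \cite{jaffard, suntams07, sunca11}.

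The main obstacle is this last step. Unlike the weighted-$\ell^{1}$ (Gr\"ochenig--Baskakov--Sj\"ostrand) classes, for which inverse-closedness follows from a soft differential-subalgebra argument, the polynomially localized class $\mathcal{J}_\alpha(\mathcal{V})$ does \emph{not} admit such an argument against $\mathcal{B}(\ell^{2})$ directly --- the naive Leibniz-type splittings give back strictly less decay than one started with --- so one genuinely needs the commutator-and-bootstrapping machinery with careful tracking of the decay exponent at every stage. The only new point in the present setting is to check that the abstract graph distance $\rho$ of \eqref{rho.def}, rather than a Euclidean distance, still supports those estimates, which is routine once the metric property (Proposition \ref{Vdistance.pr}) and the volume bound \eqref{mupolynomial.def} are in hand.
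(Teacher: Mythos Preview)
Your commutator-and-bootstrapping route is a valid path to the result --- it is essentially Jaffard's original 1990 strategy, adapted to the abstract metric space $(V,\rho)$ --- but it is not what the paper does, and your assertion that the Jaffard class ``does \emph{not} admit'' a differential-subalgebra argument against $\mathcal{B}(\ell^2)$ is mistaken. The paper proceeds precisely by such an argument: following \cite{suncasp05}, it proves that for every $\mathbf{C}\in\mathcal{J}_\alpha(\mathcal{V})$ one has the differential norm inequality
\[
\|\mathbf{C}^2\|_{\mathcal{J}_\alpha(\mathcal{V})}\;\le\; 2^{\alpha+d/2+2}\,D_1^{1/2}\Big(\frac{D_1\alpha}{\alpha-d}\Big)^{1-\theta}\|\mathbf{C}\|_{\mathcal{J}_\alpha(\mathcal{V})}^{\,2-\theta}\,\|\mathbf{C}\|_{\mathcal{B}^2}^{\,\theta},\qquad \theta=\frac{2\alpha-2d}{2\alpha-d}\in(0,1),
\]
and once this is in hand, inverse-closedness follows from the abstract machinery of \cite{suncasp05}. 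The proof of the inequality is not a naive Leibniz splitting: after the standard bound $\|\mathbf{C}^2\|_{\mathcal{J}_\alpha}\le 2^\alpha\|\mathbf{C}\|_{\mathcal{J}_\alpha}\cdot\sup_{i'}\sum_{i''}|c(i'',i')|$, one splits each column sum at a radius $\tau$, controls the near part by Cauchy--Schwarz together with the volume bound $\mu(B(i',\tau))\le D_1(1+\tau)^d$ (producing a factor $\|\mathbf{C}\|_{\mathcal{B}^2}\,\tau^{d/2}$), controls the far part by the Jaffard norm (producing $\|\mathbf{C}\|_{\mathcal{J}_\alpha}\,\tau^{-\alpha+d}$), and then optimizes over $\tau$. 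This balanced splitting is exactly what avoids the loss of decay you allude to.

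Your preliminary steps (Banach-algebra structure, $*$-closedness, continuous inclusion into $\mathcal{B}(\ell^2)$) are correct and also appear in the paper as a separate proposition. What your route buys is independence from the specific mechanism of \cite{suncasp05}; what the paper's route buys is a short argument with explicit constants, which it later reuses quantitatively --- the constant $D_2$ in Proposition~\ref{convergence.prop} is built directly from iterating this differential inequality. So the paper's choice is not merely a matter of taste here: the explicit bound feeds the convergence estimates for the distributed algorithm.
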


 Wiener's lemma has been established for infinite matrices, pseudodifferential operators,  and integral operators
satisfying various off-diagonal decay conditions (\cite{balan08, Strohmer10, Grochenig06, Grochenig10,  GrochenigL06,
jaffard,   suncasp05,  suntams07,   sunacha08, sunca11}).
It has been shown to be crucial for  well-localization of dual Gabor/wavelet
frames,
 fast implementation in numerical
analysis,   local reconstruction in sampling theory,  local features of spatially distributed
optimization, etc. The reader may refer to the survey papers \cite{grochenig10, krishtal11} 
for historical remarks, motivation and recent advances.

The  Wiener's lemma (Theorem \ref{wienerlemma.tm}) is also used to establish the sub-optimal approximation property \eqref{rougherror} for the ``least squares" solution $\Delta_2({\bf z})$ in \eqref{d2.eqn}, for which a distributed algorithm is proposed in Section \ref{distributedalgorithm.section}.

\begin{Tm}\label{leastsqaure.tm}  
 Let  ${\mathcal G}, {\mathcal H}$ and ${\mathcal V}$ be as in Proposition \ref{linftystability.pr}.
 Assume that the sensing matrix ${\mathbf S}$ satisfies \eqref{sensingmatrix.assumption}
 and it has $\ell^2$-stability.
Then there exists a positive constant $C$ such that
\begin{equation}\label{l2estimate.in.eq}
\|\Delta_2({\bf z})-{\bf c}\|_\infty\le C \|\pmb\eta\|_\infty
\quad {\rm for \ all} \ {\bf c}, {\pmb \eta}\in \ell^\infty,
\end{equation}
where ${\bf z}={\bf S}{\bf c}+\pmb \eta$.
\end{Tm}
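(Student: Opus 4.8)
The plan is to reduce the $\ell^\infty$ error bound for $\Delta_2({\bf z})=({\bf S}^T{\bf S})^{-1}{\bf S}^T{\bf z}$ to membership of $({\bf S}^T{\bf S})^{-1}$ in a Jaffard-type algebra, so that Proposition \ref{jaffard.pr} can be applied to turn $\ell^2$-stability into an $\ell^\infty$ bound. First I would write, using ${\bf z}={\bf S}{\bf c}+{\pmb\eta}$,
\begin{equation*}
\Delta_2({\bf z})-{\bf c}=({\bf S}^T{\bf S})^{-1}{\bf S}^T({\bf S}{\bf c}+{\pmb\eta})-{\bf c}=({\bf S}^T{\bf S})^{-1}{\bf S}^T{\pmb\eta},
\end{equation*}
so the theorem amounts to showing that the matrix ${\bf B}:=({\bf S}^T{\bf S})^{-1}{\bf S}^T$ maps $\ell^\infty(G)$ boundedly into $\ell^\infty(V)$. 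Since ${\bf S}$ has $\ell^2$-stability, ${\bf S}^T{\bf S}$ is bounded and boundedly invertible on $\ell^2(V)$, and ${\bf B}$ is bounded on $\ell^2$; the point is to upgrade this to $\ell^\infty$.

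The key step is to show that each factor lies in the appropriate Jaffard class relative to the graph distances. Because ${\bf S}\in{\mathcal J}_\alpha({\mathcal G},{\mathcal V})$ with $\alpha>d$, a Schur-test style argument (of the type underlying Proposition \ref{jaffard.pr}) gives that the Gram matrix ${\bf S}^T{\bf S}=(\sum_{\lambda\in G}\langle\varphi_i,\psi_\lambda\rangle\langle\varphi_{i'},\psi_\lambda\rangle)_{i,i'\in V}$ belongs to ${\mathcal J}_\alpha({\mathcal V})$; here one uses the triangle inequality $\rho(i,i')\le \rho_{\mathcal H}(\lambda,i)+\rho_{\mathcal H}(\lambda,i')$ (valid up to the additive constant built into \eqref{rho.def}), together with the polynomial growth of $\mu_{\mathcal G}$ and Assumptions 1--4, to sum the product of two polynomially decaying factors and still obtain polynomial decay of order $\alpha$. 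Then, since ${\bf S}^T{\bf S}\in{\mathcal J}_\alpha({\mathcal V})$ is invertible on $\ell^2(V)$, Wiener's lemma (Theorem \ref{wienerlemma.tm}) yields $({\bf S}^T{\bf S})^{-1}\in{\mathcal J}_\alpha({\mathcal V})$. Finally ${\bf B}=({\bf S}^T{\bf S})^{-1}{\bf S}^T$ is a product of a matrix in ${\mathcal J}_\alpha({\mathcal V})$ with the matrix ${\bf S}^T\in{\mathcal J}_\alpha({\mathcal V},{\mathcal G})$ (the transpose of an element of ${\mathcal J}_\alpha({\mathcal G},{\mathcal V})$), and a further off-diagonal-decay composition estimate shows ${\bf B}\in{\mathcal J}_\alpha({\mathcal V},{\mathcal G})$ with an explicit norm bound in terms of $\|{\bf S}\|_{{\mathcal J}_\alpha}$, the $\ell^2$-stability bounds of ${\bf S}$, $\deg({\mathcal G})$, $d$, $D_1({\mathcal G})$, $L$ and $M$.

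With ${\bf B}\in{\mathcal J}_\alpha({\mathcal V},{\mathcal G})$ in hand, Proposition \ref{jaffard.pr} (applied with the roles of ${\mathcal G}$ and ${\mathcal V}$ interchanged, which is legitimate because Theorem \ref{Vdim.cr} shows $\mu$ and $\mu_{\mathcal G}$ have the same polynomial growth exponent $d$) gives
\begin{equation*}
\|\Delta_2({\bf z})-{\bf c}\|_\infty=\|{\bf B}{\pmb\eta}\|_\infty\le C\,\|{\bf B}\|_{{\mathcal J}_\alpha}\,\|{\pmb\eta}\|_\infty,
\end{equation*}
which is \eqref{l2estimate.in.eq}. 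I expect the main obstacle to be the bookkeeping in the two composition/closure estimates — verifying that ${\mathcal J}_\alpha({\mathcal V})$ is a Banach algebra under multiplication and that products across ${\mathcal J}_\alpha({\mathcal G},{\mathcal V})$ and ${\mathcal J}_\alpha({\mathcal V})$ stay in the right class — since this requires controlling sums of the form $\sum_{i''}(1+\rho(i,i''))^{-\alpha}(1+\rho(i'',i'))^{-\alpha}$ uniformly, and this in turn is exactly where the doubling/polynomial-growth hypotheses on $\mu$ (equivalently on $\mu_{\mathcal G}$, via Propositions \ref{Vdoublingmeasure.pr} and \ref{Vpolynomial.pr}) and the condition $\alpha>d$ are essential. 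Everything else is a short algebraic manipulation plus invocation of Theorem \ref{wienerlemma.tm}.
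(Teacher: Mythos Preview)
Your proposal is correct and follows essentially the same route as the paper: the paper's one-line proof invokes Proposition \ref{jaffard.pr}, Theorem \ref{wienerlemma.tm} and Lemma \ref{jaffard.algebra2.lm}, which is precisely your chain of reasoning---Lemma \ref{jaffard.algebra2.lm}(ii) gives ${\bf S}^T{\bf S}\in{\mathcal J}_\alpha({\mathcal V})$, Wiener's lemma gives $({\bf S}^T{\bf S})^{-1}\in{\mathcal J}_\alpha({\mathcal V})$, Lemma \ref{jaffard.algebra2.lm}(i) gives ${\bf S}({\bf S}^T{\bf S})^{-1}\in{\mathcal J}_\alpha({\mathcal G},{\mathcal V})$, and then Proposition \ref{jaffard.pr} (together with the symmetric Schur-type bound implicit in its proof for the transpose direction) yields the $\ell^\infty$ bound. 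The ``bookkeeping'' you anticipate as the main obstacle is exactly what Lemma \ref{jaffard.algebra2.lm} and Proposition \ref{jaffard.banach.pr} package up.
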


\section{Stability criterion for distributed sampling and reconstruction system}
\label{criterion.section}

Let ${\mathcal H}$ be the connected simple graph in \eqref{h.graph} to describe
our DSRS. Given   $\lambda'\in G$ and a positive integer $N$, define {\em truncation operators}
 $\chi_{\lambda', G}^N$ and $\chi_{\lambda', V}^N$
  by
  $$\chi_{\lambda', G}^N: \ \ell^p(G) \ni (d(\lambda))_{\lambda\in G}\longmapsto \big(d(\lambda)\chi_{B_{\mathcal H}(\lambda',N)\cap G}(\lambda)\big)_{\lambda\in G} \in \ell^p(G)$$
and
$$\chi_{\lambda', V}^N: \ \ell^p(V) \ni (c(i))_{i \in V}\longmapsto \big(c(i)\chi_{B_{\mathcal H}(\lambda',N)\cap V}(i)\big)_{i\in V}  \in \ell^p(V),$$
where $1 \le p \le \infty$ and
 $$B_{\mathcal H}(u, r):=\{v\in G\cup V,\ \rho_{\mathcal H}(u, v)\le r\}$$
 is the closed ball in ${\mathcal H}$ with center $u\in {\mathcal H}$ and radius $r\ge 0$.

For any matrix ${\mathbf A}\in {\mathcal J}_\alpha({\mathcal G}, {\mathcal V})$ with $\ell^2$-stability, we observe that
its quasi-main  submatrices $\chi_\lambda^{2N}{\mathbf A}\chi_\lambda^N, \lambda\in G$, of size  $O(N^d)$ have uniform $\ell^2$-stability
for large $N$.

\begin{Tm}\label{l2stablility.thm1}
  Let  ${\mathcal G}$ and ${\mathcal H}$
satisfy Assumptions 1 -- 4,
${\mathcal V}$ be as in \eqref{v.graph},
$\mu_{\mathcal G}$
 have polynomial growth with Beurling dimension $d$ and sampling density $D_1({\mathcal G})$, and
 let
 ${\mathbf A}\in {\mathcal J}_\alpha({\mathcal G}, {\mathcal V})$ for some $\alpha>d$. If ${\mathbf A}$
 has $\ell^2$-stability with lower bound $A \|{\mathbf A}\|_{{\mathcal J}_\alpha({\mathcal G}, {\mathcal V})}$, then
\begin{equation}\label{l2stablility.thm1.eq1}
\|\chi_{\lambda, G}^{2N}{\mathbf A}\chi_{\lambda, V}^N {\bf c}\|_2 \ge \frac{A}{2} \|{\mathbf A}\|_{{\mathcal J}_\alpha({\mathcal G}, {\mathcal V})}
\|\chi_{\lambda, V}^N{\bf c}\|_2,\ {\bf c}\in \ell^2
\end{equation}
for all $\lambda\in G$ and all integers $N$ satisfying
\begin{equation}\label{l2stablility.thm1.eq2}
 2 D_1({\mathcal G}) N^{-\alpha+d} \sqrt{L\alpha/(\alpha-d)}
 \le A.\end{equation}
 \end{Tm}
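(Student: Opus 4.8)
The plan is to reduce the quasi‑restriction bound to the global $\ell^2$‑stability of ${\mathbf A}$ together with the band‑matrix approximation of Proposition \ref{bandapproximation.pr}. Fix $\lambda\in G$ and an integer $N$ satisfying \eqref{l2stablility.thm1.eq2}, take ${\mathbf c}\in\ell^2$, and set ${\mathbf d}:=\chi_{\lambda,V}^N{\mathbf c}$, which is supported in $B_{\mathcal H}(\lambda,N)\cap V$ and has $\|{\mathbf d}\|_2=\|\chi_{\lambda,V}^N{\mathbf c}\|_2$. Since $\chi_{\lambda,G}^{2N}{\mathbf A}\chi_{\lambda,V}^N{\mathbf c}=\chi_{\lambda,G}^{2N}{\mathbf A}{\mathbf d}$, I would split ${\mathbf A}{\mathbf d}=\chi_{\lambda,G}^{2N}{\mathbf A}{\mathbf d}+({\mathbf I}-\chi_{\lambda,G}^{2N}){\mathbf A}{\mathbf d}$ and combine the triangle inequality with the $\ell^2$‑stability of ${\mathbf A}$ to obtain
$$\|\chi_{\lambda,G}^{2N}{\mathbf A}{\mathbf d}\|_2\ \ge\ \|{\mathbf A}{\mathbf d}\|_2-\|({\mathbf I}-\chi_{\lambda,G}^{2N}){\mathbf A}{\mathbf d}\|_2\ \ge\ A\|{\mathbf A}\|_{{\mathcal J}_\alpha({\mathcal G},{\mathcal V})}\|{\mathbf d}\|_2-\|({\mathbf I}-\chi_{\lambda,G}^{2N}){\mathbf A}{\mathbf d}\|_2 .$$
Thus everything reduces to bounding the ``leakage'' term $\|({\mathbf I}-\chi_{\lambda,G}^{2N}){\mathbf A}{\mathbf d}\|_2$ by $\frac{A}{2}\|{\mathbf A}\|_{{\mathcal J}_\alpha({\mathcal G},{\mathcal V})}\|{\mathbf d}\|_2$ under hypothesis \eqref{l2stablility.thm1.eq2}.

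The key observation is that the leakage only sees far‑off‑diagonal entries of ${\mathbf A}$: if $\lambda'\in G$ with $\rho_{\mathcal H}(\lambda,\lambda')>2N$ and $i\in B_{\mathcal H}(\lambda,N)\cap V$, then $\rho_{\mathcal H}(\lambda',i)\ge\rho_{\mathcal H}(\lambda,\lambda')-\rho_{\mathcal H}(\lambda,i)>N$ by the triangle inequality for $\rho_{\mathcal H}$. Hence $({\mathbf I}-\chi_{\lambda,G}^{2N}){\mathbf A}{\mathbf d}=({\mathbf I}-\chi_{\lambda,G}^{2N})({\mathbf A}-{\mathbf A}_N){\mathbf d}$, with ${\mathbf A}_N$ the band matrix in \eqref{bandapproximation.def}, so $\|({\mathbf I}-\chi_{\lambda,G}^{2N}){\mathbf A}{\mathbf d}\|_2\le\|({\mathbf A}-{\mathbf A}_N){\mathbf d}\|_2$. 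I would then invoke the $p=2$ case of Proposition \ref{bandapproximation.pr} (with its constant sharpened by a Schur‑test argument): combining the polynomial off‑diagonal decay \eqref{sensingjaffard.cond} with the polynomial growth of $\mu_{\mathcal G}$ — transferred to the index sets $B_{\mathcal H}(\lambda,r)\cap V$ and $B_{\mathcal H}(i,r)\cap G$ via Proposition \ref{shortestpath.pr}, Assumptions 1--4, Proposition \ref{Vpolynomial.pr} and Theorem \ref{Vdim.cr} — bounds the row sums $\sup_\lambda\sum_{\rho_{\mathcal H}(\lambda,i)>N}|a(\lambda,i)|$ and column sums $\sup_i\sum_{\rho_{\mathcal H}(\lambda,i)>N}|a(\lambda,i)|$ by multiples of $N^{-\alpha+d}\|{\mathbf A}\|_{{\mathcal J}_\alpha({\mathcal G},{\mathcal V})}$, and a Cauchy--Schwarz estimate on $\|({\mathbf A}-{\mathbf A}_N){\mathbf d}\|_2^2$ gives
$$\|({\mathbf A}-{\mathbf A}_N){\mathbf d}\|_2\ \le\ D_1({\mathcal G})\,N^{-\alpha+d}\sqrt{L\alpha/(\alpha-d)}\,\|{\mathbf A}\|_{{\mathcal J}_\alpha({\mathcal G},{\mathcal V})}\,\|{\mathbf d}\|_2 .$$
Putting the two displays together yields
$$\|\chi_{\lambda,G}^{2N}{\mathbf A}\chi_{\lambda,V}^N{\mathbf c}\|_2\ \ge\ \Big(A-D_1({\mathcal G})N^{-\alpha+d}\sqrt{L\alpha/(\alpha-d)}\Big)\|{\mathbf A}\|_{{\mathcal J}_\alpha({\mathcal G},{\mathcal V})}\,\|{\mathbf d}\|_2 ,$$
and \eqref{l2stablility.thm1.eq2} is precisely what forces the bracketed factor to be at least $A/2$; since $\|{\mathbf d}\|_2=\|\chi_{\lambda,V}^N{\mathbf c}\|_2$ and $\lambda\in G$, ${\mathbf c}\in\ell^2$ were arbitrary, this is exactly \eqref{l2stablility.thm1.eq1}.

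I expect the only real work to lie in the band‑approximation step, namely producing the clean, size‑independent constant $D_1({\mathcal G})\sqrt{L\alpha/(\alpha-d)}$. This requires (i) identifying the $\mathcal H$‑balls restricted to $G$ and to $V$ with dilates of ordinary $\mathcal G$‑balls, so that the counting bound \eqref{countmeasure.pr.eq1} applies to each index set, and (ii) writing the $\ell^2\to\ell^2$ norm of ${\mathbf A}-{\mathbf A}_N$ as the geometric mean of its $\ell^1\to\ell^1$ and $\ell^\infty\to\ell^\infty$ norms, which is what produces the square root and keeps the constant free of the size of the DSRS. The remaining ingredients — the support bookkeeping, the triangle inequality, and the final arithmetic — are routine.
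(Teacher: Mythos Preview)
Your overall strategy matches the paper's proof exactly: restrict to ${\bf d}$ supported in $B_{\mathcal H}(\lambda,N)\cap V$, invoke the global $\ell^2$-stability of ${\mathbf A}$, and control the leakage $\|({\mathbf I}-\chi_{\lambda,G}^{2N}){\mathbf A}{\bf d}\|_2$ via the triangle-inequality observation $\rho_{\mathcal H}(\lambda',i)>N$ whenever $\rho_{\mathcal H}(\lambda,\lambda')>2N$ and $\rho_{\mathcal H}(\lambda,i)\le N$.

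The one point that needs adjustment is the leakage constant. Pure Schur interpolation on ${\mathbf A}-{\mathbf A}_N$ --- the geometric mean of the $\ell^\infty\!\to\!\ell^\infty$ and $\ell^1\!\to\!\ell^1$ norms, i.e.\ of the bounds \eqref{bandapproximation.pr.pf.eq1} and \eqref{bandapproximation.pr.pf.eq3} --- yields $D_1({\mathcal G})\sqrt{L}\,\dfrac{\alpha}{\alpha-d}\,N^{-\alpha+d}$, not the $D_1({\mathcal G})\sqrt{L\alpha/(\alpha-d)}\,N^{-\alpha+d}$ you claim; both the row- and column-sum tail bounds carry the factor $\alpha/(\alpha-d)$, so it does not drop under the square root. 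With that constant you would need a hypothesis strictly stronger than \eqref{l2stablility.thm1.eq2}. The paper recovers the sharper constant by exploiting the compact support of ${\bf d}$: in the Cauchy--Schwarz split of $\big|\sum_i a(\lambda',i)d(i)\big|^2$, the factor $\sum_{i\in B_{\mathcal H}(\lambda,N)\cap V}(1+\rho_{\mathcal H}(\lambda',i))^{-\alpha}$ is bounded crudely by $\#\big(B_{\mathcal H}(\lambda,N)\cap V\big)\cdot N^{-\alpha}\le LD_1({\mathcal G})\,N^{-\alpha+d}$ (a pure counting bound, with no $\alpha/(\alpha-d)$), and only the remaining sum over $\lambda'$ uses the tail estimate \eqref{bandapproximation.pr.pf.eq3}. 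That asymmetric bookkeeping is exactly what places $\alpha/(\alpha-d)$ under a single square root and matches condition \eqref{l2stablility.thm1.eq2} on the nose.
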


The above theorem provides a guideline to design a 
 distributed algorithm for signal reconstruction, see Section \ref{distributedalgorithm.section}.
Surprisingly, the converse of Theorem \ref{l2stablility.thm1} is true, 
   cf.  the  stability criterion in \cite[Theorem 2.1]{sunpams10}
for 
convolution-dominated 
matrices.

\begin{Tm}\label{l2stablility.thm}
 Let  ${\mathcal G}, {\mathcal H}, {\mathcal V}$ 
  be as in  Theorem \ref{l2stablility.thm1}, and
 ${\mathbf A}\in {\mathcal J}_\alpha({\mathcal G}, {\mathcal V})$ for some $\alpha>d$.
If there exist a positive constant $A_0$ and an integer $N_0\ge 3$ 
 such that
\begin{equation}\label{l2stablility.thm2.eq0}
 \hskip-0.2in A_0    \ge  4(D_0({\mathcal G}))^{2}D_1({\mathcal G})L N_0^{-\min(\alpha-d, 1)}\times  \left\{\begin{array}{ll}  \Big(\frac{4\alpha}{3(\alpha-d)}+\frac{2(\alpha-1)(\alpha-d)}{\alpha-d-1}\Big) & \hskip-0.1in {\rm if}
\ \alpha>d+1\\
\big(\frac{10(d+1)}{3}+2d\ln N_0\big) & \hskip-0.1in {\rm if} \ \alpha=d+1\\
  \Big(\frac{4\alpha}{3(\alpha-d)}+\frac{4 d}{d+1-\alpha}\Big) & \hskip-0.1in {\rm if} \ \alpha<d+1,
\end{array}\right.
\end{equation}
and for all $\lambda\in G$,
\begin{equation}\label{l2stablility.thm2.eq1}
\|\chi_{\lambda, G}^{2N_0}{\mathbf A}\chi_{\lambda, V}^{N_0} {\bf c}\|_2 \ge A_0
\|{\mathbf A}\|_{{\mathcal J}_\alpha({\mathcal G}, {\mathcal V})}
\|\chi_{\lambda, V}^{N_0}{\bf c}\|_2, \  {\bf c}\in \ell^2,
\end{equation}
then ${\mathbf A}$ has $\ell^2$-stability,
\begin{equation}\label{l2stablility.thm2.eq2}
\|{\bf A} {\bf c}\|_2\ge \frac{A_0\|{\mathbf A}\|_{{\mathcal J}_\alpha({\mathcal G}, {\mathcal V})}}{12(D_0({\mathcal G}))^{2}}\|{\bf c}\|_2, \ {\bf c}\in \ell^2.
\end{equation}
\end{Tm}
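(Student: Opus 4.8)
The plan is to glue the hypothesized local lower bounds \eqref{l2stablility.thm2.eq1} along a covering of $V$ by $N_0$-neighborhoods of the fusion agents, using Lipschitz cutoffs so that the patching error is governed by the off-diagonal decay of ${\mathbf A}$ rather than by sharp boundary terms. First I would fix a scale $N$ of order $N_0$ (with $N=0$ when $N_0$ is small; this is possible precisely because $N_0\ge3$), take a maximal $N$-disjoint set $G_N\subset G$ as in Subsection \ref{covering.subsection}, and pick a piecewise linear $\phi:[0,\infty)\to[0,1]$ that equals $1$ on $[0,N_0/3]$, vanishes on $[2N_0/3,\infty)$, and has Lipschitz constant $3/N_0$. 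Setting $u_m(i):=\phi(\rho_{\mathcal H}(i,\lambda_m))$ on $V$ and $\tilde u_m(\lambda):=\phi(\rho_{\mathcal H}(\lambda,\lambda_m))$ on $G$ for $\lambda_m\in G_N$, one uses that $G_N$ is a $2N$-net of $G$ (by \eqref{vn.def1}) and that each $i\in V$ has an anchor agent at $\rho_{\mathcal H}$-distance one, so that $\rho_{\mathcal H}(i,\lambda_m)\le 1+2N\le N_0/3$ for some $m$; this gives $\sum_{\lambda_m\in G_N}u_m(i)^2\ge 1$ for all $i\in V$. Proposition \ref{covering.pr} then bounds the overlap of the supports of the $u_m$ (and of the $\tilde u_m$) in terms of the structural constants, and by construction $\operatorname{supp}u_m\subset B_{\mathcal H}(\lambda_m,N_0)\cap V$.

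Next I would apply the hypothesis term by term. Since $\chi_{\lambda_m,V}^{N_0}(u_m{\mathbf c})=u_m{\mathbf c}$, inequality \eqref{l2stablility.thm2.eq1} applied with $\lambda=\lambda_m$ to the vector $u_m{\mathbf c}$, followed by discarding the coordinate projection $\chi_{\lambda_m,G}^{2N_0}$, gives $\|{\mathbf A}(u_m{\mathbf c})\|_2\ge A_0\|{\mathbf A}\|_{{\mathcal J}_\alpha({\mathcal G},{\mathcal V})}\,\|u_m{\mathbf c}\|_2$ for every $\lambda_m\in G_N$. Squaring, summing over $G_N$, and using $\sum_{\lambda_m}u_m(i)^2\ge 1$ yields
\[
\|{\mathbf c}\|_2^2\le\sum_{\lambda_m\in G_N}\|u_m{\mathbf c}\|_2^2\le\big(A_0\|{\mathbf A}\|_{{\mathcal J}_\alpha({\mathcal G},{\mathcal V})}\big)^{-2}\sum_{\lambda_m\in G_N}\|{\mathbf A}(u_m{\mathbf c})\|_2^2.
\]
Then I split ${\mathbf A}(u_m{\mathbf c})=\tilde u_m\cdot({\mathbf A}{\mathbf c})+{\mathbf B}_m{\mathbf c}$, where $\tilde u_m\cdot$ is the pointwise multiplication operator on $\ell^2(G)$ and ${\mathbf B}_m:=\big(a(\lambda,i)(u_m(i)-\tilde u_m(\lambda))\big)_{\lambda\in G,\,i\in V}$. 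Because $0\le\tilde u_m\le1$ and the supports of the $\tilde u_m$ have bounded overlap, $\sum_{\lambda_m\in G_N}\|\tilde u_m\cdot({\mathbf A}{\mathbf c})\|_2^2$ is at most a structural constant times $\|{\mathbf A}{\mathbf c}\|_2^2$, so everything reduces to estimating $\sum_{\lambda_m\in G_N}\|{\mathbf B}_m{\mathbf c}\|_2^2$.

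This last estimate is the heart of the argument. The reverse triangle inequality and the Lipschitz bound give $|u_m(i)-\tilde u_m(\lambda)|\le(3/N_0)\rho_{\mathcal H}(\lambda,i)$, so the entries of ${\mathbf B}_m$ are bounded by $(3/N_0)\|{\mathbf A}\|_{{\mathcal J}_\alpha({\mathcal G},{\mathcal V})}(1+\rho_{\mathcal H}(\lambda,i))^{-\alpha+1}$; moreover $u_m(i)-\tilde u_m(\lambda)\ne0$ forces $\lambda$ or $i$ into the annulus $\{v:\ N_0/3\le\rho_{\mathcal H}(v,\lambda_m)\le 2N_0/3\}$, or else $\rho_{\mathcal H}(\lambda,i)\ge N_0/3$. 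Hence the ${\mathbf B}_m$ have ``widths'' of order $N_0$ and bounded mutual overlap, and a Schur test for $\bigoplus_m{\mathbf B}_m\colon\ell^2(V)\to\ell^2(\bigsqcup_m G)$, combined with the polynomial growth of $\mu_{\mathcal G}$ and $\mu$ (Propositions \ref{countmeasure.pr} and \ref{Vpolynomial.pr}), bounds its row and column sums by a fixed power of $D_0({\mathcal G})$ times
\[
\frac{L\,D_1({\mathcal G})}{N_0}\sum_{0\le r\le N_0}(1+r)^{d-\alpha}\,\|{\mathbf A}\|_{{\mathcal J}_\alpha({\mathcal G},{\mathcal V})}+L\,D_1({\mathcal G})\sum_{r>N_0}(1+r)^{d-1-\alpha}\,\|{\mathbf A}\|_{{\mathcal J}_\alpha({\mathcal G},{\mathcal V})}.
\]
The second sum is $\sim N_0^{d-\alpha}/(\alpha-d)$ since $\alpha>d$, while the first is convergent, of size $\sim\ln N_0$, or of size $\sim N_0^{d+1-\alpha}$ according as $\alpha>d+1$, $\alpha=d+1$, or $\alpha<d+1$ — this trichotomy is exactly the origin of the three cases in \eqref{l2stablility.thm2.eq0}. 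Recombining and using the $1/N_0$ factor gives $\sum_{\lambda_m\in G_N}\|{\mathbf B}_m{\mathbf c}\|_2^2\le(\delta_{N_0}\|{\mathbf A}\|_{{\mathcal J}_\alpha({\mathcal G},{\mathcal V})})^2\|{\mathbf c}\|_2^2$, with $\delta_{N_0}$ bounded by one quarter of the right-hand side of \eqref{l2stablility.thm2.eq0}.

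Putting the three displays together, $\|{\mathbf c}\|_2^2\le 2C\,(A_0\|{\mathbf A}\|_{{\mathcal J}_\alpha({\mathcal G},{\mathcal V})})^{-2}\|{\mathbf A}{\mathbf c}\|_2^2+2\delta_{N_0}^2A_0^{-2}\|{\mathbf c}\|_2^2$ with $C$ a power of $D_0({\mathcal G})$, and hypothesis \eqref{l2stablility.thm2.eq0} is designed exactly so that $2\delta_{N_0}^2\le A_0^2/2$; absorbing the last term and taking square roots then yields \eqref{l2stablility.thm2.eq2}. The main obstacle — and the only step where the stated constants (the $N_0^{-\min(\alpha-d,1)}$ rate, the logarithmic correction at $\alpha=d+1$, and the explicit power of $D_0({\mathcal G})$ in \eqref{l2stablility.thm2.eq2}) must be tracked with care — is the Schur-test estimate for $\sum_m\|{\mathbf B}_m{\mathbf c}\|_2^2$ above; a secondary technicality is choosing the scale $N$ so that the covering identity $\sum_m u_m(i)^2\ge 1$ and the bounded-overlap property hold simultaneously for every integer $N_0\ge3$.
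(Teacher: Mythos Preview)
Your proposal is correct and follows essentially the same strategy as the paper: choose Lipschitz cutoffs centered at the fusion agents of a maximal disjoint set, apply the local lower bound \eqref{l2stablility.thm2.eq1} to the localized vectors, sum, and control the resulting commutators ${\mathbf A}(u_m\cdot)-\tilde u_m({\mathbf A}\cdot)$ via the Lipschitz bound $|u_m(i)-\tilde u_m(\lambda)|\le C\,\rho_{\mathcal H}(\lambda,i)/N_0$ combined with the polynomial growth of the counting measures; the trichotomy in \eqref{l2stablility.thm2.eq0} arises for you exactly as it does in the paper (see \eqref{l2stablility.thm.pf.eq6}--\eqref{l2stablility.thm.pf.eq7}). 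The only substantive difference is that the paper first replaces ${\mathbf A}$ by its band approximation ${\mathbf A}_{N_0}$ (Proposition \ref{bandapproximation.pr}), so that the commutator sum is purely over $\rho_{\mathcal H}(\lambda,i)\le N_0$ and the tail is absorbed separately at the very end, whereas you keep ${\mathbf A}$ throughout and split the Schur estimate into a near part (Lipschitz-controlled) and a far part (bounded by $1$); both routes give the same $N_0^{-\min(\alpha-d,1)}$ rate, and your variant is marginally more direct. The paper's concrete choices are the trapezoid cutoff $\psi_0(\rho_{\mathcal H}(\cdot,\lambda_m)/N_0)$ and the covering scale $N_0/4$, which yield the specific factor $1/12$ in \eqref{l2stablility.thm2.eq2}; your more loosely specified $\phi$ and $N$ would need to be pinned down to recover that exact constant, but the structure of the argument is the same.
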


 Observe that the right hand side of \eqref{l2stablility.thm2.eq0}
could be arbitrarily small when $N_0$ is sufficiently large. This together with Theorem \ref{l2stablility.thm1}
implies that the requirements \eqref{l2stablility.thm2.eq0} and \eqref{l2stablility.thm2.eq1}
are necessary for the $\ell^2$-stability property of any matrix 
in ${\mathcal J}_{\alpha}({\mathcal G}, {\mathcal V})$. As shown in the example below, the term $N_0^{-\min(\alpha-d, 1)}$
in \eqref{l2stablility.thm2.eq0}
cannot be replaced by  $N_0^{-\beta}$  with high order $\beta>1$ even if the matrix ${\bf A}$ has finite bandwidth.
\begin{Ex} {\rm  Let ${\bf A}_0=(a_0(i-j))_{i,j\in \ZZ}$ be the bi-infinite Toeplitz matrix
with symbol $\sum_{k\in \ZZ} a_0(k)e^{-ik\xi}= 1-e^{-i\xi}$. 
Then ${\bf A}_0$ belongs to the Jaffard class ${\mathcal J}_\alpha(\ZZ, \ZZ)$ for all $\alpha\ge 0$ and it does not have $\ell^2$-stability. On the other hand,  for any $\lambda\in G=V=\ZZ$ and $N_0\ge 1$,
\begin{eqnarray*} & & \inf_{ \|\chi_{\lambda, V}^{N_0} {\bf c}\|_2=1}
\|\chi_{\lambda, G}^{2N_0}{\mathbf A}_0\chi_{\lambda, V}^{N_0} {\bf c}\|_2=
\inf_{ \|\chi_{\lambda, V}^{N_0} {\bf c}\|_2=1}
\|{\mathbf A}_0\chi_{\lambda, V}^{N_0} {\bf c}\|_2\\
& = & \inf_{|d_1|^2+\cdots+ |d_{2N_0+1}|^2=1}
\sqrt{|d_1|^2+|d_1-d_2|^2+\cdots+|d_{2N_0}-d_{2N_0+1}|^2+|d_{2N_0+1}|^2}\\
& = & 2\sin \frac{ \pi}{4N_0+4} 
\ge \frac{1}{2} N_0^{-1},
\end{eqnarray*}
where the last equality follows from \cite[Lemma 1 of Chapter 9]{kincaid}.
}\end{Ex}

For our DSRS with sensing matrix ${\bf S}$ having the polynomial off-diagonal decay  property \eqref{sensingmatrix.assumption}, the uniform stability property \eqref{l2stablility.thm2.eq1}  could be verified by finding minimal eigenvalues of its quasi-main submatrices  $ \chi_{\lambda, V}^{N_0}{\mathbf S}^T\chi_{\lambda, G}^{2N_0}{\mathbf S}\chi_{\lambda, V}^{N_0}, \lambda\in G$, of size about $O(N_0^d)$.
The  above  verification  could be implemented  on agents in the DSRS
via its computing and communication abilities.
This  provides a practical tool to verify  $\ell^2$-stability
of a DSRS and to design a robust (dynamic) DSRS against
 supplement, replacement and impairment of agents.

\section{Exponential convergence of a  distributed reconstruction algorithm}
\label{distributedalgorithm.section}

In our DSRS, 
agents could essentially catch signals not far from their locations.
So one may expect that  a signal  near any innovative position should
  substantially be determined by sampling data of neighboring agents, while
  data from  distant agents should have (almost) no influence in the reconstruction.
 The most desirable method to meet the above expectation is local exact reconstruction,
  which could be implemented in a distributed manner without iterations (\cite{Akram00, Grochenig03,  sunaicm10, sunW09}).
In such a linear  reconstruction procedure,  there is 
a left-inverse $\mathbf T$  of the sensing matrix ${\bf S}$ with finite bandwidth, 
   \begin{equation*}\label{localinverse}
  {\bf T}{\bf S}={\bf I}.
  \end{equation*}

 For our DSRS, such a left-inverse $\mathbf T$ with finite bandwidth 
may not exist
and/or it is difficult to find even it exists. We observe that $${\mathbf S}^\dag:=({\mathbf S}^T{\mathbf S})^{-1}{\mathbf S}^T$$
 is a left-inverse 
  well  approximated by matrices with finite bandwidth, and 
 \begin{equation}\label{leastsquare2.eq}
 {\bf d}_2=  {\mathbf S}^\dag{\bf z}
\end{equation}
is a suboptimal  approximation, 
where ${\bf z}$  is given in \eqref{noisysamples}.
However, it is  infeasible to find
the pseudo-inverse ${\mathbf S}^\dag$, 
 because the DSRS does not have a central processor and it has
huge amounts of agents and large number of innovative positions.
In this section, we introduce a  distributed  algorithm
to find the suboptimal approximation ${\bf d}_2$ in \eqref{leastsquare2.eq}.

\smallskip

Let ${\mathcal H}$  
be the  connected simple
graph in \eqref{h.graph}  to describe our DSRS,  
and the sensing matrix ${\mathbf S}\in {\mathcal J}_\alpha({\mathcal G}, {\mathcal V}), \alpha>d$, have $\ell^2$-stability.
Then ${\bf d}_2$ in \eqref{leastsquare2.eq}
 is the unique solution to
the ``normal'' equation
 \begin{equation}\label{normal.eqn}
 {\mathbf S}^T{\mathbf S} {\bf d}_2={\bf S}^T{\bf z}.
 \end{equation}
As
 principal submatrices
$\chi_{\lambda, V}^{N}{\mathbf S}^T {\mathbf S} \chi_{\lambda, V}^{N}$
of  the positive definite matrix ${\bf S}^T{\bf S}$ are uniformly stable, 
 we solve localized
linear systems
\begin{equation}\label{localinverse.eqn}
\chi_{\lambda, V}^{N}{\mathbf S}^T {\mathbf S} \chi_{\lambda, V}^{N}
{\bf d}_{\lambda,N}= \chi_{\lambda, V}^{N} {\bf S}^T {\bf z},\ \lambda\in G,\end{equation}
 of size  $O(N^d)$,  whose   solutions  ${\bf d}_{\lambda, N}$ are supported in the ball $B_{\mathcal{H}}(\lambda, N)\cap V$.
 One of {\bf crucial} results of this paper is that for large integer $N$,
 the solution ${\bf d}_{\lambda, N}$ provides a reasonable approximation of
the ``least squares" solution ${\bf d}_{2}$ 
 inside the half ball $B_{\mathcal{H}}(\lambda, N/2)\cap V$, 
  see \eqref{convergence.prop.eq2}
 in Proposition \ref{convergence.prop}.
However, the above local approximation 
can not be implemented distributedly in the DSRS, as only agents
 on the graph ${\mathcal G}$ have computing and telecommunication ability.
So we propose to compute
\begin{equation}\label{wlambdan.def}
 {\bf w}_{\lambda, N} := \chi_{\lambda, G}^N {\bf S}
 \chi_{\lambda, V}^N
 (\chi_{\lambda, V}^{N}{\mathbf S}^T {\mathbf S} \chi_{\lambda, V}^{N})^{-1}
 {\bf d}_{\lambda, N}
  =  \chi_{\lambda, G}^N {\bf S}
 \chi_{\lambda, V}^N
 (\chi_{\lambda, V}^{N}{\mathbf S}^T {\mathbf S} \chi_{\lambda, V}^{N})^{-2} \chi_{\lambda, V}^N {\bf S}^T {\bf z}\end{equation}
instead, which approximates 
\begin{equation}\label{wls.def}
{\bf w}_{LS}:= {\bf S} ({\bf S}^T{\bf S})^{-1} {\bf d}_2\end{equation}
 inside $B_{\mathcal{G}}(\lambda, N/2)\cap G$, see \eqref{convergence.prop.eq3}
 in the proposition below.

\begin{pr} \label{convergence.prop}
Let  ${\mathcal G}$ and ${\mathcal H}$
satisfy Assumptions 1 -- 4,
${\mathcal V}$ be as in \eqref{v.graph}, and let the sensing matrix 
 ${\mathbf S}\in {\mathcal J}_\alpha({\mathcal G}, {\mathcal V}), \alpha>d$,
  have $\ell^2$-stability with lower stability bound $A \|{\bf S}\|_{{\mathcal J}_\alpha({\mathcal G}, {\mathcal V})}$. Take an integer $N$ satisfying \eqref{l2stablility.thm1.eq2}, and
   set
 $$\theta=\frac{2\alpha-2d}{2\alpha-d}\in (0,1)\ \ {\rm and} \ \ r_0=1-\frac{A^2(\alpha-d)^2}{2^{\alpha+1}D_1D_1(\mathcal{G})\alpha^2}.$$
Then
\begin{equation}\label{convergence.prop.eq2}
\hskip-0.02in
\|\chi_{\lambda, V}^{N/2} ({\bf d}_{\lambda,N}-{\bf d}_{2})\|_{\infty} \le
D_3 (N+1)^{-\alpha+d} \|{\bf d}_2\|_\infty
\end{equation}
 and
\begin{equation} \label{convergence.prop.eq3}
\hskip-0.02in
 \|\chi_{\lambda, G}^{N/2}  ({\bf w}_{\lambda, N}- {\bf w}_{LS})\|_\infty\le
D_4 (N+1)^{-\alpha+d} \|{\bf d}_2\|_\infty,
\end{equation}
where
$D_3=\frac{2^{2\alpha-d+1}\alpha D_1D_2}{\alpha-d}$, $D_4=\big(\frac{2^{3\alpha-d+3}\alpha L^2D_1(\mathcal{G})D_2^2}{\alpha-d}+LD_2\big)\|\mathbf{S}\|_{\mathcal{J}_\alpha(\mathcal{G},\mathcal{V})}^{-1}$,
 and
\begin{equation}\label{convergence.prop.eq4}
D_2=\sum \limits_{n=0}^{\infty}\Big(\frac{2^{2\alpha+d/2+4}D_1^3\alpha^2}{r_0^{1-\theta}(\alpha-d)^2}\Big)^{\frac{2-\theta}{(1-\theta)^2}n^{\log_2^{(2-\theta)}}}r_0^n.
\end{equation}
\end{pr}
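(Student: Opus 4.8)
The plan is to reduce both estimates to quantitative off-diagonal decay bounds for the inverses of $\mathbf B:=\mathbf S^T\mathbf S$ and of its localized blocks, and then to exploit the geometric separation between $B_{\mathcal H}(\lambda,N)$ and its complement. Fix $\lambda\in G$ and $N$ as in \eqref{l2stablility.thm1.eq2}, and write $P:=\chi^N_{\lambda,V}$ and $\mathbf B_P:=P\mathbf B P$, regarded as a matrix on $\ell^2(B_{\mathcal H}(\lambda,N)\cap V)$. First I would record that $\mathbf B\in{\mathcal J}_\alpha({\mathcal V})$: this follows from $\mathbf S\in{\mathcal J}_\alpha({\mathcal G},{\mathcal V})$ by the standard composition estimate for Jaffard classes, summing $(1+\rho_{\mathcal H}(\lambda,i))^{-\alpha}(1+\rho_{\mathcal H}(\lambda,i'))^{-\alpha}$ over $\lambda\in G$ via the triangle inequality, \eqref{rhorhoH.eq}, and the polynomial growth of $\mu_{\mathcal G}$ (here $\alpha>d$ is used). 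Since $\mathbf S$ has $\ell^2$-stability, $\mathbf B$ is boundedly invertible on $\ell^2$, so Theorem~\ref{wienerlemma.tm} yields $\mathbf B^{-1}\in{\mathcal J}_\alpha({\mathcal V})$; rescaling by $c:=\|\mathbf S\|_{{\mathcal B}^2}^{-2}$ so that $\|\mathbf I-c\mathbf B\|_{{\mathcal B}^2}\le r_0<1$, expanding $\mathbf B^{-1}=c\sum_{n\ge0}(\mathbf I-c\mathbf B)^n$, and inserting a power-iteration estimate of the form $\|(\mathbf I-c\mathbf B)^n\|_{{\mathcal J}_\alpha}\le\big(\text{const}\big)^{(2-\theta)(1-\theta)^{-2}n^{\log_2(2-\theta)}}r_0^{\,n}$ — which is exactly what dictates the choices of $\theta$ and $r_0$ and produces the convergent series $D_2$ in \eqref{convergence.prop.eq4} — bounds $\|\mathbf B^{-1}\|_{{\mathcal J}_\alpha({\mathcal V})}$ by a multiple of $D_2$.

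The key observation is that the same computation applies \emph{uniformly} to the localized blocks. By Theorem~\ref{l2stablility.thm1}, $\chi^{2N}_{\lambda,G}\mathbf S P$ has lower $\ell^2$-stability bound $\tfrac A2\|\mathbf S\|_{{\mathcal J}_\alpha({\mathcal G},{\mathcal V})}$, and since $\mathbf B-\mathbf S^T\chi^{2N}_{\lambda,G}\mathbf S=\mathbf S^T(\mathbf I-\chi^{2N}_{\lambda,G})\mathbf S\ge0$, the matrix $\mathbf B_P$ is invertible on $\ell^2(B_{\mathcal H}(\lambda,N)\cap V)$ with eigenvalues trapped between $\tfrac{A^2}4\|\mathbf S\|_{{\mathcal J}_\alpha}^2$ and $\|\mathbf S\|_{{\mathcal B}^2}^2$ (so $\|\mathbf I-c\mathbf B_P\|_{{\mathcal B}^2}\le r_0$ with the \emph{same} $c$ and $r_0$), while $\|\mathbf B_P\|_{{\mathcal J}_\alpha}\le\|\mathbf B\|_{{\mathcal J}_\alpha}$. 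Hence the same Neumann-series argument, now over the finite set $B_{\mathcal H}(\lambda,N)\cap V$, gives $\|\mathbf B_P^{-1}\|_{{\mathcal J}_\alpha}\le(\text{const})\,D_2$ with constants independent of $\lambda$ and $N$; in particular $\mathbf d_{\lambda,N}=\mathbf B_P^{-1}P\mathbf S^T\mathbf z$ is well defined.

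Granting these decay bounds, \eqref{convergence.prop.eq2} follows from the identity $\mathbf d_{\lambda,N}-P\mathbf d_2=\mathbf B_P^{-1}P\mathbf B(\mathbf I-P)\mathbf d_2$, obtained by subtracting $\mathbf B_P P\mathbf d_2=P\mathbf B P\mathbf d_2$ from $\mathbf B_P\mathbf d_{\lambda,N}=P\mathbf B\mathbf d_2$. For $j\in B_{\mathcal H}(\lambda,N)\cap V$ the entry $[P\mathbf B(\mathbf I-P)\mathbf d_2](j)$ involves only $i'$ with $\rho(j,i')>N-\rho(\lambda,j)$, so the decay of $\mathbf B$ together with the tail-sum estimate underlying Propositions~\ref{jaffard.pr} and~\ref{bandapproximation.pr} (which costs the factor $\alpha/(\alpha-d)$ and a power of $D_1$) bounds it by $(\text{const})(N-\rho(\lambda,j)+1)^{-\alpha+d}\|\mathbf d_2\|_\infty$. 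Applying $\mathbf B_P^{-1}$ and restricting to $i\in B_{\mathcal H}(\lambda,N/2)\cap V$, one splits the $j$-sum at $\rho(i,j)=N/4$: when $\rho(i,j)<N/4$ one has $\rho(\lambda,j)\le 3N/4$, hence $N-\rho(\lambda,j)\ge N/4$; when $\rho(i,j)\ge N/4$ the decay of $\mathbf B_P^{-1}$ supplies $(1+N/4)^{-\alpha}$. In either case a factor $(N+1)^{-\alpha+d}$ is extracted and the residual sums converge because $\alpha>d$ and $\mathbf B_P^{-1}\in{\mathcal J}_\alpha({\mathcal V})$, giving \eqref{convergence.prop.eq2} with $D_3$ as stated. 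Estimate \eqref{convergence.prop.eq3} is handled the same way after decomposing $\mathbf w_{\lambda,N}-\mathbf w_{LS}=\chi^{N/2}_{\lambda,G}\mathbf S\mathbf B_P^{-1}(\mathbf d_{\lambda,N}-P\mathbf d_2)+\chi^{N/2}_{\lambda,G}\mathbf S(\mathbf B_P^{-1}P-\mathbf B^{-1})\mathbf d_2$, together with the identity $\mathbf B_P^{-1}P-\mathbf B^{-1}=\mathbf B_P^{-1}P\mathbf B(\mathbf I-P)\mathbf B^{-1}-(\mathbf I-P)\mathbf B^{-1}$, the bound $\|\mathbf B^{-1}\mathbf d_2\|_\infty\le(\text{const})D_2\|\mathbf d_2\|_\infty$ from Proposition~\ref{jaffard.pr}, and the extra off-diagonal decay of $\mathbf S$ itself for the part of the sum over $G$ that is far from $B_{\mathcal H}(\lambda,N)$; this is where the second power of $D_2$ and the extra factors $L$, $D_1({\mathcal G})$, $\|\mathbf S\|_{{\mathcal J}_\alpha}^{-1}$ enter $D_4$.

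The main obstacle is the uniform quantitative Wiener's lemma in the second paragraph: upgrading the qualitative conclusion $\mathbf B^{-1}\in{\mathcal J}_\alpha({\mathcal V})$ of Theorem~\ref{wienerlemma.tm} to an \emph{explicit} bound on $\|\mathbf B_P^{-1}\|_{{\mathcal J}_\alpha}$ that does not blow up as $N\to\infty$. This is precisely the purpose of the norm-iteration estimate hidden in the constant $D_2$ of \eqref{convergence.prop.eq4}, and checking that its hypotheses — uniform $\ell^2$-stability and uniform ${\mathcal J}_\alpha$-bound for the blocks $\mathbf B_P$, inherited verbatim from $\mathbf B$ via Theorem~\ref{l2stablility.thm1} — actually hold, and then propagating the $(N+1)^{-\alpha+d}$ factor cleanly through two nested matrix applications, is the real content; the geometric separation estimates are routine once Proposition~\ref{bandapproximation.pr} is in hand.
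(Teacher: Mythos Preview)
Your approach is essentially the paper's: the heart of the proof is the uniform Jaffard bound $\|\mathbf B_P^{-1}\|_{\mathcal J_\alpha(\mathcal V)}\le C\,D_2$ via the quantitative Neumann/iteration argument (which the paper isolates as a separate proposition), followed by the same algebraic identity $\mathbf d_{\lambda,N}-P\mathbf d_2=\mathbf B_P^{-1}P\mathbf B(\mathbf I-P)\mathbf d_2$, and your expansion of $\mathbf B_P^{-1}P-\mathbf B^{-1}$ reproduces exactly the paper's three-term decomposition of $\chi_{\lambda,G}^{N/2}(\mathbf w_{\lambda,N}-\mathbf w_{LS})$.

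Two places where the paper is more direct and where your version loses the stated constants. First, the lower spectral bound on $\mathbf B_P$ does not need Theorem~\ref{l2stablility.thm1}: since $\langle \mathbf B_P\mathbf x,\mathbf x\rangle=\|\mathbf S P\mathbf x\|_2^2\ge A^2\|\mathbf S\|_{\mathcal J_\alpha}^2\|P\mathbf x\|_2^2$ by the \emph{global} $\ell^2$-stability of $\mathbf S$, you get $A^2$ rather than $A^2/4$, and with the paper's normalizing constant $c=(\alpha-d)^2\big/\big(2^{\alpha+1}\alpha^2 D_1 D_1(\mathcal G)\|\mathbf S\|_{\mathcal J_\alpha}^2\big)$ (not $\|\mathbf S\|_{\mathcal B^2}^{-2}$) this is exactly what produces the stated $r_0$. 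Second, the paper avoids your two-stage tail argument with the split at $N/4$: it simply treats $\mathbf B_P^{-1}P\,\mathbf S^T\mathbf S$ as a single element of $\mathcal J_\alpha(\mathcal V)$ via the Banach-algebra bound, so that its $(i,j)$ entry is dominated by $\|\cdot\|_{\mathcal J_\alpha}(1+\rho(i,j))^{-\alpha}$, and then the restriction $i\in B_{\mathcal H}(\lambda,N/2)\cap V$, $j\notin B_{\mathcal H}(\lambda,N)\cap V$ forces $\rho(i,j)>N/2$, giving the $(N+1)^{-\alpha+d}$ factor in one step.
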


Take  a maximal $\frac{N}4$-disjoint subset $G_{N/4}\subset G$ satisfying \eqref{vn.def1} and \eqref{vn.def2}.
We patch
  ${\bf w}_{\lambda,N}, \lambda\in G_{N/4}$, in \eqref{wlambdan.def}
 together to generate a linear approximation
\begin{equation}\label{solutionpataching}
{\bf w}_N^*=\sum_{\lambda\in G_{N/4}} {\pmb \Theta}_{\lambda, N} \chi_{\lambda, G}^{N/2} {\bf w}_{\lambda,N}\end{equation}
of the bounded vector  ${\bf w}_{LS}$,
where ${\pmb \Theta}_{\lambda, N}$ is a diagonal matrix with diagonal  entries
$${\theta}_{\lambda, N}(\lambda^{\prime\prime}) = \frac{\chi_{B_{\mathcal G}(\lambda,N/2)}(\lambda^{\prime\prime})} {\sum_{\lambda'\in G_{N/4}} \chi_{B_{\mathcal G}(\lambda',N/2)}(\lambda^{\prime\prime})}, \ \lambda^{\prime\prime}\in G.
$$
The above approximation is well-defined
as 
$ \{B_{\mathcal G}(\lambda', N/2), \ \lambda'\in G_{N/4}\}$ is a finite covering of $G$
by  \eqref{hgball.eq} and Proposition \ref{covering.pr}.
Moreover, we obtain from Proposition \ref{convergence.prop} that
\begin{eqnarray}\label{rxerror}
 \|{\bf w}^*_N-{\bf w}_{LS}\|_\infty
&\hskip-0.08in = & \hskip-0.08in
\Big\|\sum_{\lambda\in G_{N/4}} {\pmb \Theta}_{\lambda, N}
\chi_{\lambda, G}^{N/2}({\bf w}_{\lambda,N}-{\bf w}_{LS})\Big\|_\infty \nonumber\\
&\hskip-0.08in \le & \hskip-0.08in \sup_{\lambda^{\prime\prime} \in G}
\sum_{\lambda\in G_{N/4}} \theta_{\lambda, N}(\lambda^{\prime\prime})
   \| \chi_{\lambda, G}^{N/2}({\bf w}_{\lambda,N}-{\bf w}_{LS})\|_\infty\nonumber\\
  &\hskip-0.08in \le & \hskip-0.08in  D_4 (N+1)^{-\alpha+d}
\|{\bf d}_2\|_\infty.
\end{eqnarray}
Therefore, the  moving  consensus ${\bf w}^*_N$  of  ${\bf w}_{\lambda,N}, \lambda\in G_{N/4}$,
 provides a good approximation to ${\bf w}_{LS}$ in \eqref{wls.def} for large $N$.
In addition, ${\bf w}^*_N$ depends on the observation ${\bf z}$ linearly,
\begin{equation}\label{wN*.def}
{\bf w}_N^*= {\bf R}_N {\bf S}^T {\bf z}
\end{equation}
for some matrix ${\bf R}_N$ with bandwidth $2N$
and
\begin{equation}\label{con.4}
\|{\bf R}_N\|_{{\mathcal J}_\alpha({\mathcal G}, {\mathcal V})}\le D_5:=\frac{(\alpha-d)^2LD_2^2}{\alpha^2D_1D_1(\mathcal{G})\|\mathbf{S}\|_{{\mathcal J}_\alpha({\mathcal G}, {\mathcal V})}^3}.
\end{equation}

Given noisy samples ${\bf z}$, we may use
${\bf w}_N^*$ in \eqref{wN*.def} as the first approximation of ${\bf w}_{LS}$,
\begin{equation}\label{wn.def0}
{\bf w}_1={\bf R}_N {\bf S}^T {\bf z}\end{equation}
and recursively define
\begin{equation}\label{wn.def}
{\bf w}_{n+1}= {\bf w}_n+{\bf w}_1-{\bf R}_N {\bf S}^T {\bf S} {\bf S}^T {\bf w}_n,
\ n\ge 1.
\end{equation}
In the next theorem, we show that the above sequence ${\bf w}_n, n\ge 1$, converges  exponentially to some bounded vector ${\bf w}$, not necessarily ${\bf w}_{LS}$,
satisfying the consistent condition \begin{equation}\label{con.6}
{\bf S}^T {\bf w}={\bf S}^T {\bf w}_{LS}  ={\bf d}_2.
\end{equation}

\begin{Tm}\label{convergence.thm}
Let  ${\mathcal G},{\mathcal H}$  and ${\mathcal V}$
be as in Proposition \ref{convergence.prop}, and let ${\bf w}_n, n\ge 1$, be as in  \eqref{wn.def0} and \eqref{wn.def}.
 Suppose that $N$ satisfies \eqref{l2stablility.thm1.eq2} and
 \begin{equation}\label{anotherrequirementonN}
r_1:=  \frac{D_1(\mathcal{G})D_4L\alpha}{\alpha-d} \|{\bf S}\|_{{\mathcal J}_\alpha(\mathcal{G},\mathcal{V})} (N+1)^{-\alpha+d}<1.\end{equation}
Set $$D_6=\frac{2^{2\alpha+2}\alpha L^3(D_1(\mathcal{G}))^2D_2^2}{(\alpha-d)(1-r_1)D_1\|{\bf S}\|_{{\mathcal J}_\alpha(\mathcal{G},\mathcal{V})}}.$$
Then ${\bf w}_n$ and  ${\bf S}^T{\bf w}_n, n\ge 1$,
converge exponentially to a bounded vector ${\bf w}$ in \eqref{con.6} and the ``least squares" solution ${\bf d}_2$ in \eqref{leastsquare2.eq} respectively,
\begin{equation}\label{convergence.thm.eq1}
\|{\bf w}_n-{\bf w}\|_\infty \le D_6 r_1^n \|{\bf d}_{2}\|_\infty
\end{equation}
and
\begin{equation}\label{convergence.thm.eq2}
\|{\bf S}^T{\bf w}_n-{\bf d}_2\|_\infty \le \frac{D_1(\mathcal{G})D_6L\alpha }{\alpha-d}\|\mathbf{S}\|_{{\mathcal J}_\alpha({\mathcal G}, {\mathcal V})}r_1^n \|{\bf d}_{2}\|_\infty, \  n\ge 1.
\end{equation}
\end{Tm}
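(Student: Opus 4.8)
The plan is to run the argument simultaneously on the two sequences $({\bf w}_n)$ and $({\bf S}^T{\bf w}_n)$: I will show that $({\bf S}^T{\bf w}_n)$ satisfies a genuinely contractive affine recursion on $\ell^\infty(V)$ whose fixed point is the ``least squares'' solution ${\bf d}_2$, and then read off convergence of $({\bf w}_n)$ itself by summing a geometric series of increments. First I would record the standing facts. Since ${\bf S}\in{\mathcal J}_\alpha({\mathcal G},{\mathcal V})$ with $\alpha>d$ has $\ell^2$-stability, Theorem \ref{lpstability.tm} gives $\ell^\infty$-stability and the Wiener lemma (Theorem \ref{wienerlemma.tm}) gives $({\bf S}^T{\bf S})^{-1}\in{\mathcal J}_\alpha({\mathcal V})$; hence ${\bf d}_2$, ${\bf w}_{LS}$ and the band matrix ${\bf R}_N$ of \eqref{wN*.def} are well-defined bounded operators, and the normal equation \eqref{normal.eqn}, the identity ${\bf S}^T{\bf w}_{LS}={\bf d}_2$, and ${\bf w}_1={\bf R}_N{\bf S}^T{\bf z}={\bf R}_N{\bf S}^T{\bf S}{\bf d}_2$ all hold. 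Throughout I abbreviate ${\bf B}:={\bf S}^T{\bf R}_N{\bf S}^T{\bf S}$, a bounded operator on $\ell^\infty(V)$, and I use that ${\bf S}^T$ is bounded on $\ell^\infty$ with $\|{\bf S}^T\|_{\ell^\infty\to\ell^\infty}\le\frac{D_1({\mathcal G})L\alpha}{\alpha-d}\|{\bf S}\|_{{\mathcal J}_\alpha({\mathcal G},{\mathcal V})}$, by Proposition \ref{jaffard.pr} applied to the transpose (whose Jaffard norm equals that of ${\bf S}$, as $\rho_{\mathcal H}$ is symmetric).

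The decisive step is to upgrade the single-vector bound \eqref{rxerror} into the operator estimate $\|{\bf I}-{\bf B}\|_{\ell^\infty\to\ell^\infty}\le r_1<1$. To this end I would apply Proposition \ref{convergence.prop} (hence \eqref{rxerror}) not to the observed noisy data but to the \emph{noiseless} sample vector ${\bf S}{\bf c}$, for an arbitrary ${\bf c}\in\ell^\infty(V)$: for this input the corresponding ``least squares'' solution is $({\bf S}^T{\bf S})^{-1}{\bf S}^T{\bf S}{\bf c}={\bf c}$, the corresponding ${\bf w}_{LS}$-term is ${\bf S}({\bf S}^T{\bf S})^{-1}{\bf c}$, and the corresponding ${\bf w}_1$-term is ${\bf R}_N{\bf S}^T{\bf S}{\bf c}$, so \eqref{rxerror} reads
\[
\|{\bf R}_N{\bf S}^T{\bf S}{\bf c}-{\bf S}({\bf S}^T{\bf S})^{-1}{\bf c}\|_\infty\le D_4(N+1)^{-\alpha+d}\|{\bf c}\|_\infty .
\]
Applying ${\bf S}^T$ and using ${\bf S}^T{\bf S}({\bf S}^T{\bf S})^{-1}{\bf c}={\bf c}$ gives $\|{\bf B}{\bf c}-{\bf c}\|_\infty\le\frac{D_1({\mathcal G})L\alpha}{\alpha-d}\|{\bf S}\|_{{\mathcal J}_\alpha({\mathcal G},{\mathcal V})}D_4(N+1)^{-\alpha+d}\|{\bf c}\|_\infty=r_1\|{\bf c}\|_\infty$ with $r_1$ precisely as in \eqref{anotherrequirementonN}; since ${\bf c}$ is arbitrary this gives $\|{\bf I}-{\bf B}\|_{\ell^\infty\to\ell^\infty}\le r_1$, which is $<1$ by the hypothesis on $N$.

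The two convergence statements are then routine. Applying ${\bf S}^T$ to \eqref{wn.def0}--\eqref{wn.def} and using ${\bf S}^T{\bf w}_1={\bf B}{\bf d}_2$ (from \eqref{normal.eqn}) yields ${\bf S}^T{\bf w}_{n+1}-{\bf d}_2=({\bf I}-{\bf B})({\bf S}^T{\bf w}_n-{\bf d}_2)$ for $n\ge1$, together with ${\bf S}^T{\bf w}_1-{\bf d}_2=-({\bf I}-{\bf B}){\bf d}_2$; iterating, $\|{\bf S}^T{\bf w}_n-{\bf d}_2\|_\infty\le r_1^{\,n}\|{\bf d}_2\|_\infty$. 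For $({\bf w}_n)$, reading \eqref{wn.def} and using ${\bf w}_1={\bf R}_N{\bf S}^T{\bf S}{\bf d}_2$ gives ${\bf w}_{n+1}-{\bf w}_n={\bf R}_N{\bf S}^T{\bf S}({\bf d}_2-{\bf S}^T{\bf w}_n)$, so $\|{\bf w}_{n+1}-{\bf w}_n\|_\infty\le\|{\bf R}_N{\bf S}^T{\bf S}\|_{\ell^\infty\to\ell^\infty}r_1^{\,n}\|{\bf d}_2\|_\infty$; here $\|{\bf R}_N{\bf S}^T{\bf S}\|_{\ell^\infty\to\ell^\infty}$ is estimated from Proposition \ref{jaffard.pr}, the bound \eqref{con.4} on $\|{\bf R}_N\|_{{\mathcal J}_\alpha}$, and $\|{\bf S}\|_{{\mathcal J}_\alpha({\mathcal G},{\mathcal V})}$. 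Summing the geometric series shows $({\bf w}_n)$ is Cauchy in $\ell^\infty$ with a limit ${\bf w}$ satisfying $\|{\bf w}_n-{\bf w}\|_\infty\le\frac{\|{\bf R}_N{\bf S}^T{\bf S}\|_{\ell^\infty\to\ell^\infty}}{1-r_1}r_1^{\,n}\|{\bf d}_2\|_\infty$, and collecting the localization constants from \eqref{con.4}, Proposition \ref{jaffard.pr} and \eqref{convergence.prop.eq4} puts this in the form \eqref{convergence.thm.eq1}. Passing to the limit (with ${\bf S}^T$ continuous on $\ell^\infty$) gives ${\bf S}^T{\bf w}=\lim_n{\bf S}^T{\bf w}_n={\bf d}_2={\bf S}^T{\bf w}_{LS}$, i.e.\ \eqref{con.6}; the limit need not equal ${\bf w}_{LS}$, because every increment ${\bf w}_{n+1}-{\bf w}_n$ lies in the range of ${\bf R}_N{\bf S}^T{\bf S}$ and so cannot alter the part of ${\bf w}_1$ transversal to that range. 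Finally \eqref{convergence.thm.eq2} follows from \eqref{convergence.thm.eq1} since $\|{\bf S}^T{\bf w}_n-{\bf d}_2\|_\infty=\|{\bf S}^T({\bf w}_n-{\bf w})\|_\infty\le\frac{D_1({\mathcal G})L\alpha}{\alpha-d}\|{\bf S}\|_{{\mathcal J}_\alpha({\mathcal G},{\mathcal V})}D_6 r_1^{\,n}\|{\bf d}_2\|_\infty$.

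I expect the only real obstacle to be the decisive step above: recognizing that \eqref{rxerror}, which on its face controls just one particular reconstruction, becomes the operator contraction $\|{\bf I}-{\bf B}\|_{\ell^\infty\to\ell^\infty}\le r_1$ the moment it is fed the noiseless data ${\bf S}{\bf c}$ — and this is exactly why $r_1$ in \eqref{anotherrequirementonN} is defined the way it is. After that, everything is linear algebra together with bookkeeping of the constants already assembled in Proposition \ref{convergence.prop} and \eqref{con.4}, plus the requirement that $N$ be large enough that \eqref{l2stablility.thm1.eq2} holds (so Proposition \ref{convergence.prop} applies) and $r_1<1$.
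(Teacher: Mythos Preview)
Your proof is correct and takes essentially the same approach as the paper's: the paper sets ${\bf u}_n:={\bf S}^T{\bf w}_n-{\bf d}_2$ and applies \eqref{rxerror} with input ${\bf z}={\bf S}{\bf u}_n$ to obtain $\|{\bf u}_{n+1}\|_\infty\le r_1\|{\bf u}_n\|_\infty$, which is exactly your operator contraction $\|{\bf I}-{\bf S}^T{\bf R}_N{\bf S}^T{\bf S}\|_{\ell^\infty\to\ell^\infty}\le r_1$ evaluated at ${\bf c}={\bf u}_n$. The remaining steps---the increment bound on ${\bf w}_{n+1}-{\bf w}_n$ via \eqref{con.4} and Proposition~\ref{jaffard.pr}, summing the geometric series, and deducing \eqref{con.6} and \eqref{convergence.thm.eq2}---match the paper line by line.
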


By the above theorem, each agent  should have minimal storage,  computing, and telecommunication capabilities.
 Furthermore, the algorithm \eqref{wn.def0} and \eqref{wn.def} will have faster convergence (hence less delay for signal reconstruction) by selecting large $N$ when agents  have
larger storage, more computing power,  and higher telecommunication capabilities.  In addition,
no iteration is needed for sufficiently large $N$,
 and the reconstructed signal is  approximately to the one obtained by the finite-section method,
 cf.  \cite{Christensen05} and  simulations in  Section \ref{simulations.section}.

The iterative algorithm \eqref{wn.def0} and \eqref{wn.def}
can be recast as follows:
\begin{equation}\label{wn.def1}
{\bf w}_1={\bf R}_N {\bf S}^T {\bf z} \ \  {\rm and}\ \
{\bf e}_1={\bf w}_1-{\bf R}_N {\bf S}^T {\bf S} {\bf S}^T {\bf w}_1,
\end{equation}
and
\begin{equation}\label{wn.def2}
\left\{\begin{array}{l}
{\bf w}_{n+1}= {\bf w}_n+ {\bf e}_n\\
{\bf e}_{n+1}= {\bf e}_n-{\bf R}_N {\bf S}^T {\bf S} {\bf S}^T {\bf e}_n,
\ n\ge 1.
\end{array} \right.
\end{equation}
Next, we present a  distributed  implementation of
the  algorithm \eqref{wn.def1} and \eqref{wn.def2}  when ${\bf S}$  has bandwidth $s$.
Select a threshold $\epsilon$ and an integer $N\ge s$ satisfying \eqref{anotherrequirementonN}.
Write
$$
\left\{\begin{array}{l} {\bf S}^T=({\bf a}(i, \lambda))_{i\in V, \lambda\in G}\\
{\bf R}_N {\bf S}^T=({\bf b}_N(\lambda, \lambda'))_{\lambda, \lambda'\in G}\\
{\bf R}_N {\bf S}^T {\bf S} {\bf S}^T= ({\bf c}_N(\lambda, \lambda'))_{\lambda, \lambda'\in G} \\
{\bf z}=({\bf z}(\lambda))_{\lambda\in G},
\end{array}\right.
$$
and
$${\bf w}_n=({\bf w}_n(\lambda))_{\lambda\in G} \ \ {\rm  and}\  \
{\bf e}_n=({\bf e}_n(\lambda))_{\lambda\in G},\  n\ge 1. $$
 We assume that any agent $\lambda\in G$  stores vectors  ${\bf a}(i, \lambda'), {\bf b}_N(\lambda, \lambda'), {\bf c}_N(\lambda, \lambda')$ and ${\bf z}(\lambda')$, where $(i,\lambda)\in T$ and $\lambda'\in B_{\mathcal G}(\lambda, 2N+3s)$.
The following is the distributed implementation of the  algorithm \eqref{wn.def1} and \eqref{wn.def2} for an agent $\lambda\in G$.

\smallskip

\noindent {\bf Distributed algorithm \eqref{wn.def1} and \eqref{wn.def2} for signal reconstruction:}
\begin{itemize}
\item[{1.}]  Input ${\bf a}(i, \lambda'), {\bf b}_N(\lambda, \lambda'), {\bf c}_N(\lambda, \lambda')$ and ${\bf z}(\lambda')$, where $(i,\lambda)\in T$ and $\lambda'\in B_{\mathcal G}(\lambda, 2N+3s)$.

\item[{2.}] Input stop criterion $\epsilon>0$ and maximal number of iteration steps $K$.

\item[{3.}]  Compute
${\bf w}(\lambda) = \sum_{\lambda'\in B_{\mathcal G}(\lambda, 2N+s)} {\bf b}_N(\lambda, \lambda') {\bf z}(\lambda')$.

\item[{4.}] Communicate with neighboring agents in $B_{\mathcal G}(\lambda, 2N+3s)$   to obtain
data 
${\bf w}(\lambda')$, $\lambda'\in B_{\mathcal G}(\lambda, 2N+3s)$.

\item[{5.}] Evaluate the sampling error term
${\bf e}(\lambda) = {\bf w}(\lambda)-\sum_{\lambda'\in B_{\mathcal G}(\lambda, 2N+3s)} {\bf c}_N(\lambda, \lambda') {\bf w}(\lambda')$.

    \item[{6.}] Communicate with neighboring agents  in $B_{\mathcal G}(\lambda, 2N+3s)$   to obtain error
data 
${\bf e}(\lambda')$, $\lambda'\in B_{\mathcal G}(\lambda, 2N+3s)$.


\item[{7.}] {\bf for} $n=2$ to $K$ {\bf do}

\begin{itemize}
\item[{7a.}] Compute $\delta=\max_{\lambda'\in B_{\mathcal G}(\lambda, 2N+3s)}|{\bf e}(\lambda')|$.

\item[{7b.}] {\bf Stop} if $\delta\le \epsilon$, else  {\bf do}

\item[{7c.}] Update ${\bf w}(\lambda)={\bf w}(\lambda)+{\bf e}(\lambda)$.

\item[{7d.}] Update
${\bf e}(\lambda) = {\bf e}(\lambda)-\sum_{\lambda'\in B_{\mathcal G}(\lambda, 2N+3s)} {\bf c}_N(\lambda, \lambda') {\bf e}(\lambda')$.

\item[{7e.}] Communicate with neighboring agents located in $B_{\mathcal G}(\lambda, 2N+3s)$   to obtain error
data 
${\bf e}(\lambda')$, $\lambda'\in B_{\mathcal G}(\lambda, 2N+3s)$.

\end{itemize}

\item[{}] {\bf end for}

\end{itemize}

We conclude this section by discussing the complexity of 
the  distributed algorithm \eqref{wn.def1} and \eqref{wn.def2}, which depends essentially on $N$.
In its implementation, the data storage requirement  for each agent is about
$(L+3)(2N+3s+1)^d$.
In each iteration, 
the computational cost for each agent is about $O(N^d)$ mainly used for updating the error ${\bf e}$.
The communication cost for each agent is about $O(N^{d+\beta})$ if
the communication between distant agents $\lambda, \lambda'\in G$, processed through  their shortest path, has its cost being proportional to  $(\rho_{\mathcal G}(\lambda, \lambda'))^\beta$ for some $\beta\ge 1$.
By Theorem \ref{convergence.thm}, the number of iteration steps needed to reach the accuracy $\epsilon$
is about $O(\ln (1/\epsilon)/\ln N)$.
Therefore the total computational  and communication cost for each agent  are
about  $O(\ln (1/\epsilon) N^d/\ln N)$ and $O(\ln (1/\epsilon) N^{d+\beta}/\ln N)$, respectively.

\section{Numerical simulations}
\label{simulations.section}

In this section, we present two simulations to demonstrate the distributed algorithm \eqref{wn.def1} and \eqref{wn.def2} for stable signal reconstruction.

 Agents in the first simulation are almost uniformly deployed on  the circle of radius $R/5$, and  their locations are at
  $$\lambda_l:=\frac{R}{5}\Big(\cos \frac{2\pi \theta_l}{R}, \sin \frac{2\pi \theta_l}{R}\Big),\ 1\le l\le R,$$
   where $R\ge 1$ and
$\theta_l\in l+[-1/4, 1/4]$  are randomly selected.
Every agent in the SDS
 has a direct communication channel to its two adjacent agents. Then the  graph ${\mathcal G}_c=
 (G_c, S_c)$ to describe
 the SDS   
  is  a cycle graph,  where $G_c=\{\lambda_1, \ldots, \lambda_R\}$
 and $S_c = \big\{(\lambda_1, \lambda_2), \ldots, (\lambda_{R-1}, \lambda_R)$, $(\lambda_R, \lambda_1)$, $(\lambda_1, \lambda_R)$, $(\lambda_R, \lambda_{R-1}), \ldots, (\lambda_2, \lambda_1)\big\}$. 
Take innovative positions
$${\bf p}_i:=r_i \Big(\cos \frac{2\pi i}{R},\ \sin \frac{2\pi i}{R}\Big), \ 1\le i\le R,$$
deployed almost uniformly near the circle of radius $R/5$, where
 $r_i\in R/5+[-1/4, 1/4]$ are randomly selected.
Given any innovative position ${\bf p}_i, 1\le i\le R$,
it has  three   anchor agents
$\lambda_i, \lambda_{i-1}$ and $\lambda_{i+1}$, where  $\lambda_0=\lambda_R$ and $\lambda_{R+1}=\lambda_1$.
Set $V_c=\{{\bf p}_i, 1\le i\le R\}$ and
$T_c=\{({\bf p}_i, \lambda_{i-j}),  i=1, \ldots,  R \ {\rm and} \ j=0, \pm 1\}$.
Then  ${\mathcal H}_c=(G_c\cap V_c,  S_c\cup T_c\cup T_c^*)$ is the 
graph
to describe the DSRS, see   Figure \ref{circle.fig}.

  \begin{figure}[h]
\begin{center}
\includegraphics[width=84mm, height=64mm]{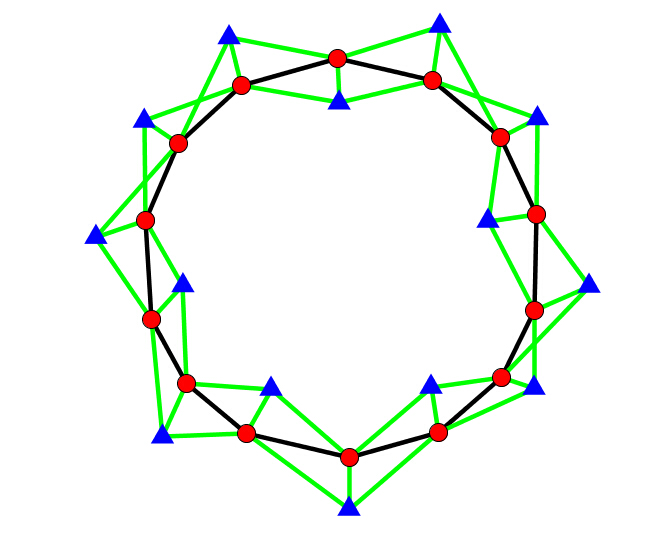}
\caption{The  graph ${\mathcal H}_c=(G_c\cap V_c,  S_c\cup T_c\cup T_c^*)$
  to describe  the DSRS in the first  simulation, where
  vertices in $G_c$, edges in $S_c$,  vertices in  $V_c$ and edges in $T_c \cup T_c^{\ast}$
   are plotted in red circles, black lines,  blue triangles and green lines,
   respectively.
 }
\label{circle.fig}
\end{center}
\end{figure}

Let $\varphi({\bf t}):= \exp (-(t_1^2+ t_2^2)/2)$ for ${\bf t}=(t_1, t_2)$. 
Gaussian signals
  \begin{equation} \label{simulation1.signal} f({\bf t})=\sum_{i=1}^R c(i) \varphi ({\bf t}-{\bf p}_i) \end{equation}
  to be  sampled and reconstructed have their
amplitudes $c(i)\in [0, 1]$ being randomly chosen,
see the left image of Figure \ref{circlesignalreconstruction.fig}.
 \begin{figure}[h]
\begin{center}
\includegraphics[width=58mm, height=48mm]{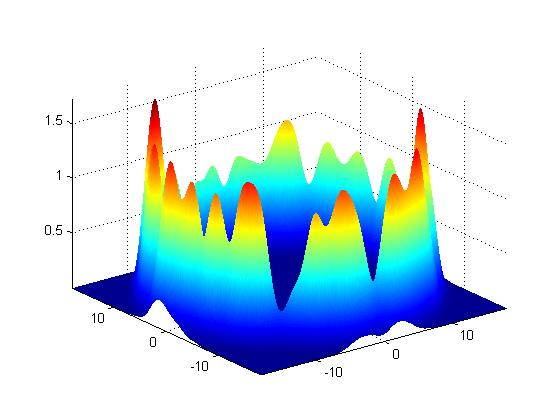}
\includegraphics[width=58mm, height=48mm]{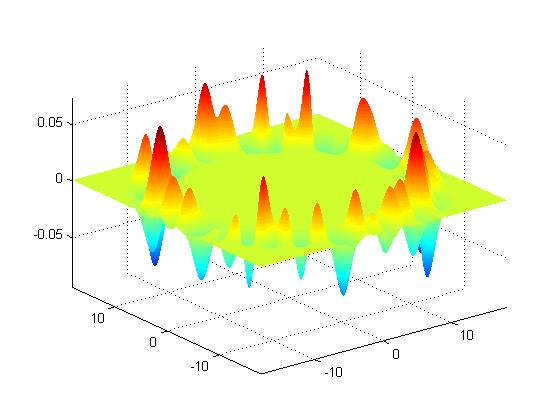}
\caption{Plotted on the left is the signal $f$ in \eqref{simulation1.signal} with  $R=80$.
On the right is
 the difference between the  signal $f$ 
  and  the  reconstructed signal $f_{n ,N, \delta}$ 
 with $n=10, N=6$ and $\delta=0.05$.
 }
\label{circlesignalreconstruction.fig}
\end{center}
\end{figure}
In the first simulation,   we consider ideal sampling procedure. Thus for the agent $\lambda_l, 1\le l\le R$,
the noisy sampling data  acquired is
\begin{equation}\label{noisysamples.firstsimulation}
y_\delta(l)=\sum_{i=1}^R c(i) \varphi(\lambda_l-{\bf p}_i)+\eta(l),\end{equation}
where $\eta(l)\in [-\delta, \delta]$ are randomly generated with bounded noise level $\delta>0$.

Our first simulation shows that the distributed algorithm \eqref{wn.def1} and \eqref{wn.def2}
  converges for $N\ge 5$ and the convergence rate is almost independent
of the network size  $R$, cf. the upper bound estimate  in \eqref{convergence.thm.eq2}. 


Let
$f_{n, N, \delta}({\bf t}):=\sum_{i=1}^R c_{n, N, \delta}(i) \varphi ({\bf t}-{\bf p}_i)$ 
 be the reconstructed signal in the $n$-th iteration
 by applying the distributed algorithm \eqref{wn.def1} and \eqref{wn.def2}
 from the noisy sampling data in \eqref{noisysamples.firstsimulation}, see the right image of Figure  \ref{circlesignalreconstruction.fig}. Define maximal reconstruction errors
  $$\epsilon(n, N, \delta):=
  \left\{\begin{array}{ll}
  \max_{1\le i\le R} |c(i)|  & {\rm if} \ n=0,\\
 \max_{1\le i\le R} |c_{n, N, \delta}(i)-c(i)| & {\rm if} \ n\ge 1.
 \end{array}\right. $$
  Presented in
  Table \ref{iterationerror.tab}
 is the average of reconstruction errors $\epsilon(n, N, \delta)$ with
 500 trials in noiseless environment ($\delta=0$), where the network size $R$ is 80.
 It indicates that the proposed distributed algorithm \eqref{wn.def1} and \eqref{wn.def2}
 has faster convergence rate for larger $N\ge 5$,  and we only need three iteration steps to have a near perfect reconstruction from its noiseless samples when $N=10$.
 \begin{table}[h]
\caption{Maximal reconstruction errors $\epsilon(n, N, \delta)$ with $\delta=0$
}
\begin{tabular}{c|c|c|c|c|c|c}
\hline\hline
 \backslashbox {\ n}{N \ }  

 &5
 &  6 
& 7 
 & 8 
 & 9
 & 10\\
 \hline
0 & 0.9874 &0.9881 &  0.9878  &  0.9876 & 0.9877 & 0.9884 \\
\hline
1 & 0.9875 &0.4463 &  0.3073  &  0.1940  & 0.1055 & 0.0523 \\
\hline
2 & 0.6626 &0.2046 &  0.0794 &  0.0271  & 0.0124  &0.0024 \\
\hline
3 &0.3624 & 0.0926 &  0.0240 &  0.0045  & 0.0014 & 0.0001 \\
\hline
4 & 0.2535 &0.0443&  0.0068 &  0.0006  & 0.0001 & 0.0000 \\
\hline
5 &0.1742 & 0.0206 &  0.0018 &  0.0001  & 0.0000 & 0.0000 \\
\hline
6 & 0.1169 &0.0093 &  0.0005 &  0.0000  & 0.0000 & 0.0000 \\
\hline
7 & 0.0840 &0.0042 &  0.0001 &  0.0000  & 0.0000 & 0.0000 \\
\hline
8 & 0.0579 &0.0017 &  0.0000 &  0.0000  & 0.0000 & 0.0000 \\
\hline
9 & 0.0411 &0.0007 &  0.0000 &  0.0000  & 0.0000 & 0.0000 \\
\hline
10 & 0.0289 &0.0003 &  0.0000 &  0.0000  & 0.0000 & 0.0000 \\
%
\hline \hline
\end{tabular}
\label{iterationerror.tab}
\end{table}

The robustness of the proposed algorithm \eqref{wn.def1} and \eqref{wn.def2} against sampling noises is tested
and confirmed,
see Figure \ref{circlesignalreconstruction.fig}.
Moreover, it is observed that the maximal reconstruction error
$\epsilon(n, N, \delta)$ with large $n$ depends almost linearly on the noise level $\delta$, cf. the
sub-optimal approximation property 
 in Theorem \ref{leastsqaure.tm}.

\smallskip

In the next simulation, agents are  uniformly deployed on two concentric circles
  and each agent  has   direct communication channels to its three adjacent agents.
  Then the  graph
  ${\mathcal G}_p= (G_p, S_p)$
 to describe
 our SDS  is  a  prism graph with vertices having physical locations,  
   $$\mu_l:=\left\{\begin{array}
   {ll} \frac{R}{10}\big(\cos \frac{4\pi \theta_l}{R}, \sin \frac{4\pi \theta_l}{R}\big)  & \ {\rm if} \ \ 1\le l\le \frac{R}{2}\\
  \big(\frac{R}{10}+1\big)\big(\cos \frac{4\pi \theta_l}{R}, \sin \frac{4\pi \theta_l}{R}\big) &  \ {\rm if}\ \ \frac{R}{2}+1\le l\le R,
  \end{array}\right.$$
where $R\ge 2$ and
$\theta_l\in l+[-1/4, 1/4], 1\le l\le R$, are randomly selected.
The innovative positions
$${\bf q}_i:=r_i \big(\cos \frac{4\pi i}{R},\ \sin \frac{4\pi i}{R}\big), \ 1 \le i\le \frac{R}{2},$$
 have  four  anchor agents
$\mu_i, \mu_{i+1}$, $\mu_{i+R/2}$ and $\mu_{i+R/2+1}$, where  $\mu_0=\mu_{R/2}$, $\mu_{R+1}=\mu_{R/2+1}$,  and $r_i\in \frac{R}{10}+[\frac14, \frac34]$
are randomly selected.
Set $V_p=\{{\bf q}_i, 1\le i\le \frac{R}{2}\}$ and
$T_p=\{({\bf q}_i, \mu_{i+j}),  i=1, \ldots,  \frac{R}{2} \ {\rm and} \ j=0, 1, \frac{R}{2}, \frac{R}{2}+1\}$. Thus the graph
 ${\mathcal H}_p=(G_p\cap V_p,  S_p\cup T_p\cup T_p^*)$  to describe our DSRS
 is a connected simple graph,  see the left image of
Figure \ref{ring.fig}. 
\begin{figure}[h]
\begin{center}
\includegraphics[width=62mm, height=48mm]{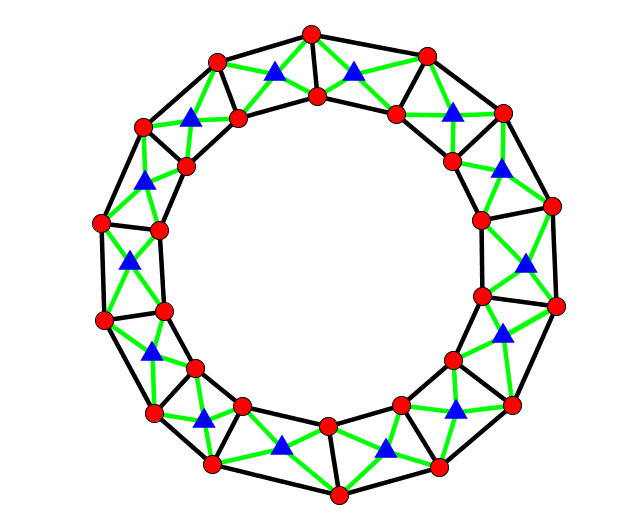}
\includegraphics[width=68mm, height=48mm]{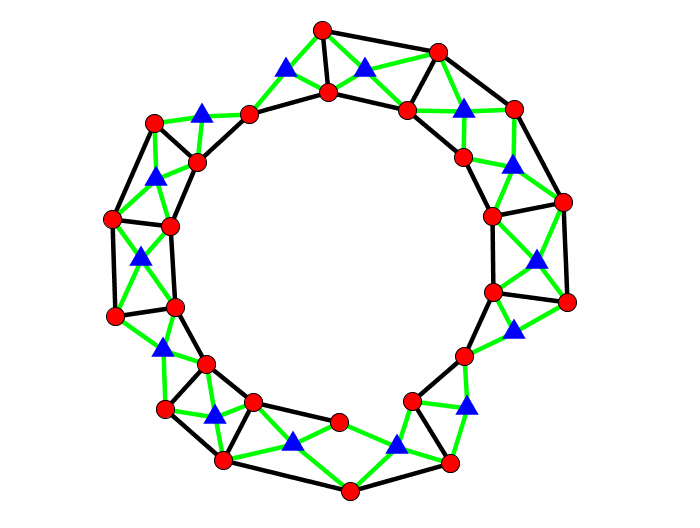}
\caption{Plotted on the left is the  graph ${\mathcal H}_p=(G_p\cap V_p,  S_p\cup T_p\cup T_p^*)$ to describe  the DSRS, where  vertices in $G_p$ and $V_p$  are  in red circles and blue triangles,
and edges in $S_p$ and $T_p \cup T_p^{\ast}$ are  in black solid lines and green solid lines,  respectively.
On the right is  a subgraph of ${\mathcal H}_p$, where some agents are completely dysfunctional and
some  have communication channels to one or two of their nearby agents clogged.
 }
\label{ring.fig}
\end{center}
\end{figure}

Following the first simulation, we consider the ideal sampling procedure of signals,  
  \begin{equation} \label{simulation2.signal}
  g({\bf t})=\sum_{i=1}^{R/2} d(i) \varphi({\bf t}-{\bf q}_i),
  \end{equation}
where  $d(i)\in [0, 1], 1\le i\le R/2$,
are randomly selected, see the left image of Figure \ref{ringsignalreconstruction.fig}.
 \begin{figure}[h]
\begin{center}
\includegraphics[width=58mm, height=48mm]{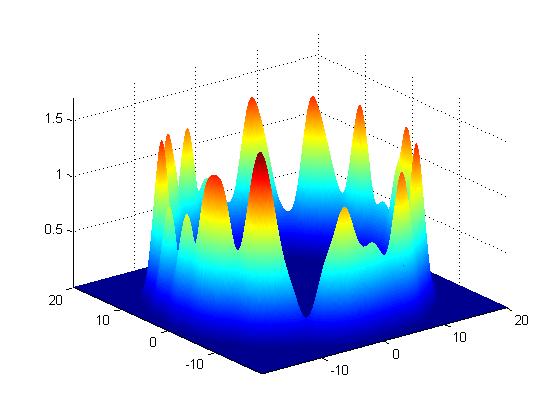}
\includegraphics[width=58mm, height=48mm]{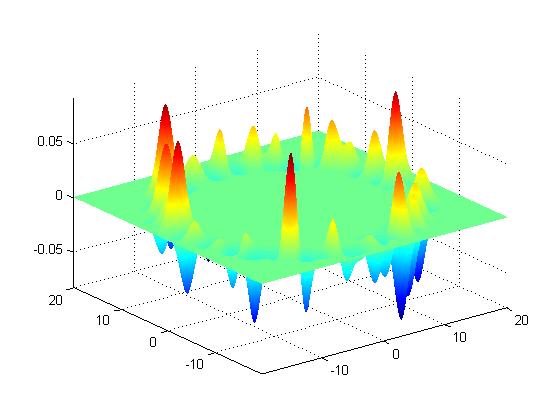}
\caption{Plotted on the left is the signal $g$ in \eqref{simulation2.signal}
with  $R=160$.
On the right is
  the difference between the  signal $g$
  and its approximation $g_{n,N,\delta}$,
  where $n=4, N=6, \delta=0.05$, and
    agents located at $\mu_1, \mu_{87}$ are completely dysfunctional, while agents located at
  $\mu_{11}, \mu_{51}, \mu_{91}$ have their partial communication channels clogged.
 }
\label{ringsignalreconstruction.fig}
\end{center}
\end{figure}
Then the noisy sampling data  acquired by the agent $\mu_l, 1\le l\le R$,
 is
\begin{equation}\label{noisysamples.secondsimulation}
y_\delta(l)=\sum_{i=1}^{R/2} d(i) \varphi(\mu_l-{\bf q}_i)+\eta(l),\end{equation}
where $\eta(l)\in [-\delta, \delta]$ are randomly selected with bounded noise level $\delta>0$.
Applying the distributed algorithm \eqref{wn.def1} and \eqref{wn.def2},
 we obtain approximations
\begin{equation}\label{simulation2.reconstructedsignal}
g_{n, N, \delta}({\bf t})=\sum_{i=1}^{R/2} d_{n, N, \delta}(i) \varphi ({\bf t}-{\bf q}_i),\ n\ge 1,\end{equation}
 of the signal $g$ in \eqref{simulation2.signal}.
Our simulations illustrate that
the distributed algorithm \eqref{wn.def1} and \eqref{wn.def2}
   converges for $N\ge 3$
   and the signal $g$ can
be reconstructed near perfectly from its noiseless samples
in 12 steps for $N=3$, 7 steps for $N=4$, 5 steps for $N=5$, 4 steps for $N=6$, and 3 steps for $N=7$, cf. Table \ref{iterationerror.tab} in the first simulation.

The robustness of the  proposed distributed algorithm \eqref{wn.def1} and \eqref{wn.def2} against sampling noises
and dysfunctions of agents in the DSRS is tested
and confirmed,
see   the right graph of Figure
\ref{ring.fig} and the right image of Figure \ref{ringsignalreconstruction.fig}.

\section{Proofs}
\label{proofs.section}

In this section, we include proofs of Propositions \ref{covering.pr}, \ref{shortestpath.pr},
\ref{Vdistance.pr}, \ref{Vdoublingmeasure.pr}, \ref{Vpolynomial.pr}, \ref{jaffard.pr},
\ref{bandapproximation.pr}, \ref{convergence.prop}, and Theorems
\ref{lpstability.tm},  \ref{wienerlemma.tm},  \ref{leastsqaure.tm}, \ref{l2stablility.thm1}, \ref{l2stablility.thm},
 \ref{convergence.thm}.

\subsection{Proof of Proposition \ref{covering.pr}}
\label{coveringpf.appendix}

  For any $\lambda \in G$, take $\lambda_m \in G_N$ with $B_{\mathcal G}(\lambda, N)\cap B_{\mathcal G}(\lambda_m, N)\neq \emptyset$.
 Then
  $$\rho_{\mathcal G}(\lambda,\lambda_m) \le \rho_{\mathcal G}(\lambda,\lambda')+\rho_{\mathcal G}(\lambda',\lambda_m)\le 2N,$$
 where
 $\lambda'$ is a vertex in $B_{\mathcal G}(\lambda, N)\cap B_{\mathcal G}(\lambda_m, N)$.
This proves that for any $N'\ge 2N$,  balls $\{B_{\mathcal G}(\lambda_m, N'), \lambda_m\in G_N\}$    provide a  covering for  $G$,
\begin{equation}
\label{covering.pr.pf.eq0}
G\subset \bigcup_{\lambda_m\in G_N}B_{\mathcal G}(\lambda_m,N'),\end{equation}
and hence the first inequality in \eqref{covering.pr.eq1} follows.

Now we prove the last inequality in \eqref{covering.pr.eq1}. Take $\lambda\in G$. For  any $\lambda_m, \lambda_{m'} \in G_N \cap B_{\mathcal G}(\lambda,N')$,
$$
\rho_{\mathcal G}(\lambda',\lambda_{m'})\le \rho_{\mathcal G}(\lambda',\lambda_m)+\rho_{\mathcal G}(\lambda_m,\lambda)+\rho_{\mathcal G}(\lambda,\lambda_{m'})\le 
 2N'+N
$$
for all $\lambda'\in B(\lambda_m,N)$,  which implies that
\begin{equation}\label{covering.pr.pf.eq1}
B_{\mathcal G}(\lambda_m,N)\subset B_{\mathcal G}(\lambda_{m'},2N'+N).\end{equation}
Hence
\begin{eqnarray} \label{BNestimate.eq}
  \sum_{\lambda_m \in G_N}\chi_{B_{\mathcal G}(\lambda_m, N')}(\lambda) 
 \hskip-.1in  & \le \hskip-.1in&
\frac{\mu_{\mathcal G}(\cup_{\lambda_m\in G_N\cap B_{\mathcal G}(\lambda,N')} B_{\mathcal G}(\lambda_m,N))}{\inf_{\lambda_m\in G_N\cap B_{\mathcal G}(\lambda,N')} \mu_{\mathcal G}(B_{\mathcal G}(\lambda_m,N))}\nonumber \\
\hskip-.1in& \le \hskip-.1in& \sup_{\lambda_m\in G_N\cap B_{\mathcal G}(\lambda,N')} \frac{
\mu_{\mathcal G}(B_{\mathcal G}(\lambda_m, 2N'+N))}{\mu_{\mathcal G}(B_{\mathcal G}(\lambda_m,N))}\le  (D_0({\mathcal G}))^{\lceil\log_2 (2N'/N+1)\rceil},
\end{eqnarray}
where the first inequality holds as $B_{\mathcal G}(\lambda_m,N), \lambda_m\in V_N$, are disjoint,
the second one is true by \eqref{covering.pr.pf.eq1},
and the third inequality follows from the doubling assumption
 \eqref{doublingconstants.def}. 


\subsection{Proof of Proposition \ref{shortestpath.pr}}
\label{shortestpathpr.appendix}

 By the structure of the graph ${\mathcal H}$,
 it suffices to show that the shortest path in ${\mathcal H}$ to connect distinct vertices $\lambda, \lambda'\in G$
must be a path in its subgraph ${\mathcal G}$.  Suppose on the contrary that
$\lambda u_1\cdots u_{k-1} u_k u_{k+1}\cdots u_n\lambda'$ is a shortest path in ${\mathcal H}$ of length $\rho_{\mathcal H} (\lambda, \lambda')$ with vertex $u_k$ along the path belonging to $V$.
Then $u_{k-1}$ and $u_{k+1}$ are anchor agents of $u_k$ in $G$.

  For the case that $u_{k-1}$ and  $u_{k+1}$ are distinct anchor agents of the innovative position $u_k$, $(u_{k-1}, u_{k+1})\in S$ by  \eqref{basicassumption1}. Hence
$\lambda u_1\cdots u_{k-1}u_{k+1}\cdots u_n\lambda'$ is a path  of length
$\rho_{\mathcal H}(\lambda, \lambda')-1$ to connect vertices $\lambda$ and $\lambda'$, which is a contradiction.

Similarly  for the case that $u_{k-1}$ and $u_{k+1}$ are the same,
$\lambda u_1\cdots u_{k-1}u_{k+2}\cdots u_n\lambda'$ is a path  of length
$\rho_{\mathcal H}(\lambda, \lambda')-2$ to connect vertices $\lambda$ and $\lambda'$. This is a contradiction.

\subsection{Proof of Proposition \ref{Vdistance.pr}}
\label{Vdistancepr.appendix}

The non-negativity and symmetry  is obvious, while
the identity of indiscernibles holds since there is no edge assigned in ${\mathcal H}$ between two distinct vertices in $V$.

Now we prove the triangle inequality
\begin{equation}\label{Vdistance.pr.pf.eq0}
\rho(i, i')\le \rho(i,i^{\prime\prime})+\rho(i^{\prime\prime}, i')
\ \ {\rm for\ distinct\  vertices} \ i, i', i^{\prime\prime}\in V.
\end{equation}
Let $m=\rho(i,i^{\prime\prime})$ and $n=\rho(i^{\prime\prime},i')$.
Take  a path $iv_1\ldots v_{m}i^{\prime\prime}$
of length $m+1$ to connect $i$ and $i^{\prime\prime}$, and
another path $i^{\prime\prime}u_1\ldots u_{n}i'$  of length $n+1$
to connect  $i^{\prime\prime}$ and $i'$.
If $v_{m}=u_1$, then
$iv_1\ldots v_{m}u_2\cdots u_{n}i'$ is a path of length $m+n$ to connect vertices $i$ and $i'$,
which implies that
\begin{equation}\label{Vdistance.pr.pf.eq1}
\rho(i,i')\le m+n-1<\rho(i,i^{\prime\prime})+\rho(i^{\prime\prime},i').
\end{equation}
If $v_{m}\ne u_1$, then $(v_{m}, u_1)$ is an edge in the graph ${\mathcal G}$  (and then also in the graph
${\mathcal H}$) by \eqref{basicassumption1}.
Thus $iv_1\ldots v_{m}u_1u_2\cdots u_{n}i'$ is a path of length $m+n+1$ to connect vertices $i$ and $i'$,
and
\begin{equation}\label{Vdistance.pr.pf.eq2}
\rho(i,i')\le m+n=\rho(i,i^{\prime\prime})+\rho(i^{\prime\prime}, i').
\end{equation}
Combining \eqref{Vdistance.pr.pf.eq1} and \eqref{Vdistance.pr.pf.eq2}
proves \eqref{Vdistance.pr.pf.eq0}.

\subsection{Proof of Proposition \ref{Vdoublingmeasure.pr}}
\label{Vdoublingmeasurepr.appendix}

To prove Proposition \ref{Vdoublingmeasure.pr}, we need  two  lemmas comparing measures of balls  in graphs ${\mathcal G}$ and ${\mathcal V}$.

\begin{Lm}\label{mumuG.lem1}
If ${\mathcal H}$ satisfies \eqref{basicassumption1} and \eqref{basicassumption2}, then
\begin{equation} \label{mumuG.lem.eq1}
\mu( B(i, r))\le  L \mu_{\mathcal G}(B_{\mathcal G}(\lambda, r))
\ \ {\rm for\ any} \ \lambda\in G\ {\rm  with} \ (i, \lambda)\in T.
\end{equation}
\end{Lm}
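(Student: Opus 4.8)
The plan is to build an at-most-$L$-to-one map $\Phi\colon B(i,r)\to B_{\mathcal G}(\lambda,r)$ and then count. The key inputs are the distance identity \eqref{rhorhoH.eq}, which reduces $\rho(i,i')$ to a geodesic distance in $\mathcal G$ between anchor agents of $i$ and of $i'$, together with Assumptions \eqref{basicassumption1} and \eqref{basicassumption2}. Note first that for distinct $i,i'\in V$ there is no edge of $\mathcal H$ joining them, so $\rho(i,i')=\rho_{\mathcal H}(i,i')-1\ge 1$; in particular if $B(i,r)$ contains a vertex other than $i$ then $r\ge 1$.

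First I would define $\Phi$. Set $\Phi(i)=\lambda$, which lies in $B_{\mathcal G}(\lambda,r)$ for every $r\ge 0$ and satisfies $(i,\lambda)\in T$ by hypothesis. For $i'\in B(i,r)$ with $i'\ne i$, the distance $\rho_{\mathcal H}(i,i')=\rho(i,i')+1$ is finite (the graph $\mathcal H$ is connected), so \eqref{rhorhoH.eq} yields anchor agents $\mu$ of $i$ and $\mu'$ of $i'$, i.e. $(i,\mu),(i',\mu')\in T$, with $\rho_{\mathcal G}(\mu,\mu')\le \rho(i,i')-1\le r-1$; put $\Phi(i')=\mu'$. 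I then would check $\Phi(i')\in B_{\mathcal G}(\lambda,r)$: if $\mu=\lambda$ this is immediate since $\rho_{\mathcal G}(\lambda,\mu')\le r-1\le r$, while if $\mu\ne\lambda$ then $\mu$ and $\lambda$ are distinct anchor agents of $i$, so $(\lambda,\mu)\in S$ by \eqref{basicassumption1}, hence $\rho_{\mathcal G}(\lambda,\mu)=1$, and the triangle inequality gives $\rho_{\mathcal G}(\lambda,\mu')\le 1+(r-1)=r$. Thus $\Phi$ maps $B(i,r)$ into $B_{\mathcal G}(\lambda,r)$.

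The counting step finishes it: for each $\nu\in G$ the fibre $\Phi^{-1}(\nu)$ is contained in $\{i'\in V:(i',\nu)\in T\}$ — for $i'\ne i$ by construction, and for $i$ itself because $\Phi(i)=\lambda$ with $(i,\lambda)\in T$ — so $\#\Phi^{-1}(\nu)\le L$ by \eqref{basicassumption2}. Summing over $\nu\in\Phi(B(i,r))\subseteq B_{\mathcal G}(\lambda,r)$ gives $\mu(B(i,r))=\#B(i,r)=\sum_{\nu}\#\Phi^{-1}(\nu)\le L\,\#B_{\mathcal G}(\lambda,r)=L\,\mu_{\mathcal G}(B_{\mathcal G}(\lambda,r))$, which is \eqref{mumuG.lem.eq1}.

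There is no serious obstacle here; the only points that require attention are the degenerate case $i'=i$, dispatched by sending it to $\lambda$, and the case split $\mu=\lambda$ versus $\mu\ne\lambda$ when verifying membership in $B_{\mathcal G}(\lambda,r)$, which is handled by \eqref{basicassumption1}. If one prefers to avoid invoking \eqref{rhorhoH.eq} directly, the same map can be extracted from a shortest path in $\mathcal H$ from $i$ to $i'$: by Proposition \ref{shortestpath.pr} its interior vertices lie in $\mathcal G$, so its second vertex is an anchor agent of $i$ and its second-to-last vertex an anchor agent of $i'$, at geodesic distance at most $\rho_{\mathcal H}(i,i')-2=\rho(i,i')-1$ in $\mathcal G$, which is exactly what the argument above uses.
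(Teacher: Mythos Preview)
Your proof is correct and follows essentially the same route as the paper. The paper invokes Proposition \ref{shortestpath.pr} directly to extract anchor agents $\lambda_1,\lambda_n$ with $\rho_{\mathcal G}(\lambda_1,\lambda_n)=\rho(i,i')-1$, then bounds $\rho_{\mathcal G}(\lambda,\lambda_n)$ via the same case split on whether $\lambda_1=\lambda$, and concludes by the same counting argument from \eqref{basicassumption2}; your use of \eqref{rhorhoH.eq} is just a repackaging of that step, as you yourself note in your final paragraph.
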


\begin{proof}
Let $i'\in B(i,r)$ with $i'\ne i$.
By Proposition \ref{shortestpath.pr},
there exists a path $\lambda_1\ldots \lambda_n$ of length $\rho(i,i')-1$
 in the graph ${\mathcal G}$ such that
$(i, \lambda_1), (i', \lambda_n)\in T$.
Then
\begin{equation*}
\rho_{\mathcal G}(\lambda,\lambda_n)\le
\rho_{\mathcal G}(\lambda,\lambda_1)+\rho_{\mathcal G}(\lambda_1,\lambda_n)
\le  \rho(i,i')\le r
\end{equation*}
as either $\lambda_1=\lambda$ or $(\lambda, \lambda_1)$ is an edge in ${\mathcal G}$ by   \eqref{basicassumption1}.
This shows that for any innovative position $i'\in B(i, r)$ there exists an anchor agent $\lambda_n$
in the ball $B_{\mathcal G}(\lambda, r)$. This observation  together with  \eqref{basicassumption2}
proves \eqref{mumuG.lem.eq1}.
\end{proof}

\begin{Lm}\label{mumuG.lem2}
If ${\mathcal H}$ satisfies \eqref{basicassumption0}, \eqref{basicassumption1} and \eqref{basicassumption3}, then
\begin{equation} \label{mumuG.lem2.eq1}
\mu_{\mathcal G}(B_{\mathcal G}(\lambda, r))\le  \Big(\sup_{\lambda'\in G} \mu_{\mathcal G}(B_{\mathcal G}(\lambda', 2M+2))\Big)
\mu(B(i, r+M+1))
\end{equation}
for any $\lambda\in G$ and $r\ge M+1$, where $(i, \lambda')\in T$ and $\lambda'\in B_{\mathcal G}(\lambda,  M)$.
\end{Lm}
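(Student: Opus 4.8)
The plan is to build an explicit ``nearby innovative position'' map $\Phi\colon B_{\mathcal G}(\lambda,r)\to V$, to show that it takes values in $B(i,r+M+1)$ and has uniformly bounded fibers, and then to count. Fix $\mu\in B_{\mathcal G}(\lambda,r)$ and a geodesic in $\mathcal G$ joining $\lambda$ to $\mu$; let $\mu^{-}$ be the vertex obtained by moving from $\mu$ toward $\lambda$ along this geodesic by $\min(M,\rho_{\mathcal G}(\lambda,\mu))$ edges, so that $\rho_{\mathcal G}(\mu,\mu^{-})\le M$ and, since $r\ge M+1>M$, $\rho_{\mathcal G}(\lambda,\mu^{-})\le r-M$. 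By Assumption 4, i.e. \eqref{basicassumption3}, the agent $\mu^{-}$ has an anchor agent $\nu_\mu$ with $\rho_{\mathcal G}(\mu^{-},\nu_\mu)\le M$; pick $j_\mu\in V$ with $(j_\mu,\nu_\mu)\in T$ and set $\Phi(\mu)=j_\mu$. The two estimates that drive the argument are then
\[
\rho_{\mathcal G}(\lambda,\nu_\mu)\le\rho_{\mathcal G}(\lambda,\mu^{-})+\rho_{\mathcal G}(\mu^{-},\nu_\mu)\le(r-M)+M=r,\qquad \rho_{\mathcal G}(\mu,\nu_\mu)\le\rho_{\mathcal G}(\mu,\mu^{-})+\rho_{\mathcal G}(\mu^{-},\nu_\mu)\le 2M.
\]

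Next I would check $\Phi\bigl(B_{\mathcal G}(\lambda,r)\bigr)\subseteq B(i,r+M+1)$. If $j_\mu=i$ this is trivial, and otherwise, since $(i,\lambda')\in T$ and $(j_\mu,\nu_\mu)\in T$, the formulas \eqref{rho.def} and \eqref{rhorhoH.eq} give
\[
\rho(i,j_\mu)=\rho_{\mathcal H}(i,j_\mu)-1\le 1+\rho_{\mathcal G}(\lambda',\nu_\mu)\le 1+\rho_{\mathcal G}(\lambda',\lambda)+\rho_{\mathcal G}(\lambda,\nu_\mu)\le 1+M+r,
\]
using $\lambda'\in B_{\mathcal G}(\lambda,M)$ and the first estimate above. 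For the fibers, fix $j\in V$ and an anchor agent $\lambda_j$ with $(j,\lambda_j)\in T$. If $\Phi(\mu)=j$, then $\nu_\mu$ and $\lambda_j$ are both anchor agents of $j$, so by \eqref{basicassumption1} they either coincide or are joined by an edge of $\mathcal G$; in either case $\rho_{\mathcal G}(\nu_\mu,\lambda_j)\le 1$, hence $\rho_{\mathcal G}(\mu,\lambda_j)\le\rho_{\mathcal G}(\mu,\nu_\mu)+\rho_{\mathcal G}(\nu_\mu,\lambda_j)\le 2M+1$. Thus $\Phi^{-1}(j)\subseteq B_{\mathcal G}(\lambda_j,2M+1)$, whence $\#\Phi^{-1}(j)\le\sup_{\lambda'\in G}\mu_{\mathcal G}(B_{\mathcal G}(\lambda',2M+1))\le\sup_{\lambda'\in G}\mu_{\mathcal G}(B_{\mathcal G}(\lambda',2M+2))$, a finite quantity since \eqref{basicassumption0} forces $\deg(\mathcal G)<\infty$, cf. \eqref{defd0.eqn} and \eqref{maximalvertex.def}.

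Since $\Phi$ is defined on all of $B_{\mathcal G}(\lambda,r)$ and its fibers partition that ball, while the image is contained in $B(i,r+M+1)$ (so the number of distinct values of $\Phi$ is at most $\mu(B(i,r+M+1))$), the last two paragraphs combine to give
\[
\mu_{\mathcal G}(B_{\mathcal G}(\lambda,r))=\sum_{j\in\Phi(B_{\mathcal G}(\lambda,r))}\#\Phi^{-1}(j)\le\mu(B(i,r+M+1))\,\sup_{\lambda'\in G}\mu_{\mathcal G}(B_{\mathcal G}(\lambda',2M+2)),
\]
which is \eqref{mumuG.lem2.eq1}. The main obstacle — and the only point requiring real care — is the construction of $\nu_\mu$: taking an anchor agent within distance $M$ of $\mu$ itself would only give $\rho_{\mathcal G}(\lambda,\nu_\mu)\le r+M$, too weak to land the image inside $B(i,r+M+1)$. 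Retracting $\mu$ toward $\lambda$ by at most $M$ edges before applying Assumption 4 trades that excess for a mere doubling of the distance to $\mu$, and this is exactly where the hypothesis $r\ge M+1$ enters; everything else is triangle-inequality bookkeeping together with the shared-anchor property \eqref{basicassumption1}.
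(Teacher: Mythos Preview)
Your proof is correct and takes a genuinely different route from the paper's. The paper argues via a packing--covering: it selects a maximal $(M+1)$-disjoint subset $\Lambda\subset B_{\mathcal G}(\lambda,r)$ (with each $B_{\mathcal G}(\lambda_m,M+1)\subset B_{\mathcal G}(\lambda,r)$), observes that the $(2M+2)$-balls around $\Lambda$ cover $B_{\mathcal G}(\lambda,r)$, so $\mu_{\mathcal G}(B_{\mathcal G}(\lambda,r))\le K\cdot\#\Lambda$, and then assigns to each $\lambda_m\in\Lambda$ a nonempty set $V_{\lambda_m}\subset V$ of innovative positions with anchors in $B_{\mathcal G}(\lambda_m,M)$; these sets are pairwise disjoint (by the $(M+1)$-separation and \eqref{basicassumption1}) and contained in $B(i,r+M+1)$, giving $\#\Lambda\le\mu(B(i,r+M+1))$. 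You instead build an explicit map $\Phi$ and count via bounded fibers. Both approaches confront the same obstruction --- an anchor within distance $M$ of an arbitrary $\mu\in B_{\mathcal G}(\lambda,r)$ may lie as far as $r+M$ from $\lambda$ --- and resolve it differently: the paper by insisting that $B_{\mathcal G}(\lambda_m,M+1)\subset B_{\mathcal G}(\lambda,r)$, so anchors associated to each $\lambda_m$ automatically stay inside the $r$-ball; you by first retracting $\mu$ toward $\lambda$ along a geodesic. Your argument is more elementary (no maximal-set construction) and in fact yields the slightly sharper fiber bound $\sup_{\lambda'}\mu_{\mathcal G}(B_{\mathcal G}(\lambda',2M+1))$, which you then relax to $2M+2$ to match the stated inequality.
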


\begin{proof} Let $\lambda_1=\lambda$ and take  $\Lambda=\{\lambda_m\}_{m\ge 1}$ 
such that
(i) $B_{\mathcal G}(\lambda_m, M+1)\subset B_{\mathcal G}(\lambda, r)$
 for all $\lambda_m\in \Lambda$;
(ii) $B_{\mathcal G}(\lambda_m, M+1)\bigcap B_{\mathcal G}(\lambda_{m'}, M+1)= \emptyset$
  for all  distinct vertices $\lambda_m, \lambda_{m'}\in \Lambda$; and
 (iii) $ B_{\mathcal G}(\tilde \lambda, M+1)\bigcap  \big(\bigcup_{\lambda_m\in \Lambda}B_{\mathcal G}(\lambda_m, M+1)\big) \neq \emptyset$ for all $\tilde \lambda\in B_{\mathcal G}(\lambda, r)$.
  The set $\Lambda$  could be considered as a maximal $(M+1)$-disjoint subset  of
  $B_{\mathcal G}(\lambda, r)$.
  Following the argument used in the proof of Proposition \ref{covering.pr},
  $ \{B_{\mathcal G}(\lambda_m, 2(M+1))\}_{\lambda_m\in \Lambda}$  forms a covering of the ball
  $B(\lambda, r)$, which implies that
    \begin{equation} \label{mumuG.lem2.pfeq4}
\hskip-0.1in\mu_{\mathcal G}(B_{\mathcal G}(\lambda, r))
 \le \Big(\sup_{\lambda_m\in \Lambda} \mu_{\mathcal G}(B_{\mathcal G}(\lambda_m, 2M+2))\Big) \# \Lambda \le
    \Big(\sup_{\lambda'\in G} \mu_{\mathcal G}(B_{\mathcal G}(\lambda', 2M+2))\Big) \# \Lambda.
\end{equation}

For $\lambda_m\in \Lambda$, define
$$V_{\lambda_m}=\{i'\in V, (i', \tilde \lambda)\in T \ {\rm for \ some} \ \tilde \lambda\in B_{\mathcal G}(\lambda_m, M)\}.$$
Then it follows from \eqref{basicassumption3} that
\begin{equation} \label{mumuG.lem2.pfeq5}
\# V_{\lambda_m}\ge 1 \ {\rm for \ all} \ \lambda_m\in \Lambda.
\end{equation}
Observe
that the distance of  anchor agents associated with innovative positions in
distinct $V_{\lambda_m}$  is at least 2 by the  second  requirement (ii) for the set $\Lambda$.
This together with the assumption
\eqref{basicassumption1}
implies that
\begin{equation}\label{mumuG.lem2.pfeq6}
V_{\lambda_m}\cap V_{\lambda_{m'}}=\emptyset\ \ {\rm  for\ distinct} \ \lambda_m, \lambda_{m'}\in \Lambda.
\end{equation}
Combining \eqref{mumuG.lem2.pfeq4}, \eqref{mumuG.lem2.pfeq5}
and \eqref{mumuG.lem2.pfeq6} leads to
\begin{equation} \label{mumuG.lem2.pfeq7}
\mu_{\mathcal G}(B_{\mathcal G}(\lambda, r))\le \Big(\sup_{\lambda'\in G} \mu_{\mathcal G}(B_{\mathcal G}(\lambda', 2M+2))\Big) \# \Big(\cup_{\lambda_m\in \Lambda} V_{\lambda_m}\Big).
\end{equation}

Take  $i\in V$ with $(i,\lambda')\in T$ for some $\lambda'\in B_{\mathcal G}(\lambda, M)$,
 and $i'\in V_{\lambda_m}, \lambda_m\in \Lambda$.
 Then
 $$\rho_{\mathcal H}(i, \lambda)\le \rho_{\mathcal H}(i, \lambda')+\rho_{\mathcal H}(\lambda', \lambda)\le M+1,$$
 and
 $$\rho_{\mathcal H}(i', \lambda)\le
 \rho_{\mathcal H}(i', \tilde \lambda)+\rho_{\mathcal H}(\tilde \lambda, \lambda)
 \le r+1,$$
 where $\tilde \lambda\in B_{\mathcal G}(\lambda_m, M)$ and $(i', \tilde \lambda)\in T$.
Thus
\begin{equation}\label{mumuG.lem2.pfeq8}
\rho(i,i')\le r+M+1.
\end{equation}
Then the desired estimate \eqref{mumuG.lem2.eq1} follows from
\eqref{mumuG.lem2.pfeq7} and \eqref{mumuG.lem2.pfeq8}.
 \end{proof}

We are ready to prove Proposition \ref{Vdoublingmeasure.pr}.
\begin{proof}[Proof of Proposition \ref{Vdoublingmeasure.pr}] First we prove the doubling property \eqref{Vdoublingmeasure.pr.eq1} for the measure $\mu$.
Take $i\in V$. Then for $r\ge 2(M+1)$ it follows from Lemmas \ref{mumuG.lem1}
and \ref{mumuG.lem2} that
\begin{eqnarray}\label{Vdoublingmeasure.pr.pf.eq1}
\hskip-0.08in \mu (B(i,2r)) &  \hskip-0.08in \le & \hskip-0.08in     L\mu_{\mathcal G}(B_{\mathcal G}(\lambda, 2r))
 \le
 L (D_0({\mathcal G}))^2 \mu_{\mathcal G}(B_{\mathcal G}(\lambda, r/2))\nonumber\\
  \hskip-0.08in &  \hskip-0.08in \le & \hskip-0.08in
 K L (D_0({\mathcal G}))^2
  \mu(B(i, r/2+M+1)) \le
 K L (D_0({\mathcal G}))^2
  \mu(B(i, r)),
\end{eqnarray}
where $\lambda\in G$ is a vertex with $(i, \lambda)\in T$ and
\begin{equation}\label{Vdoublingmeasure.pr.pf.eq2}K:=\sup_{\lambda'\in G} \mu_{\mathcal G}(B_{\mathcal G}(\lambda', 2M+2))\le \frac{((\deg ({\mathcal G}))^{2M+3}-1}{\deg({\mathcal G})-1}\end{equation}
by  \eqref{maximalvertex.def}.
From the doubling property \eqref{doublingconstants.def} for the measure $\mu_{\mathcal G}$,  we obtain
\begin{equation} \label{Vdoublingmeasure.pr.pf.eq3}
 \mu (B(i,2r))   \le K L
D_0({\mathcal G})  
\le  K L D_0({\mathcal G})
    \mu(B(i, r)) \  \ {\rm for} \ 0\le r\le 2(M+1).
  \end{equation}
  Then the doubling property
 \eqref{Vdoublingmeasure.pr.eq1} follows from \eqref{Vdoublingmeasure.pr.pf.eq1}, \eqref{Vdoublingmeasure.pr.pf.eq2} and \eqref{Vdoublingmeasure.pr.pf.eq3}.

Next we prove the doubling property \eqref{Vdoublingmeasure.pr.eq2}
for the measure $\mu_{\mathcal G}$.
Let $\lambda'\in B_{\mathcal G}(\lambda, M)$ with $(i, \lambda')\in T$ for some $i\in V$.
The existence of such $\lambda'$ follows from assumption
\eqref{basicassumption3}.
From Lemmas \ref{mumuG.lem1}
and \ref{mumuG.lem2},  we obtain
\begin{eqnarray}\label{Vdoublingmeasure.pr.pf.eq4}
\hskip-0.08in \mu_{\mathcal G} (B_{\mathcal G}(\lambda,2r)) &  \hskip-0.08in \le & \hskip-0.08in
  K  \mu(B(i, 2r+M+1))
 \le
  D_0^2 K
  \mu\Big(B\Big(i, \frac{r}2+\frac{(M+1)}4\Big)\Big)
  \nonumber\\
  \hskip-0.08in &  \hskip-0.08in \le & \hskip-0.08in
  D_0^2 L K
  \mu_{\mathcal G} \Big(B_{\mathcal G}\Big(\lambda', \frac{r}{2}+\frac{M+1}{4}\Big)\Big)
  \nonumber\\
\hskip-0.08in &  \hskip-0.08in \le & \hskip-0.08in
  D_0^2 L K  \mu_{\mathcal G} \Big(B_{\mathcal G}\Big(\lambda, \frac{r}{2}+\frac{M+1}{4}+M\Big)\Big)\le
  D_0^2 L K
  \mu_{\mathcal G} (B_{\mathcal G}(\lambda, r))
\end{eqnarray}
for $r\ge 3M$, and
\begin{eqnarray}\label{Vdoublingmeasure.pr.pf.eq5}
 \hskip-0.08in  \mu_{\mathcal G} (B_{\mathcal G}(\lambda,2r))
 &  \hskip-0.08in \le  & \hskip-0.08in
  K  \mu(B(i, 7M))
 \le D_0^2 K
  \mu(B(i, 2M))
 \nonumber\\
\hskip-0.08in &  \hskip-0.08in \le & \hskip-0.08in
  D_0^2 L K
  \mu_{\mathcal G} (B_{\mathcal G}(\lambda', 2M))
 \le
 D_0^2 L K^2
 \mu_{\mathcal G} (B_{\mathcal G}(\lambda, r))
\end{eqnarray}
for $0\le r\le 3M-1$.
Combining \eqref{Vdoublingmeasure.pr.pf.eq2}, \eqref{Vdoublingmeasure.pr.pf.eq4} and
\eqref{Vdoublingmeasure.pr.pf.eq5}
proves \eqref{Vdoublingmeasure.pr.eq2}.\end{proof}

\subsection{Proof of Proposition \ref{Vpolynomial.pr}}
\label{Vpolynomialpr.appendix}

The polynomial growth property \eqref{Vpolynomial.pr.eq1} for the measure $\mu$
follows immediately from
 Lemma \ref{mumuG.lem1}.

The polynomial growth property \eqref{Vpolynomial.pr.eq2} for the measure $\mu_{\mathcal G}$
holds because
\begin{equation*}\label{Vpolynomial.pr.pfeq1}
\mu_{\mathcal G}(B_{\mathcal G}(\lambda, r))
\le \frac{(\deg ({\mathcal G}))^{M}-1}{\deg({\mathcal G})-1}, \ 0\le r\le M-1
\end{equation*}
by  \eqref{maximalvertex.def},
and
\begin{eqnarray*}
\mu_{\mathcal G}(B_{\mathcal G}(\lambda, r))
& \hskip-0.08in \le & \hskip-0.08in
D_1 \Big(\frac{(\deg ({\mathcal G}))^{2M+3}-1}{\deg({\mathcal G})-1}\Big) (r+M+2)^d\nonumber \\
& \le &  2^d D_1 \Big(\frac{(\deg ({\mathcal G}))^{2M+3}-1}{\deg({\mathcal G})-1}\Big) (r+1)^d, \ r\ge M,
\end{eqnarray*}
 by
 \eqref{Vdoublingmeasure.pr.pf.eq2} and Lemma \ref{mumuG.lem2}.

\subsection{Proof of Proposition \ref{jaffard.pr} }
\label{jaffardpr.appendix}

To prove Proposition \ref{jaffard.pr}, we need a  lemma.

\begin{Lm} \label{jaffardpr.lem}
Let ${\mathcal G}$ be a connected simple graph. If
its counting measure has  polynomial growth \eqref{countmeasure.pr.eq1}, then
\begin{equation} \label{jaffardpr.pf.eq2}
\sup_{\lambda\in G} \sum_{\rho_{\mathcal G}(\lambda, \lambda')\ge s} (1+\rho_{\small {\mathcal G}}(\lambda,\lambda'))^{-\alpha}
\le \frac{D_1({\mathcal G})\alpha}{\alpha-d} (s+1)^{-\alpha+d}
\end{equation}
for all $\alpha>d$ and nonnegative integers $s$, where $d$ and $D_1({\mathcal G})$ are  the Beurling dimension and  sampling density respectively. 
\end{Lm}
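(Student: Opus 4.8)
The plan is to fix $\lambda\in G$, organize the sum over the ``spheres'' $\{\lambda':\rho_{\mathcal G}(\lambda,\lambda')=k\}$, and convert the discrete decay weight into an integral so that the polynomial growth bound \eqref{countmeasure.pr.eq1} can be inserted cleanly. Writing $a_k:=\#\{\lambda'\in G:\rho_{\mathcal G}(\lambda,\lambda')=k\}$ for $k\ge 0$, the quantity to be estimated (before the supremum) is $\sum_{k\ge s}a_k(1+k)^{-\alpha}$, and the partial sums satisfy $\sum_{j=0}^{k}a_j=\mu_{\mathcal G}(B_{\mathcal G}(\lambda,k))\le D_1({\mathcal G})(1+k)^{d}$, which is exactly what the hypothesis controls.

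The key device is the identity $(1+k)^{-\alpha}=\alpha\int_{1+k}^{\infty}t^{-\alpha-1}\,dt$, valid since $\alpha>0$. Substituting and exchanging the (nonnegative) sum with the integral by Tonelli gives $\sum_{k\ge s}a_k(1+k)^{-\alpha}=\alpha\int_{1+s}^{\infty}\big(\sum_{s\le k\le t-1}a_k\big)\,t^{-\alpha-1}\,dt$, where the lower limit $1+s$ appears because the smallest value of $1+k$ over $k\ge s$ is $1+s$. For each $t\ge 1+s$ the inner sum is at most $\mu_{\mathcal G}(B_{\mathcal G}(\lambda,\lfloor t-1\rfloor))\le D_1({\mathcal G})(1+\lfloor t-1\rfloor)^{d}\le D_1({\mathcal G})\,t^{d}$, using $1+\lfloor t-1\rfloor\le t$. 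Hence the whole expression is bounded by $\alpha D_1({\mathcal G})\int_{1+s}^{\infty}t^{d-\alpha-1}\,dt$, and since $d-\alpha-1<-1$ this integral equals $(1+s)^{d-\alpha}/(\alpha-d)$. Collecting constants yields the bound $\frac{D_1({\mathcal G})\alpha}{\alpha-d}(s+1)^{-\alpha+d}$, and since it does not depend on $\lambda$ we may take the supremum over $\lambda\in G$.

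There is no real obstacle here; the one point that requires a little care is the bookkeeping that produces $(s+1)^{-\alpha+d}$ rather than the weaker $s^{-\alpha+d}$, which hinges on keeping the integral's lower endpoint at $1+s$ and on the elementary estimate $1+\lfloor t-1\rfloor\le t$ for $t\ge 1$. An alternative route is summation by parts: write $a_k=N_k-N_{k-1}$ with $N_k=\mu_{\mathcal G}(B_{\mathcal G}(\lambda,k))$, observe that $N_M(1+M)^{-\alpha}\to 0$ as $M\to\infty$ because $\alpha>d$, discard the nonpositive boundary term at $k=s$, and estimate $(1+k)^{-\alpha}-(2+k)^{-\alpha}\le\alpha(1+k)^{-\alpha-1}$ together with $N_k\le D_1({\mathcal G})(1+k)^{d}$; this gives the same bound but demands the same attention to the summation range.
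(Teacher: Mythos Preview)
Your proof is correct and yields the same constant as the paper. Both arguments organize the sum by spheres $\{\lambda':\rho_{\mathcal G}(\lambda,\lambda')=k\}$ and feed in the ball-count bound $\mu_{\mathcal G}(B_{\mathcal G}(\lambda,k))\le D_1({\mathcal G})(1+k)^d$, but the mechanics differ. The paper applies Abel summation to reach $\sum_{n\ge s}\mu_{\mathcal G}(B_{\mathcal G}(\lambda,n))\big((n+1)^{-\alpha}-(n+2)^{-\alpha}\big)$, inserts the growth bound, telescopes once more in the other direction, and then bounds the resulting tail $\sum_{n\ge s+1}(n+1)^{-\alpha}\big((n+1)^d-n^d\big)$ by an integral. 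Your route via the identity $(1+k)^{-\alpha}=\alpha\int_{1+k}^\infty t^{-\alpha-1}\,dt$ and Tonelli is the continuous version of the same summation by parts and lands on the integral $\alpha D_1({\mathcal G})\int_{s+1}^\infty t^{d-\alpha-1}\,dt$ in a single step, avoiding the secondary telescope; it is a bit cleaner. Your closing remark on the Abel-summation alternative is in fact precisely the paper's method.
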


\begin{proof} Take $\lambda\in G$ and  $\alpha>d$. Then
 \begin{eqnarray}  \hskip-0.08in &\hskip-0.08in  & \hskip-0.08in
  \sum_{\rho_{\mathcal G}(\lambda, \lambda')\ge s} (1+\rho_{\small {\mathcal G}}(\lambda,\lambda'))^{-\alpha}
 =  \sum_{n\ge s} (n+1)^{-\alpha} \Big(\sum_{\rho_{\small {\mathcal G}}(\lambda,\lambda')=n} 1\Big)\nonumber\\
\hskip-0.08in &\hskip-0.08in \le & \hskip-0.08in
  \sum_{n\ge s} \mu_{\mathcal G}(B_{\mathcal G}(\lambda,n)) ((n+1)^{-\alpha}-(n+2)^{-\alpha}\big)\nonumber\\
\hskip-0.08in &\hskip-0.08in \le & \hskip-0.08in  D_1({\mathcal G})
 \sum_{n=s}^\infty (n+1)^{d} ((n+1)^{-\alpha}-(n+2)^{-\alpha}\big)
\nonumber\\
\hskip-0.08in &\hskip-0.08in = &\hskip-0.08in  D_1({\mathcal G})
\Big((s+1)^{-\alpha+d} + \sum_{n=s+1}^\infty  (n+1)^{-\alpha} \big( (n+1)^{d}-n^{d}\big) \Big)
\nonumber\\
 \hskip-0.08in &\hskip-0.08in \le &\hskip-0.08in  D_1({\mathcal G})
\Big((s+1)^{-\alpha+d}+  d\! \int_{s+1}^\infty t^{d-\alpha-1} dt\Big)  =  \frac{D_1({\mathcal G})\alpha}{\alpha-d} (s+1)^{-\alpha+d}, \end{eqnarray}
where the second inequality follows from  \eqref{countmeasure.pr.eq1},
and the third one is true as $(n+1)^{d}-n^{d}\le d (n+1)^{d-1}$ for $n\ge 1$ and $d\ge 1$.
\end{proof}

 Now we prove Proposition \ref{jaffard.pr}.

 \begin{proof}[Proof of Proposition \ref{jaffard.pr}]
 Take ${\mathbf A} 
 \in {\mathcal J}_\alpha({\mathcal G}, {\mathcal V})$ and ${\mathbf c}:=(c(i))_{i\in V}\in \ell^p, 1<p<\infty$. 
 Then
   \begin{eqnarray} \label{jaffardpr.pf.eq5} 
 \|{\mathbf A}{\mathbf c}\|_p^p& \hskip-0.08in\le & \hskip-0.08in
   \|{\mathbf A}\|_{{\mathcal J}_\alpha({\mathcal G}, {\mathcal V})}^p
 \sum_{\lambda\in G} \Big(\sum_{i\in V} (1+\rho_{\mathcal H}(\lambda, i))^{-\alpha} |c(i)|\Big)^p
 \nonumber\\
 &  \hskip-0.08in\le &  \hskip-0.08in
  \|{\mathbf A}\|_{{\mathcal J}_\alpha({\mathcal G}, {\mathcal V})}^p
 \|{\mathbf c}\|_p^p
 \Big( \sup_{\lambda'\in G} \sum_{i'\in V} (1+\rho_{\mathcal H}(\lambda',i'))^{-\alpha}\Big)^{p-1}
 \Big( \sup_{i'\in V} \sum_{\lambda'\in G} (1+\rho_{\mathcal H}(\lambda',i'))^{-\alpha}\Big).
 \end{eqnarray}
For any $\lambda'\in G$ and $i'\in V$, it follows from Proposition \ref{shortestpath.pr} that
 \begin{equation} \label{jaffardpr.pf.eq5a}
  \rho_{\mathcal G}(\lambda', \lambda^{\prime\prime})+1\ge \rho_{\mathcal H}(\lambda', i')\ge \rho_{\mathcal G}(\lambda', \lambda^{\prime\prime})\ \
    {\rm for\ all} \ \lambda^{\prime\prime}\in G\ {\rm  with} \ (i', \lambda^{\prime\prime})\in T. \end{equation}
By  \eqref{basicassumption2},  \eqref{Vdim.cr.eq1}, \eqref{jaffardpr.pf.eq5a} and Lemma
\ref{jaffardpr.lem},
  we  obtain
 \begin{eqnarray}\label{jaffardpr.pf.eq5b}
\sum_{i'\in V} (1+\rho_{\mathcal H}(\lambda',i'))^{-\alpha}
 \hskip-0.03in &\hskip-0.08in \le & \hskip-0.08in \sum_{\lambda^{\prime\prime}\in G} \Big(\sum_{(i', \lambda^{\prime\prime})\in T} 1\Big)
  (1+\rho_{\mathcal G}(\lambda',\lambda^{\prime\prime}))^{-\alpha}\nonumber\\
 \hskip-0.08in &\hskip-0.08in \le  & \hskip-0.08in L \sum_{\lambda^{\prime\prime}\in G} (1+\rho_{\mathcal G}(\lambda',\lambda^{\prime\prime}))^{-\alpha}
 \le \frac{L D_1({\mathcal G})\alpha }{\alpha-d}\ {\rm \ for\  any}\  \lambda'\in G,
 \end{eqnarray}
and
 \begin{equation}\label{jaffardpr.pf.eq5c}
 \sum_{\lambda'\in G} (1+\rho_{\mathcal H}(\lambda',i'))^{-\alpha} \le  \sum_{\lambda'\in G}  (1+\rho_{\mathcal G}(\lambda',\lambda^{\prime\prime}))^{-\alpha}
 \le \frac{D_1({\mathcal G})\alpha }{\alpha-d} \ {\rm \ for\  any}\  i'\in V,
 \end{equation}
  where $\lambda^{\prime\prime}\in G$ satisfies $(i', \lambda^{\prime\prime})\in T$.
Combining  \eqref{jaffardpr.pf.eq5}, \eqref{jaffardpr.pf.eq5b}
and \eqref{jaffardpr.pf.eq5c}  proves \eqref{jaffardpr.eq1} for $1<p<\infty$.

We can use similar argument to prove \eqref{jaffardpr.eq1} for $p=1, \infty$.
\end{proof}

\subsection{Proof of Proposition \ref{bandapproximation.pr} }
\label{bandapproximationpr.appendix}

Following the proof of Proposition \ref{jaffard.pr}, we obtain
\begin{eqnarray}\label{bandapproximation.pr.pf.eq2}
   \|({\mathbf A}-{\mathbf A}_s) {\mathbf c}\|_p
  &\hskip-0.08in \le & \hskip-0.08in  \|{\mathbf A}\|_{{\mathcal J}_\alpha({\mathcal G}, {\mathcal V})}
 \Big( \sup_{\lambda'\in G} \sum_{\rho_{\mathcal H}(\lambda', i')>s} (1+\rho_{\mathcal H}(\lambda',i'))^{-\alpha}\Big)^{1-1/p}
 \nonumber\\
\hskip-0.18in & \hskip-0.08in & \hskip-0.08in\times
 \Big( \sup_{i'\in V} \sum_{\rho_{\mathcal H}(\lambda', i')>s} (1+\rho_{\mathcal H}(\lambda',i'))^{-\alpha}\Big)^{1/p}
   \|{\mathbf c}\|_p,
  \end{eqnarray}
  where ${\mathbf c}\in \ell^p, 1\le p\le \infty$.
Applying similar argument used to prove \eqref{jaffardpr.pf.eq2},
\eqref{jaffardpr.pf.eq5b} and \eqref{jaffardpr.pf.eq5c}, we have
\begin{equation}\label{bandapproximation.pr.pf.eq1}
\sup_{\lambda'\in G} \sum_{\rho_{\mathcal H}(\lambda', i')>s} (1+\rho_{\mathcal H}(\lambda',i'))^{-\alpha}
\le  L \sup_{\lambda'\in G}
\sum_{\rho_{\mathcal G}(\lambda', \lambda'')\ge s} (1+\rho_{\mathcal G}(\lambda',\lambda''))^{-\alpha} \le
 \frac{ D_1({\mathcal G}) L\alpha}{\alpha-d} (s+1)^{-\alpha+d}
\end{equation}
and
\begin{equation}\label{bandapproximation.pr.pf.eq3}
 \sup_{i'\in V} \sum_{\rho_{\mathcal H}(\lambda', i')>s} (1+\rho_{\mathcal H}(\lambda',i'))^{-\alpha} 
\le \frac{ D_1({\mathcal G}) \alpha}{\alpha-d} (s+1)^{-\alpha+d}.
\end{equation}
Then the approximation error estimate \eqref{bandapproximation.pr.eq1}
follows from \eqref{bandapproximation.pr.pf.eq2},
\eqref{bandapproximation.pr.pf.eq1} and \eqref{bandapproximation.pr.pf.eq3}.

\subsection{Proof of Theorem \ref{wienerlemma.tm}}
\label{wiener.appendix}

To prove Wiener's lemma (Theorem  \ref{wienerlemma.tm}) for ${\mathcal J}_\alpha({\mathcal V}), \alpha>d$,
 we first show that
 it  is a Banach algebra of matrices.

\begin{pr}\label{jaffard.banach.pr}
 Let ${\mathcal V}$ be an undirected graph with the  counting measure $\mu$
 having
 polynomial growth \eqref{mupolynomial.def}.
 Then for any $\alpha>d$, ${\mathcal J}_\alpha({\mathcal V})$ is a  Banach algebra of matrices:
\begin{itemize}
\item[{(i)}] $\|\beta {\bf C}\|_{\mathcal{J}_\alpha({\mathcal V})} = |\beta| \|{\bf C}\|_{\mathcal{J}_{\alpha}({\mathcal V})}$;
\item [{(ii)}]
$\|{\bf C}+{\bf D}\|_{\mathcal{J}_{\alpha}({\mathcal V})}\le \|{\bf C}\|_{\mathcal{J}_{\alpha}({\mathcal V})}+  \|{\bf D}\|_{\mathcal{J}_{\alpha}({\mathcal V})}$;
\item [{(iii)}]  $\|{\bf C} {\bf D}\|_{\mathcal{J}_{\alpha}({\mathcal V})}\le
\frac{2^{\alpha+1}D_1\alpha}{\alpha-d} \|{\bf C}\|_{\mathcal{J}_{\alpha}({\mathcal V})} \|{\bf D}\|_{\mathcal{J}_{\alpha}({\mathcal V})}$; and
\item[{(iv)}] $\|{\bf D}{\bf c}\|_2\le \frac{D_1\alpha}{\alpha-d} \|{\bf D}\|_{\mathcal{J}_{\alpha}({\mathcal V})}\|{\bf c}\|_2$
\end{itemize}
for any scalar $\beta$, vector ${\bf c}\in \ell^2$ and matrices ${\bf C}, {\bf D}\in \mathcal{J}_{\alpha}({\mathcal V})$.
\end{pr}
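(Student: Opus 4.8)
The plan is to treat the four items in increasing order of difficulty, with item (iii) carrying essentially all the content. Items (i) and (ii) are immediate from the definition \eqref{Vjaffard.def2}: the quantity $\|\mathbf{C}\|_{\mathcal{J}_\alpha(\mathcal{V})}=\sup_{i,i'\in V}(1+\rho(i,i'))^\alpha|c(i,i')|$ is a weighted supremum norm, so absolute homogeneity and subadditivity follow at once from the corresponding properties of $|\cdot|$ and $\sup$. To justify the word ``Banach'' one argues in the standard way: a Cauchy sequence in $\mathcal{J}_\alpha(\mathcal{V})$ is entrywise Cauchy, the entrywise limit again obeys the decay bound with the limiting norm, and the convergence is in $\|\cdot\|_{\mathcal{J}_\alpha(\mathcal{V})}$.

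The first real step is a summability estimate on $\mathcal{V}$. Since $\rho$ takes values in $\{0,1,2,\dots\}$ by \eqref{rhorhoH.eq} and \eqref{rho.def}, and $\mu$ has polynomial growth \eqref{mupolynomial.def}, the proof of Lemma \ref{jaffardpr.lem} applies verbatim with $(\mathcal{G},\rho_{\mathcal G},\mu_{\mathcal G})$ replaced by $(\mathcal{V},\rho,\mu)$ and $D_1(\mathcal{G})$ replaced by $D_1$; taking $s=0$ there gives
\[
\sup_{i\in V}\sum_{i'\in V}(1+\rho(i,i'))^{-\alpha}\le \frac{D_1\alpha}{\alpha-d},
\]
and by the symmetry $\rho(i,i')=\rho(i',i)$ the same bound holds for column sums. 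For (iii), write $(\mathbf{C}\mathbf{D})(i,i')=\sum_{j\in V}c(i,j)d(j,i')$ and bound each factor by its decay estimate to get
\[
|(\mathbf{C}\mathbf{D})(i,i')|\le \|\mathbf{C}\|_{\mathcal{J}_\alpha(\mathcal{V})}\,\|\mathbf{D}\|_{\mathcal{J}_\alpha(\mathcal{V})}\sum_{j\in V}(1+\rho(i,j))^{-\alpha}(1+\rho(j,i'))^{-\alpha}.
\]
For each $j$ the triangle inequality $\rho(i,i')\le\rho(i,j)+\rho(j,i')$ forces $\max(\rho(i,j),\rho(j,i'))\ge\rho(i,i')/2$. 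Splitting the sum over $j$ into the two sets where $\rho(i,j)\ge\rho(i,i')/2$ and where $\rho(j,i')\ge\rho(i,i')/2$, and using $1+\rho(i,i')/2\ge (1+\rho(i,i'))/2$ so that $(1+\rho(i,j))^{-\alpha}\le 2^\alpha(1+\rho(i,i'))^{-\alpha}$ on the first set (and the analogous bound on the second), then applying the row/column sum bound above, yields
\[
\sum_{j\in V}(1+\rho(i,j))^{-\alpha}(1+\rho(j,i'))^{-\alpha}\le \frac{2^{\alpha+1}D_1\alpha}{\alpha-d}(1+\rho(i,i'))^{-\alpha}.
\]
Multiplying by $(1+\rho(i,i'))^\alpha$ and taking the supremum over $i,i'$ gives (iii).

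Finally, (iv) is a Schur test: a matrix whose row sums and column sums of absolute values are both bounded by $R$ defines a bounded operator on $\ell^2$ with operator norm at most $R$. By the displayed summability estimate, every $\mathbf{D}\in\mathcal{J}_\alpha(\mathcal{V})$ has row and column sums at most $\frac{D_1\alpha}{\alpha-d}\|\mathbf{D}\|_{\mathcal{J}_\alpha(\mathcal{V})}$, hence $\|\mathbf{D}\mathbf{c}\|_2\le\frac{D_1\alpha}{\alpha-d}\|\mathbf{D}\|_{\mathcal{J}_\alpha(\mathcal{V})}\|\mathbf{c}\|_2$; this can also be read off from Proposition \ref{jaffard.pr} in the special case $\mathcal{G}=\mathcal{V}$, $p=2$.

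I do not expect a genuine obstacle here. The only points needing care are transferring Lemma \ref{jaffardpr.lem} to the graph $\mathcal{V}$ — which is legitimate precisely because $\rho$ is integer-valued and $\mu$ has polynomial growth, even though $\mathcal{V}$ is disconnected and $\rho$ is not a geodesic distance — and bookkeeping the dyadic-splitting constant so that it comes out exactly as $2^{\alpha+1}D_1\alpha/(\alpha-d)$.
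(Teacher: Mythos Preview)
Your proposal is correct and follows essentially the same route as the paper: both dispose of (i) and (ii) by definition, prove (iii) via the dyadic split $\max(\rho(i,j),\rho(j,i'))\ge\rho(i,i')/2$ combined with the $\mathcal{V}$-analogue of Lemma~\ref{jaffardpr.lem} (which the paper records as \eqref{jaffard.banach.pr.pfeq2}), and obtain (iv) from the row/column sum bound exactly as in Proposition~\ref{jaffard.pr}. Your remark on completeness is a small addition not spelled out in the paper, and your care about transferring Lemma~\ref{jaffardpr.lem} to the integer-valued metric $\rho$ on the possibly disconnected graph $\mathcal{V}$ is well placed.
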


\begin{proof} The first two conclusions follow immediately from
\eqref{Vjaffard.def1} and \eqref{Vjaffard.def2}. 

Now we prove the third conclusion.  Take ${\bf C}, {\bf D}\in {\mathcal J}_\alpha({\mathcal V})$.
Then
\begin{eqnarray}\label{jaffard.banach.pr.pfeq1}
\|{\mathbf C}{\mathbf D}\|_{{\mathcal J}_\alpha({\mathcal V})}
&  \hskip-0.1in \le & \hskip-0.1in
2^\alpha \|{\mathbf C}\|_{{\mathcal J}_\alpha({\mathcal V})}
\|{\mathbf D}\|_{{\mathcal J}_\alpha({\mathcal V})}   \sup_{i,i'\in V}
\Big(\sum_{\rho(i,i^{\prime\prime})\ge \rho(i,i')/2}  (1+\rho(i^{\prime\prime}, i'))^{-\alpha}
\nonumber\\
& & +
\sum_{\rho(i^{\prime\prime}, i')\ge \rho(i,i')/2}
(1+\rho(i,i^{\prime\prime}))^{-\alpha} \Big).
\end{eqnarray}
Following the argument used in the proofs of Lemma \ref{jaffardpr.lem},
 we have
\begin{equation} \label{jaffard.banach.pr.pfeq2}
\sup_{i\in V} \sum_{\rho(i,i')\ge s} (1+\rho(i, i'))^{-\alpha}
\le \frac{D_1\alpha}
{\alpha-d} (s+1)^{-\alpha+d}, \ 0\le s\in \ZZ.
\end{equation}
Combining \eqref{jaffard.banach.pr.pfeq1} and \eqref{jaffard.banach.pr.pfeq2}
 proves the third conclusion.

 Following the proof of Proposition \ref{jaffard.pr}
 and applying \eqref{jaffard.banach.pr.pfeq2} instead of
\eqref{jaffardpr.pf.eq5b} and \eqref{jaffardpr.pf.eq5c}, we obtain the fourth conclusion.
\end{proof}

Now, we prove Theorem \ref{wienerlemma.tm}.

\begin{proof}[Proof of Theorem \ref{wienerlemma.tm}]
 Following the argument in \cite{suncasp05}, it suffices to establish the following differential norm
inequality:
\begin{equation}\label{wienerlemma.pf.eq1}
\|{\mathbf C}^2\|_{{\mathcal J}_\alpha({\mathcal V})}
 \le  2^{\alpha+d/2+2}
D_1^{1/2} (D_1\alpha/(\alpha-d))^{1-\theta} (\|{\mathbf C}\|_{{\mathcal J}_\alpha({\mathcal V})})^{2-\theta} (\|{\mathbf C}\|_{{\mathcal B}^2})^{\theta}
 \end{equation}
 holds for all ${\mathbf C}\in {\mathcal J}_\alpha({\mathcal V})$,
 where $\theta=(2\alpha-2d)/(2\alpha-d)\in (0,1)$.

Write ${\bf C}=(c(i,i'))_{i,i'\in V}$. Then
\begin{eqnarray}\label{wienerlemma.pf.eq2}
\|{\mathbf C}^2\|_{{\mathcal J}_\alpha({\mathcal V})} 
 & \hskip-0.1in \le & \hskip-0.1in 2^\alpha \|{\mathbf C}\|_{{\mathcal J}_\alpha(\mathcal V)}  \Big(\sup_{i,i'\in V} \sum_{\rho(i,i^{\prime\prime})\ge \rho(i,i')/2}  |c(i^{\prime\prime},i')| +\sup_{i,i'\in V} \sum_{\rho(i^{\prime\prime}, i')\ge \rho(i,i')/2}  |c(i,i^{\prime\prime})|\Big)\nonumber\\
 & \hskip-0.1in \le & \hskip-0.1in 2^\alpha \|{\mathbf C}\|_{{\mathcal J}_\alpha(\mathcal V)}  \Big(\sup_{i'\in V} \sum_{i^{\prime\prime}\in V}  |c(i^{\prime\prime},i')|  +\sup_{i\in V} \sum_{i^{\prime\prime}\in V}  |c(i,i^{\prime\prime})|\Big).
\end{eqnarray}
Set
\begin{equation}\label{tau.def}
\tau := \Big(\frac{D_1\alpha
  \|{\bf C}\|_{\mathcal{J}_{\alpha}({\mathcal V})}}
  {(\alpha-d)\|{\bf C}\|_{{\mathcal B}^2}}\Big)^{2/(2\alpha-d)}\ge 1\end{equation}
  by Proposition \ref{jaffard.banach.pr}.
      For $i'\in V$, we obtain
\begin{eqnarray}\label {wienerlemma.pf.eq3}
  \sum_{i^{\prime\prime}\in V}  |c(i^{\prime\prime},i')|
& \hskip-0.1in \le & \hskip-0.1in \Big(\sum_{\rho(i^{\prime\prime},i')\le
\tau}|c(i^{\prime\prime},i')|^2\Big)^{1/2} \Big(\sum_{\rho(i^{\prime\prime},i')\le \tau}1\Big)^{1/2} + \|{\mathbf C}\|_{{\mathcal J}_\alpha({\mathcal V})}
 \sum_{\rho(i^{\prime\prime},i')>\tau} (1+\rho(i^{\prime\prime},i'))^{-\alpha}\nonumber\\
& \hskip-0.1in \le & \hskip-0.1in D_1^{1/2} \|{\mathbf C}\|_{{\mathcal B}^2} (1+\lfloor\tau\rfloor)^{d/2} +D_1\alpha (\alpha-d)^{-1}\|{\mathbf C}\|_{{\mathcal J}_\alpha(\mathcal V)} (1+\lfloor\tau\rfloor)^{-\alpha+d}\nonumber\\
& \hskip-0.1in \le & \hskip-0.1in 2^{d/2+1}D_1^{1/2} (D_1\alpha/(\alpha-d))^{d/(2\alpha-d)} (\|{\mathbf C}\|_{{\mathcal J}_\alpha(\mathcal V)})^{1-\theta} (\|{\mathbf C}\|_{{\mathcal B}^2})^{\theta},
\end{eqnarray}
where
the second inequality holds by \eqref{jaffard.banach.pr.pfeq2}
and the last inequality follows from \eqref{tau.def}.  Similarly, for $i\in V$ we have
  \begin{equation}\label{wienerlemma.pf.eq4}
  \sum_{i^{\prime\prime}\in V}  |c(i',i^{\prime\prime})|
\le  2^{d/2+1}D_1^{1/2} (D_1\alpha/(\alpha-d))^{d/(2\alpha-d)} (\|{\mathbf C}\|_{{\mathcal J}_\alpha(\mathcal{V})})^{1-\theta} (\|{\mathbf C}\|_{{\mathcal B}^2})^{\theta}.
\end{equation}
Combining \eqref{wienerlemma.pf.eq2}, \eqref{wienerlemma.pf.eq3}
and \eqref{wienerlemma.pf.eq4} proves \eqref{wienerlemma.pf.eq1}. This completes the proof of Theorem \ref{wienerlemma.tm}.
\end{proof}

\subsection{Proof of Theorem \ref{lpstability.tm}}
\label{lpstabilitytm.appendix}

To prove Theorem \ref{lpstability.tm}, we need Theorem \ref{wienerlemma.tm} and the following lemma about families ${\mathcal J}_\alpha({\mathcal G}, {\mathcal V})$ and ${\mathcal J}_\alpha({\mathcal V})$ of matrices.

\begin{Lm}\label{jaffard.algebra2.lm}
Let  ${\mathcal G}, {\mathcal H}, {\mathcal V}$ and d be as in Proposition \ref{linftystability.pr}.
Then
\begin{itemize}
\item[{(i)}]
$\|{\mathbf A}{\mathbf C}\|_{{\mathcal J}_\alpha({\mathcal G}, {\mathcal V})}
\le \frac{2^{\alpha+1} LD_1({\mathcal G})\alpha}{\alpha-d}\|{\mathbf A}\|_{{\mathcal J}_\alpha({\mathcal G}, {\mathcal V})}
\|{\mathbf C}\|_{{\mathcal J}_\alpha({\mathcal V})}$ for all ${\mathbf A}\in {\mathcal J}_\alpha({\mathcal G}, {\mathcal V})$ and ${\mathbf C}\in {\mathcal J}_\alpha({\mathcal V})$.
\item[{(ii)}]
$\|{\mathbf A}^T{\mathbf B}\|_{{\mathcal J}_\alpha({\mathcal V})}
\le \frac{2^{\alpha+1} D_1({\mathcal G}) \alpha}{\alpha-d}\|{\mathbf A}\|_{{\mathcal J}_\alpha({\mathcal G}, {\mathcal V})}
\|{\mathbf B}\|_{{\mathcal J}_\alpha({\mathcal G}, {\mathcal V})}$
for all ${\mathbf A}, {\mathbf B}\in {\mathcal J}_\alpha({\mathcal G}, {\mathcal V})$.
\end{itemize}
\end{Lm}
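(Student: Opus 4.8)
The plan is to estimate the entries of the two product matrices directly, following the pattern of the submultiplicativity argument in Proposition~\ref{jaffard.banach.pr}(iii), and then to convert the resulting row/column sums into the estimates already recorded in the proofs of Propositions~\ref{jaffard.pr} and~\ref{jaffard.banach.pr}. The only new ingredient is a pair of triangle-type inequalities linking the geodesic distance $\rho_{\mathcal H}$ on mixed vertices to the distance $\rho$ on $\mathcal V$; everything else is bookkeeping with geometric sums.

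For part (i), I would write the $(\lambda,i)$-entry of $\mathbf A\mathbf C$ as $\sum_{i'\in V} a(\lambda,i')\,c(i',i)$ and bound it by $\|\mathbf A\|_{\mathcal J_\alpha(\mathcal G,\mathcal V)}\|\mathbf C\|_{\mathcal J_\alpha(\mathcal V)}\sum_{i'\in V}(1+\rho_{\mathcal H}(\lambda,i'))^{-\alpha}(1+\rho(i',i))^{-\alpha}$. The first step is the inequality
\[
1+\rho_{\mathcal H}(\lambda,i)\le \bigl(1+\rho_{\mathcal H}(\lambda,i')\bigr)+\bigl(1+\rho(i',i)\bigr),\qquad \lambda\in G,\ i,i'\in V,
\]
which follows from $\rho_{\mathcal H}(\lambda,i)\le\rho_{\mathcal H}(\lambda,i')+\rho_{\mathcal H}(i',i)$ together with the identity $\rho_{\mathcal H}(i',i)=\rho(i',i)+1$ for $i'\ne i$ from \eqref{rho.def} (the case $i'=i$ being immediate). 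Multiplying by $(1+\rho_{\mathcal H}(\lambda,i))^\alpha$ and splitting the sum over $i'$ according to whether $1+\rho_{\mathcal H}(\lambda,i')\ge\tfrac12(1+\rho_{\mathcal H}(\lambda,i))$ or $1+\rho(i',i)\ge\tfrac12(1+\rho_{\mathcal H}(\lambda,i))$, one gains a factor $2^\alpha$ on each piece and is left with $\sup_{i}\sum_{i'\in V}(1+\rho(i',i))^{-\alpha}$ on one and $\sup_{\lambda}\sum_{i'\in V}(1+\rho_{\mathcal H}(\lambda,i'))^{-\alpha}$ on the other. The first sum is $\le LD_1(\mathcal G)\alpha/(\alpha-d)$ by running the argument behind \eqref{jaffard.banach.pr.pfeq2} for the measure $\mu$, which by Proposition~\ref{Vpolynomial.pr} has polynomial growth with constant $LD_1(\mathcal G)$ and dimension $d$; the second is exactly \eqref{jaffardpr.pf.eq5b}. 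Adding the two contributions produces the constant $2^{\alpha+1}LD_1(\mathcal G)\alpha/(\alpha-d)$.

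For part (ii), I would write the $(i,i')$-entry of $\mathbf A^T\mathbf B$ as $\sum_{\lambda\in G} a(\lambda,i)\,b(\lambda,i')$ and bound it by $\|\mathbf A\|_{\mathcal J_\alpha(\mathcal G,\mathcal V)}\|\mathbf B\|_{\mathcal J_\alpha(\mathcal G,\mathcal V)}\sum_{\lambda\in G}(1+\rho_{\mathcal H}(\lambda,i))^{-\alpha}(1+\rho_{\mathcal H}(\lambda,i'))^{-\alpha}$. Now the relevant inequality is
\[
1+\rho(i,i')\le \bigl(1+\rho_{\mathcal H}(\lambda,i)\bigr)+\bigl(1+\rho_{\mathcal H}(\lambda,i')\bigr),\qquad \lambda\in G,\ i,i'\in V,
\]
which again comes from \eqref{rho.def} and $\rho_{\mathcal H}(i,i')\le\rho_{\mathcal H}(i,\lambda)+\rho_{\mathcal H}(\lambda,i')$ (trivial when $i=i'$). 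Multiplying by $(1+\rho(i,i'))^\alpha$ and splitting the sum over $\lambda$ according to which term on the right dominates, each piece gains a factor $2^\alpha$ and collapses to $\sup_{i'}\sum_{\lambda\in G}(1+\rho_{\mathcal H}(\lambda,i'))^{-\alpha}\le D_1(\mathcal G)\alpha/(\alpha-d)$, which is \eqref{jaffardpr.pf.eq5c}. This gives the constant $2^{\alpha+1}D_1(\mathcal G)\alpha/(\alpha-d)$; there is no factor $L$ here because the summation runs over $G$ rather than over $V$.

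The main obstacle, such as it is, is purely the bookkeeping in the two triangle inequalities — keeping track of the $\pm1$ shifts coming from the definition \eqref{rho.def} of $\rho$, disposing of the diagonal cases $i=i'$, and being careful to invoke the row-sum estimate \emph{with} the factor $L$ (via $V$ and Proposition~\ref{Vpolynomial.pr}) in (i) but \emph{without} it (via $G$ and \eqref{jaffardpr.pf.eq5c}) in (ii). Once these are set up, the two estimates follow by the same splitting into near/far indices used throughout Section~\ref{proofs.section}, and I would simply cite \eqref{jaffardpr.pf.eq5b}, \eqref{jaffardpr.pf.eq5c} and \eqref{jaffard.banach.pr.pfeq2} rather than re-deriving them.
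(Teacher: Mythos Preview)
Your proposal is correct and follows essentially the same route as the paper's proof: the same entrywise bound on the product, the same mixed triangle inequalities $\rho_{\mathcal H}(\lambda,i)\le\rho_{\mathcal H}(\lambda,i')+\rho(i',i)$ for (i) and $\rho(i,i')\le\rho_{\mathcal H}(\lambda,i)+\rho_{\mathcal H}(\lambda,i')$ for (ii) (the paper derives these from Assumption~\eqref{basicassumption1} and Proposition~\ref{shortestpath.pr}, yielding inequalities slightly sharper than your $\pm1$ versions, but either form suffices after passing to $1+\cdot$), the same near/far splitting with the $2^\alpha$ factor, and the same row-sum estimates \eqref{jaffardpr.pf.eq5b}, \eqref{jaffardpr.pf.eq5c}, \eqref{jaffard.banach.pr.pfeq2}. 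The only cosmetic difference is that the paper cites \eqref{Vdim.cr.eq2} to convert $D_1$ into $LD_1(\mathcal G)$, while you invoke Proposition~\ref{Vpolynomial.pr} directly; these are equivalent.
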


\begin{proof}
Take
${\mathbf A}\in {\mathcal J}_\alpha({\mathcal G}, {\mathcal V})$ and ${\mathbf C}\in {\mathcal J}_\alpha({\mathcal V})$.
Observe from \eqref{basicassumption1} that
$$\rho_{\mathcal H}(\lambda, i)\le \rho_{\mathcal H}(\lambda, i')+\rho(i', i)
\ {\rm for \ all} \ \lambda\in G \ {\rm and} \ i, i'\in V.$$
Similar to the argument used in the proof of  Proposition \ref{jaffard.banach.pr}, we obtain
\begin{equation*}
\|{\mathbf A}{\mathbf C}\|_{{\mathcal J}_\alpha({\mathcal G}, {\mathcal V})}
\le  2^\alpha \|{\mathbf A}\|_{{\mathcal J}_\alpha({\mathcal G}, {\mathcal V})}
\|{\mathbf C}\|_{{\mathcal J}_\alpha({\mathcal V})} \Big(\sup_{i\in V}
\sum_{ i'\in V}
(1+\rho(i', i))^{-\alpha}
+\sup_{\lambda\in G}
\sum_{i'\in V}
 (1+\rho_{\mathcal H}(\lambda, i'))^{-\alpha} \Big). 
\end{equation*}
This together with \eqref{Vdim.cr.eq2},
\eqref{bandapproximation.pr.pf.eq1} and \eqref{jaffard.banach.pr.pfeq2} proves the first conclusion.

Recall that
\begin{equation}\label{jaffard.algebra2.pr.pfeq1}
\rho(i, i')\le \rho_{\mathcal H}(\lambda, i)+\rho_{\mathcal H}(\lambda, i')\ {\rm for \ all} \ \lambda\in G\ {\rm and} \ i, i'\in V.\end{equation}
Then for
${\mathbf A}, {\mathbf B}\in {\mathcal J}_\alpha({\mathcal G}, {\mathcal V})$, we obtain from
\eqref{bandapproximation.pr.pf.eq3} and \eqref{jaffard.algebra2.pr.pfeq1} that
\begin{eqnarray*}
\|{\mathbf A}^T{\mathbf B}\|_{{\mathcal J}_\alpha({\mathcal V})}
 &\hskip-0.08in  \le & \hskip-0.08in  2^{\alpha+1} \|{\mathbf A}\|_{{\mathcal J}_\alpha({\mathcal G}, {\mathcal V})}
\|{\mathbf B}\|_{{\mathcal J}_\alpha({\mathcal G}, {\mathcal V})}
\sup_{i\in V} \sum_{\lambda\in G} (1+\rho_{\mathcal H}(\lambda, i))^{-\alpha}\nonumber\\
&\hskip-0.08in \le &\hskip-0.08in  \frac{2^{\alpha+1} D_1({\mathcal G}) \alpha}{\alpha-d} \|{\mathbf A}\|_{{\mathcal J}_\alpha({\mathcal G}, {\mathcal V})}
\|{\mathbf B}\|_{{\mathcal J}_\alpha({\mathcal G}, {\mathcal V})}.
\end{eqnarray*}
This  completes the proof of the second conclusion.
\end{proof}

Now we prove Theorem \ref{lpstability.tm}.
\begin{proof}[Proof of Theorem \ref{lpstability.tm}]
Take ${\mathbf A}\in {\mathcal J}_\alpha({\mathcal G}, {\mathcal V})$ that has $\ell^2$-stability.
 Then  ${\mathbf A}^T{\mathbf A}$ has bounded inverse on $\ell^2$.
 Observe that ${\mathbf A}^T {\mathbf A} \in {\mathcal J}_\alpha({\mathcal V})$ by Lemma \ref{jaffard.algebra2.lm}.
 Therefore
 $({\mathbf A}^T {\mathbf A})^{-1}\in {\mathcal J}_{\alpha}({\mathcal V})$
 and
 ${\mathbf A}({\mathbf A}^T {\mathbf A})^{-1}\in {\mathcal J}_\alpha({\mathcal G}, {\mathcal V})$ by
  Theorem \ref{wienerlemma.tm}  and Lemma \ref{jaffard.algebra2.lm}.
  Hence for any ${\bf c}\in \ell^p$,
\begin{equation*}
\|{\bf c}\|_p  = \| ({\mathbf A}^T {\mathbf A})^{-1} {\mathbf A}^T {\mathbf A} {\bf c}\|_p \le
    \frac{D_1({\mathcal G})L \alpha}{\alpha-d}
 \|  {\mathbf A} ({\mathbf A}^T {\mathbf A})^{-1} \|_{{\mathcal J}_\alpha({\mathcal G}, {\mathcal V})}
\|{\mathbf A} {\bf c}\|_p
\end{equation*}
and
\begin{equation*}
\|{\mathbf A} {\bf c}\|_p \le \frac{D_1({\mathcal G})L \alpha}{\alpha-d}
 \|  {\mathbf A} \|_{{\mathcal J}_\alpha({\mathcal G}, {\mathcal V})}
\|{\bf c}\|_p
\end{equation*}
by Proposition \ref{jaffard.pr} and the dual property between sequences $\ell^p$ and $\ell^{p/(p-1)}$. The $\ell^p$-stability for the matrix ${\mathbf A}$ then follows.
\end{proof}

\subsection{Proof of Theorem  \ref{leastsqaure.tm}}
\label{leastsquaretm.appendix}
 The conclusion \eqref{l2estimate.in.eq} follows immediately from Proposition \ref{jaffard.pr}, Theorem \ref{wienerlemma.tm} and Lemma \ref{jaffard.algebra2.lm}.

\subsection{Proof of Theorem \ref{l2stablility.thm1}}
\label{l2stability.appendix}

 Observe from Proposition \ref{shortestpath.pr} that
 $$ B_{\mathcal H}(\gamma, r)\cap G=\{\gamma'\in G,\ \rho_{\mathcal G}(\gamma, \gamma')\le r\}, \ \gamma\in G.
 $$
 and
  $$ B_{\mathcal H}(i, r)\cap V=\{i'\in V,\ \rho(i, i')\le \max(r-1, 0)\}, \ i\in V.
 $$

Take ${\bf c}=(c(i))_{i\in V}$ supported in $B_{\mathcal H}(\lambda, N)\cap V$ and write ${\bf A}{\bf c}=(d(\lambda'))_{\lambda'\in G}$.
Then
\begin{equation}\label{l2stablility.thm1.pf.eq1}
\|{\bf A} {\bf c}\|_2\ge A \|{\mathbf A}\|_{{\mathcal J}_\alpha({\mathcal G}, {\mathcal V})}\|{\bf c}\|_2
\end{equation}
and
\begin{eqnarray}\label{l2stablility.thm1.pf.eq2}
 \sum_{\rho_{\mathcal H}(\lambda', \lambda)>2N} |d(\lambda')|^2
& \hskip-0.1in \le & \hskip-0.1in
 L D_1({\mathcal G}) N^{-\alpha+d} \|{\mathbf A}\|_{{\mathcal J}_\alpha({\mathcal G}, {\mathcal V})}^2
 \nonumber\\
& & \hskip-0.1in\times
 \sum_{\rho_{\mathcal H}(\lambda', \lambda)>2N}
 \sum_{i\in B_{\mathcal H}(\lambda,N)\cap V} (1+\rho_{\mathcal H}(\lambda',i))^{-\alpha}
|c(i)|^2
\nonumber\\
& \hskip-0.1in \le & \hskip-0.1in \big( D_1({\mathcal G})\big)^2 L N^{-2\alpha+2d} \alpha (\alpha-d)^{-1}
 \|{\mathbf A}\|_{{\mathcal J}_\alpha({\mathcal G}, {\mathcal V})}^2 \|{\bf c}\|_2^2,
\end{eqnarray}
where the first inequality holds as
$$\rho_{\mathcal H}(\lambda',i')\ge \rho_{\mathcal H}(\lambda',\lambda)-\rho_{\mathcal H}(i', \lambda)>N$$
for all $\lambda'\not\in B_{\mathcal H}(\lambda, 2N)$ and $i'\in B_{\mathcal H}(\lambda,N)$, and the last inequality follows from
\eqref{bandapproximation.pr.pf.eq3}.
Combining \eqref{l2stablility.thm1.pf.eq1} and \eqref{l2stablility.thm1.pf.eq2}
proves \eqref{l2stablility.thm1.eq1}.

\subsection{Proof of Theorem \ref{l2stablility.thm}}
In this subsection, we will prove the following strong version of Theorem \ref{l2stablility.thm}.

\begin{Tm}\label{l2stablility.newthm}
 Let ${\mathcal G}, {\mathcal H}, {\mathcal V}$  and
 ${\mathbf A}$ be as in Theorem \ref{l2stablility.thm}.
If there exists a positive constant $A_0$, an integer $N_0\ge 3$, and
a  maximal $\frac{N_0}4$-disjoint subset $G_{N_0/4}$
 such that
\eqref{l2stablility.thm2.eq0}
 is true  and
 \eqref{l2stablility.thm2.eq1}
hold for all  $\lambda_m\in G_{N_0/4}$,
 then  ${\mathbf A}$ satisfies 
  \eqref{l2stablility.thm2.eq2}.
\end{Tm}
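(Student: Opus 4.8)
The plan is to prove the global estimate \eqref{l2stablility.thm2.eq2} by a domain--decomposition argument: introduce a Lipschitz partition of unity subordinate to the covering $\{B_{\mathcal H}(\lambda_m,N_0)\}_{\lambda_m\in G_{N_0/4}}$, feed each piece into the local hypothesis \eqref{l2stablility.thm2.eq1}, and control the mismatch by a commutator estimate that exploits the polynomial off-diagonal decay of $\mathbf A$. First I would record the covering facts. By Proposition \ref{shortestpath.pr}, $\rho_{\mathcal H}=\rho_{\mathcal G}$ on $G\times G$, so maximality of the $\tfrac{N_0}{4}$-disjoint set $G_{N_0/4}$ and Proposition \ref{covering.pr} give that $\{B_{\mathcal G}(\lambda_m,N_0/2)\}$ covers $G$ with overlap bounded by a fixed power of $D_0({\mathcal G})$; combined with Assumptions 2--4 (every $i\in V$ has an anchor agent within $\mathcal H$-distance $1$), this shows every $i\in V$ lies in some $B_{\mathcal H}(\lambda_m,N_0)$, and that for each $u\in G\cup V$ the number of $m$ with $\rho_{\mathcal H}(u,\lambda_m)\le N_0$ is at most $(D_0({\mathcal G}))^{c_0}$ for an absolute constant $c_0$. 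Then I would set $\psi_m(u):=\max\{0,\min\{1,\,2-2N_0^{-1}\rho_{\mathcal H}(u,\lambda_m)\}\}$ for $u\in G\cup V$, so that $0\le\psi_m\le1$, $\operatorname{supp}\psi_m\subseteq B_{\mathcal H}(\lambda_m,N_0)$, $|\psi_m(u)-\psi_m(v)|\le\min\{1,\,2N_0^{-1}\rho_{\mathcal H}(u,v)\}$, and $1\le\sum_m\psi_m(u)^2\le(D_0({\mathcal G}))^{c_0}$.

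Next comes the lower bound. Fix a finitely supported $\mathbf c=(c(i))_{i\in V}$. Since $\psi_m\mathbf c$ is supported in $B_{\mathcal H}(\lambda_m,N_0)\cap V$, we have $\chi_{\lambda_m,V}^{N_0}(\psi_m\mathbf c)=\psi_m\mathbf c$, so \eqref{l2stablility.thm2.eq1} and $\|\chi_{\lambda_m,G}^{2N_0}\mathbf A(\psi_m\mathbf c)\|_2\le\|\mathbf A(\psi_m\mathbf c)\|_2$ yield $\|\psi_m\mathbf c\|_2\le (A_0\|\mathbf A\|_{{\mathcal J}_\alpha({\mathcal G},{\mathcal V})})^{-1}\|\mathbf A(\psi_m\mathbf c)\|_2$ for each $\lambda_m\in G_{N_0/4}$. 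Squaring, summing over $m$, and using $\sum_m\psi_m(i)^2\ge1$ on $V$, I get $\|\mathbf c\|_2^2\le (A_0\|\mathbf A\|_{{\mathcal J}_\alpha({\mathcal G},{\mathcal V})})^{-2}\sum_m\|\mathbf A(\psi_m\mathbf c)\|_2^2$.

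The heart of the argument is then to bound $\sum_m\|\mathbf A(\psi_m\mathbf c)\|_2^2$ in terms of $\|\mathbf A\mathbf c\|_2^2$. Writing $\mathbf A=(a(\lambda,i))_{\lambda\in G,i\in V}$ and letting $\widetilde D_m$ denote multiplication by $\psi_m$ on $G$, I would decompose $\mathbf A(\psi_m\mathbf c)=\widetilde D_m\mathbf A\mathbf c+\mathbf B_m\mathbf c$, where $\mathbf B_m$ has entries $b_m(\lambda,i)=a(\lambda,i)(\psi_m(i)-\psi_m(\lambda))$. Then $(\sum_m\|\mathbf A(\psi_m\mathbf c)\|_2^2)^{1/2}\le(\sum_m\|\widetilde D_m\mathbf A\mathbf c\|_2^2)^{1/2}+(\sum_m\|\mathbf B_m\mathbf c\|_2^2)^{1/2}$; the first term equals $(\sum_{\lambda\in G}(\sum_m\psi_m(\lambda)^2)|\mathbf A\mathbf c(\lambda)|^2)^{1/2}\le(D_0({\mathcal G}))^{c_0/2}\|\mathbf A\mathbf c\|_2$. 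For the commutator term I would use that $b_m(\lambda,i)=0$ unless $\lambda$ or $i$ lies in $B_{\mathcal H}(\lambda_m,N_0)$, together with the pointwise bound $|b_m(\lambda,i)|\le\|\mathbf A\|_{{\mathcal J}_\alpha({\mathcal G},{\mathcal V})}(1+\rho_{\mathcal H}(\lambda,i))^{-\alpha}\min\{1,\,2N_0^{-1}\rho_{\mathcal H}(\lambda,i)\}$; a Schur test on each $\mathbf B_m$, splitting the row/column sums at $\rho_{\mathcal H}(\lambda,i)\sim N_0$ and estimating $\sum_{\rho\le R}(1+\rho)^{-\alpha}\rho$ and $\sum_{\rho>R}(1+\rho)^{-\alpha}$ via the polynomial growth of the counting measures (as in Lemma \ref{jaffardpr.lem} and Proposition \ref{jaffard.pr}, picking up a factor $L$ from Assumption 3 when converting $\mathcal H$-sums around innovative positions into $\mathcal G$-sums around anchors), produces exactly the trichotomy in \eqref{l2stablility.thm2.eq0}: $\|\mathbf B_m\|_{{\mathcal B}^2}$ is bounded by $\|\mathbf A\|_{{\mathcal J}_\alpha({\mathcal G},{\mathcal V})}D_1({\mathcal G})L$ times $N_0^{-1}$, times $N_0^{-1}\ln N_0$, or times $N_0^{-(\alpha-d)}$ according as $\alpha>d+1$, $\alpha=d+1$, or $\alpha<d+1$. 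Finally, since $\mathbf B_m\mathbf c$ depends --- up to a polynomially decaying tail --- only on $\mathbf c$ restricted to $B_{\mathcal H}(\lambda_m,2N_0)$, and these balls have overlap at most $(D_0({\mathcal G}))^{c_0}$, summing over $m$ gives $(\sum_m\|\mathbf B_m\mathbf c\|_2^2)^{1/2}\le E\|\mathbf c\|_2$ with $E$ no larger than the right-hand side of \eqref{l2stablility.thm2.eq0} divided by $4(D_0({\mathcal G}))^2$, times $\|\mathbf A\|_{{\mathcal J}_\alpha({\mathcal G},{\mathcal V})}$.

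Putting the pieces together, $A_0\|\mathbf A\|_{{\mathcal J}_\alpha({\mathcal G},{\mathcal V})}\|\mathbf c\|_2\le(D_0({\mathcal G}))^{c_0/2}\|\mathbf A\mathbf c\|_2+E\|\mathbf c\|_2$, and the hypothesis \eqref{l2stablility.thm2.eq0} forces $E$ to be a small fraction of $A_0\|\mathbf A\|_{{\mathcal J}_\alpha({\mathcal G},{\mathcal V})}$, so it can be absorbed on the left; tracking the resulting constant yields \eqref{l2stablility.thm2.eq2}, and density extends it from finitely supported $\mathbf c$ to all of $\ell^2$. The hard part will be the commutator estimate: one must choose the partition of unity so that the three regimes ($\alpha\gtrless d+1$, with the borderline $\alpha=d+1$ producing the $\ln N_0$) reproduce the right-hand side of \eqref{l2stablility.thm2.eq0} with matching constants, and one must make the sum over $m$ rigorous --- most cleanly by first replacing $\mathbf A$ with a band truncation $\mathbf A_s$, $s\sim N_0$, via Proposition \ref{bandapproximation.pr} (which perturbs the local stability constant and the commutators only by $O(N_0^{-\alpha+d})$), so that each $\mathbf B_m$ has finite bandwidth and the overlap bookkeeping becomes finite.
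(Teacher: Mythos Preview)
Your approach is essentially the paper's own: the same trapezoidal cutoffs $\psi_m$, the same commutator decomposition $\mathbf A\psi_m=\psi_m\mathbf A+\mathbf B_m$ with a Schur-type estimate producing the $\alpha\gtrless d+1$ trichotomy, and the same use of the band truncation $\mathbf A_{N_0}$ (which the paper applies from the outset rather than as a final cleanup) to make the overlap bookkeeping finite. One small correction: for $i\in V$ you only get $\sum_m\psi_m(i)^2\ge\bigl(\tfrac{N_0-2}{N_0}\bigr)^2\ge\tfrac19$ rather than $\ge1$, since $i$ reaches the covering ball $B_{\mathcal G}(\lambda_m,N_0/2)$ only through an anchor agent at $\rho_{\mathcal H}$-distance $1$; this is exactly the factor $\tfrac13$ appearing in the paper's estimate \eqref{l2stablility.thm.pf.eq4} and is what ultimately produces the $12$ in the denominator of \eqref{l2stablility.thm2.eq2}.
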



\begin{proof}  Let $\psi_{0}$ be  the trapezoid function,
\begin{equation}\label{psi0.def}
\psi_0(t)=\left\{\begin{array}{ll}
1 & {\rm if} \ |t|\le 1/2\\
2-2|t| & {\rm if} \ 1/2<|t|\le 1\\
0 & {\rm if} \ |t|> 1.
\end{array}
\right.
\end{equation}
For $\lambda\in G$,  define multiplication operators $\Psi_{\lambda, V}^N$ and $\Psi_{\lambda, G}^N$ by
\begin{equation}\label{Psi.operator}
\Psi_{\lambda, V}^N:\  (c(i))_{i\in V} \longmapsto \big(\psi_{0}(\rho_{\mathcal H}(\lambda ,i)/N)c(i)\big)_{i\in V},
\end{equation}
\begin{equation}\label{Psi1.operator}
\Psi_{\lambda, G}^N:\  (d(\lambda'))_{\lambda'\in G} \longmapsto \big(\psi_{0}(\rho_{\mathcal H}(\lambda ,\lambda')/N)d(\lambda')\big)_{\lambda'\in G}.
\end{equation}
Observe that
$${\mathbf A}_{N}\Psi_{\lambda, V}^N={\mathbf A}_{N}\chi_{\lambda,V}^{N} \Psi_{\lambda, V}^N =\chi_{\lambda,G}^{2N}{\mathbf A}_{N}\chi_{\lambda,V}^{N} \Psi_{\lambda, V}^N,
N\ge 0,$$
where ${\mathbf A}_{N}$ is a band approximation of the matrix ${\bf A}$ in \eqref{bandapproximation.def}.
Then  for  all $\lambda_m\in G_{N_0/4}$,
it follows from Proposition \ref{bandapproximation.pr} and our local stability assumption  \eqref{l2stablility.thm2.eq1} that
\begin{eqnarray*} 
  \|{\mathbf A}_{N_0} \Psi_{\lambda_m,V}^{N_0} {\bf c}\|_2
&\hskip-0.1in \ge & \hskip-0.1in \|\chi_{\lambda_m,G}^{2N_0} {\mathbf A}\chi_{\lambda_m,V}^{N_0} \Psi_{\lambda_m,V}^{N_0} {\bf c}\|_2 -
\|\chi_{\lambda_m,G}^{2N_0}({\mathbf A}-{\mathbf A}_{N_0}) \Psi_{\lambda_m,V}^{N_0} {\bf c}\|_2\nonumber\\
& \hskip-0.1in \ge & \hskip-0.1in \Big(A_0-\frac{D_1({\mathcal G}) L \alpha}{\alpha-d} N_0^{-\alpha+d}\Big) \|{\bf A}\|_{{\mathcal J}_\alpha({\mathcal G}, {\mathcal V})} \|\Psi_{\lambda_m,V}^{N_0} {\bf c}\|_2, \ \  {\bf c}\in \ell^2.
\end{eqnarray*}
Therefore
\begin{eqnarray} \label{l2stablility.thm.pf.eq4}
&& \Big(\sum_{\lambda_m\in G_{N_0/4}}
\|{\mathbf A}_{N_0} \Psi_{\lambda_m,V}^{N_0} {\bf c}\|_2^2\Big)^{1/2}\nonumber\\
&\ge &
\Big(A_0-\frac{D_1({\mathcal G}) L \alpha}{\alpha-d} N_0^{-\alpha+d}\Big)\|{\bf A}\|_{{\mathcal J}_\alpha({\mathcal G}, {\mathcal V})}
\Big(\sum_{\lambda_m\in G_{N_0/4}} \|\Psi_{\lambda_m,V}^{N_0}{\bf c}\|_2^2\Big)^{1/2}\nonumber\\
 & \ge & \Big(\frac{A_0}{3}-
\frac{D_1({\mathcal G})L\alpha}{3(\alpha-d)} N_0^{-\alpha+d}\Big)\|{\bf A}\|_{{\mathcal J}_\alpha({\mathcal G}, {\mathcal V})}\|{\bf c}\|_2,
\end{eqnarray}
where the last inequality holds
because
for all $i\in V$,
\begin{eqnarray*}
\sum_{\lambda_m\in G_{N_0/4}} |\psi_0(\rho_{\mathcal H}(\lambda_m, i)/N_0)|^2
& \hskip-0.08in \ge & \hskip-0.08in
\Big(\frac{N_0-2}{N_0}\Big)^2
\sum_{\lambda_m\in G_{N_0/4}}\chi_{B_{\mathcal H}(\lambda_m, N_0/2+1)}(i)
\ge \frac{1}{9}
\end{eqnarray*}
by \eqref{psi0.def}, Proposition \ref{covering.pr} and the assumption that $N_0\ge 3$.

Next, we estimate  commutators
$${\mathbf A}_{N_0} \Psi_{\lambda_m,V}^{N_0}-\Psi_{\lambda_m,G}^{N_0} {\mathbf A}_{N_0}=
({\mathbf A}_{N_0} \Psi_{\lambda_m,V}^{N_0}-\Psi_{\lambda_m,G}^{N_0} {\mathbf A}_{N_0})\chi_{\lambda_m,V}^{2N_0},\  \lambda_m\in G_{N_0/4}.$$
Take ${\bf c}=(c(i))_{i\in V}\in \ell^2$. Then
\begin{eqnarray} \label{l2stablility.thm.pf.eq5}
\hskip-0.1in & \hskip-0.1in & \hskip-0.1in \sum_{\lambda_m\in G_{N_0/4}} \|({\mathbf A}_{N_0} \Psi_{\lambda_m,V}^{N_0}-\Psi_{\lambda_m,G}^{N_0} {\mathbf A}_{N_0}){\bf c}\|_2^2
\nonumber \\
\hskip-0.1in &\hskip-0.1in \le & \hskip-0.1in \|{\mathbf A}\|_{{\mathcal J}_\alpha({\mathcal G}, {\mathcal V})}^2
\sum_{\lambda_m\in G_{N_0/4}} \sum_{\lambda\in G}\Big\{\sum_{\rho_{\mathcal H}(\lambda,i)\le  N_0}
(1+\rho_{\mathcal H}(\lambda,i))^{-\alpha} \nonumber\\
\hskip-0.1in &\hskip-0.1in &\hskip-0.1in\times \Big|\psi_0\Big(\frac{\rho_{\mathcal H}(\lambda,\lambda_m)}{N_0}\Big)-\psi_0\Big(\frac{\rho_{\mathcal H}(i,\lambda_m)}{N_0}\Big)\Big|
 \chi_{B_{\mathcal H}(\lambda_m,2N_0)\cap V}(i)
 |c(i)|\Big\}^2\nonumber\\
\hskip-0.1in& \hskip-0.1in\le & \hskip-0.1in 4(D_0({\mathcal G}))^4 N_0^{-2} \|{\mathbf A}\|_{{\mathcal J}_\alpha({\mathcal G}, {\mathcal V})}^2
\Big(\sup_{i\in V} \sum_{\lambda\in B_{\mathcal H}(i, N_0)\cap G} (1+\rho_{\mathcal H}(\lambda,i))^{-\alpha}\rho_{\mathcal H}(\lambda, i)\Big)\nonumber\\
\hskip-0.1in&\hskip-0.1in &\hskip-0.1in
\times \Big(\sup_{\lambda\in G} \sum_{i\in B_{\mathcal H}(\lambda, N_0)\cap V} (1+\rho_{\mathcal H}(\lambda,i))^{-\alpha}\rho_{\mathcal H}(\lambda, i)\Big) \|{\bf c}\|_2^2,
\end{eqnarray}
where the last inequality follows from Propositions \ref{covering.pr} and \ref{shortestpath.pr}, and 
\begin{equation*}
|\psi_{0}(t)-\psi_{0}(t')|\le 2|t-t'| \ \ {\rm for \ all}\ t, t'\in \RR.
\end{equation*}
Following the argument used in \eqref{jaffardpr.pf.eq2}, we have
\begin{eqnarray} \label{l2stablility.thm.pf.eq6}
\hskip-0.08in& \hskip-0.08in& \hskip-0.08in\sup_{i\in V} \sum_{\lambda\in B_{\mathcal H}(i, N_0)\cap G} (1+\rho_{\mathcal H}(\lambda,i))^{-\alpha}\rho_{\mathcal H}(\lambda, i)\nonumber\\
& \hskip-0.08in\le & \hskip-0.08in  \sup_{\lambda'\in G} \sum_{\rho_{\mathcal G}(\lambda, \lambda')\le N_0} (1+\rho_{\mathcal G}(\lambda,\lambda'))^{-\alpha+1}\nonumber\\
& \hskip-0.08in\le & \hskip-0.08in D_1({\mathcal G}) (N_0+1)^{-\alpha+d+1} +(\alpha-1) D_1({\mathcal G}) \sum_{n=0}^{N_0-1}
 (n+1)^{-\alpha+d}\nonumber\\
&\hskip-0.08in \le &\hskip-0.08in D_1({\mathcal G}) (N_0+1)^{-\alpha+d+1} + D_1({\mathcal G})(\alpha-1) \Big(1+\int_1^{N_0} t^{-\alpha+d} dt\Big)\nonumber\\
& \hskip-0.08in\le &\hskip-0.08in \left\{\begin{array}
{ll} \frac{D_1({\mathcal G})(\alpha-1)(\alpha-d)}{\alpha-d-1} & {\rm if} \ \alpha>d+1\\
D_1({\mathcal G})(1+d+ d\ln N_0) & {\rm if} \ \alpha=d+1\\
\frac{2^{d+1-\alpha} D_1({\mathcal G}) d}{d+1-\alpha} N_0^{d+1-\alpha} & {\rm if} \ \alpha<d+1
\end{array}\right. 
\end{eqnarray}
and
\begin{eqnarray} \label{l2stablility.thm.pf.eq7}
\quad & & \sup_{\lambda\in G} \sum_{i\in B_{\mathcal H}(\lambda, N_0)\cap V} (1+\rho_{\mathcal H}(\lambda,i))^{-\alpha}\rho_{\mathcal H}(\lambda, i)\nonumber\\
\quad & \le & L \sup_{\lambda\in G}
\sum_{\lambda'\in B_{\mathcal G}(\lambda, N_0)} (1+\rho_{\mathcal G}(\lambda, \lambda'))^{-\alpha+1}\nonumber\\
\quad & \le & \left\{\begin{array}
{ll} \frac{D_1({\mathcal G}) L (\alpha-1)(\alpha-d)}{\alpha-d-1} & {\rm if} \ \alpha>d+1\\
D_1({\mathcal G})L (1+d+ d\ln N_0) & {\rm if} \ \alpha=d+1\\
\frac{2^{d+1-\alpha} D_1({\mathcal G}) d L }{d+1-\alpha} N_0^{d+1-\alpha} & {\rm if} \ \alpha<d+1.
\end{array}\right.
\end{eqnarray}
Therefore,
\begin{eqnarray*}
  & & (D_0({\mathcal G}))^2 \|{\bf A}_{N_0} {\bf c}\|_2
 \ge
\Big(\sum_{\lambda_m\in G_{N_0/4}}
\|\Psi_{\lambda_m,G}^{N_0} {\mathbf A}_{N_0}{\bf c}\|_2^2\Big)^{1/2}\nonumber\\
&\ge &
\Big(\sum_{\lambda_m\in G_{N_0/4}}
\| {\mathbf A}_{N_0}\Psi_{\lambda_m,V}^{N_0}{\bf c}\|_2^2\Big)^{1/2} -
\Big(\sum_{\lambda_m\in G_{N_0/4}}
\| ({\mathbf A}_{N_0}\Psi_{\lambda_m,V}^{N_0}-\Psi_{\lambda_m,G}^{N_0}{\mathbf A}_{N_0}) {\bf c}\|_2^2\Big)^{1/2}\nonumber\\
& \ge & \frac{A_0\|{\mathbf A}\|_{{\mathcal J}_\alpha({\mathcal G}, {\mathcal V})}}{3}\|{\bf c}\|_2-
D_1({\mathcal G}) L \|{\mathbf A}\|_{{\mathcal J}_\alpha({\mathcal G}, {\mathcal V})}N_0^{-\min(\alpha-d, 1)}\|{\bf c}\|_2\nonumber\\
& & \times  \left\{\begin{array}{ll}  \Big(\frac{ \alpha}{3(\alpha-d)}+\frac{2 (D_0({\mathcal G}))^2(\alpha-1)(\alpha-d)}{\alpha-d-1}\Big) & \hskip-0.1in {\rm if}
\ \alpha>d+1\\
\Big(\frac{d+1}{3}+2(D_0({\mathcal G}))^2(1+d+d\ln N_0)\Big) & \hskip-0.1in {\rm if} \ \alpha=d+1\\
  \Big(\frac{\alpha}{3(\alpha-d)}+\frac{4(D_0({\mathcal G}))^2 d}{d+1-\alpha}\Big) & \hskip-0.1in {\rm if} \ \alpha<d+1,
\end{array}\right.
\end{eqnarray*}
where the first inequality holds by Proposition \ref{covering.pr},
and the third inequality follows from \eqref{l2stablility.thm.pf.eq4}
and \eqref{l2stablility.thm.pf.eq5}.
This together with  Proposition  \ref{bandapproximation.pr}
completes the proof.\end{proof}

\subsection{Proof of Proposition \ref{convergence.prop}} To prove Proposition \ref{convergence.prop}, we need the following critical estimate. 

\begin{pr} \label{wienerbounded.pr}
Let $\mathcal G$, $\mathcal H$, $\mathcal V$ and  ${\bf S}$ be as in Proposition \ref{convergence.prop}.
Then
\begin{equation}
\|(\chi_{\lambda,V}^N {\bf S}^T {\bf S} \chi_{\lambda,V}^N)^{-1}\|_{{\mathcal J}_\alpha({\mathcal V})}
\le \frac{2^{-\alpha-1}(\alpha-d)^2D_2}{\alpha^2D_1D_1(\mathcal{G})\|{\bf S}\|_{{\mathcal J}_\alpha({\mathcal G}, {\mathcal V})}^2},
\end{equation}
where $D_2$ is the  constant in \eqref{convergence.prop.eq4}.
\end{pr}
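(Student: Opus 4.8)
The plan is to combine the uniform lower $\ell^2$-bound available for principal submatrices of ${\bf S}^T{\bf S}$ with the quantitative ingredient in the proof of Wiener's lemma, namely the differential norm inequality \eqref{wienerlemma.pf.eq1}. Fix $\lambda\in G$ and abbreviate ${\bf A}_\lambda:=\chi_{\lambda,V}^N {\bf S}^T{\bf S}\chi_{\lambda,V}^N$. Since ${\bf A}_\lambda=\chi_{\lambda,V}^N {\bf A}_\lambda\chi_{\lambda,V}^N$, after the obvious identification ${\bf A}_\lambda$ is a self-adjoint positive matrix on the finite-dimensional space $\ell^2(B_{\mathcal H}(\lambda,N)\cap V)$, and $(\chi_{\lambda,V}^N{\bf S}^T{\bf S}\chi_{\lambda,V}^N)^{-1}$ in the statement is the inverse on that subspace, extended by zero. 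One first notes that this inverse, as well as every power of ${\bf A}_\lambda$, again satisfies ${\bf X}=\chi_{\lambda,V}^N{\bf X}\chi_{\lambda,V}^N$, so all of them belong to ${\mathcal J}_\alpha({\mathcal V})$; the whole point is to bound the ${\mathcal J}_\alpha({\mathcal V})$-norm of the inverse uniformly in $\lambda$.

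First I would record two scalar facts, both independent of $\lambda$. By Lemma \ref{jaffard.algebra2.lm}(ii), ${\bf S}^T{\bf S}\in{\mathcal J}_\alpha({\mathcal V})$ with $\|{\bf S}^T{\bf S}\|_{{\mathcal J}_\alpha({\mathcal V})}\le \frac{2^{\alpha+1}D_1({\mathcal G})\alpha}{\alpha-d}\|{\bf S}\|_{{\mathcal J}_\alpha({\mathcal G},{\mathcal V})}^2$; since entrywise multiplication by the $0$--$1$ diagonal $\chi_{\lambda,V}^N$ can only decrease the Jaffard norm, ${\bf A}_\lambda$ obeys the same bound, and then Proposition \ref{jaffard.banach.pr}(iv) gives $\|{\bf A}_\lambda\|_{{\mathcal B}^2}\le M:=\frac{2^{\alpha+1}D_1 D_1({\mathcal G})\alpha^2}{(\alpha-d)^2}\|{\bf S}\|_{{\mathcal J}_\alpha({\mathcal G},{\mathcal V})}^2$. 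On the other hand, the assumption that ${\bf S}$ has $\ell^2$-stability with lower bound $A\|{\bf S}\|_{{\mathcal J}_\alpha({\mathcal G},{\mathcal V})}$ means ${\bf S}^T{\bf S}\ge A^2\|{\bf S}\|_{{\mathcal J}_\alpha({\mathcal G},{\mathcal V})}^2\,{\bf I}$, hence its principal submatrix ${\bf A}_\lambda$ on $B_{\mathcal H}(\lambda,N)\cap V$ satisfies $\langle{\bf A}_\lambda{\bf c},{\bf c}\rangle\ge m\|{\bf c}\|_2^2$ with $m:=A^2\|{\bf S}\|_{{\mathcal J}_\alpha({\mathcal G},{\mathcal V})}^2$; note that $1-m/M$ is exactly the number $r_0$ of the statement.

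Then I would renormalise and expand into a Neumann series. Setting $c:=1/M$, the self-adjoint matrix ${\bf B}_\lambda:=\chi_{\lambda,V}^N-c\,{\bf A}_\lambda$ has spectrum in $[0,1-m/M]=[0,r_0]$, so $\|{\bf B}_\lambda^{\,k}\|_{{\mathcal B}^2}\le r_0^{\,k}$ for every $k\ge0$ and $(\chi_{\lambda,V}^N{\bf S}^T{\bf S}\chi_{\lambda,V}^N)^{-1}=c\sum_{n\ge0}{\bf B}_\lambda^{\,n}$ as operators on $\ell^2(B_{\mathcal H}(\lambda,N)\cap V)$; it thus suffices to show $\sum_{n\ge0}\|{\bf B}_\lambda^{\,n}\|_{{\mathcal J}_\alpha({\mathcal V})}\le D_2$. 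This is where \eqref{wienerlemma.pf.eq1} enters: applying it with ${\bf C}={\bf B}_\lambda^{\,k}$ yields
\[
\|{\bf B}_\lambda^{\,2k}\|_{{\mathcal J}_\alpha({\mathcal V})}\le 2^{\alpha+d/2+2}D_1^{1/2}\big(D_1\alpha/(\alpha-d)\big)^{1-\theta}\|{\bf B}_\lambda^{\,k}\|_{{\mathcal J}_\alpha({\mathcal V})}^{\,2-\theta}\,\|{\bf B}_\lambda^{\,k}\|_{{\mathcal B}^2}^{\,\theta},\qquad\theta=\frac{2\alpha-2d}{2\alpha-d},
\]
together with $\|{\bf B}_\lambda^{\,k}\|_{{\mathcal B}^2}\le r_0^{\,k}$. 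Iterating this recursion along $k=1,2,4,\dots,2^{j},\dots$ and solving the resulting affine recursion for the exponent gives, by induction on $j$, a bound of the shape $\|{\bf B}_\lambda^{\,2^{j}}\|_{{\mathcal J}_\alpha({\mathcal V})}\le K^{\,a(j)}r_0^{\,2^{j}}$ with base $K=\frac{2^{2\alpha+d/2+4}D_1^3\alpha^2}{r_0^{\,1-\theta}(\alpha-d)^2}$ and $a(j)$ growing like $(2-\theta)^{j}$; the key rewriting is $(2-\theta)^{j}=(2^{j})^{\log_2(2-\theta)}=n^{\log_2(2-\theta)}$ with $n=2^{j}$, and tracking the multiplicative constant in $a(j)$ produces precisely the exponent $\frac{2-\theta}{(1-\theta)^2}\,n^{\log_2(2-\theta)}$. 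Filling in non-dyadic $n$ by submultiplicativity in ${\mathcal J}_\alpha({\mathcal V})$ (Proposition \ref{jaffard.banach.pr}(iii)) and collecting terms gives exactly the summand $K^{\frac{2-\theta}{(1-\theta)^2}n^{\log_2(2-\theta)}}r_0^{\,n}$ of \eqref{convergence.prop.eq4}, so $\sum_{n\ge0}\|{\bf B}_\lambda^{\,n}\|_{{\mathcal J}_\alpha({\mathcal V})}\le D_2$; then $\|(\chi_{\lambda,V}^N{\bf S}^T{\bf S}\chi_{\lambda,V}^N)^{-1}\|_{{\mathcal J}_\alpha({\mathcal V})}\le cD_2=\frac{2^{-\alpha-1}(\alpha-d)^2}{\alpha^2D_1D_1({\mathcal G})\|{\bf S}\|_{{\mathcal J}_\alpha({\mathcal G},{\mathcal V})}^2}D_2$, the asserted bound.

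The hard part will be the quantitative dyadic iteration of the last paragraph: one must set up the induction so that the accumulated base is exactly $K=\frac{2^{2\alpha+d/2+4}D_1^3\alpha^2}{r_0^{\,1-\theta}(\alpha-d)^2}$ and the exponent exactly $\frac{2-\theta}{(1-\theta)^2}\,n^{\log_2(2-\theta)}$, and then check that $\sum_nK^{\frac{2-\theta}{(1-\theta)^2}n^{\log_2(2-\theta)}}r_0^{\,n}$ converges (legitimate since $r_0<1$ while the first factor grows only subexponentially in $n$, because $\log_2(2-\theta)<1$) — this is the mechanism already behind Theorem \ref{wienerlemma.tm}, following \cite{suncasp05}, so the numerical constants in $D_2$ are pure bookkeeping. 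Two smaller points must be treated carefully: that multiplication by the $0$--$1$ diagonal $\chi_{\lambda,V}^N$ does not enlarge ${\mathcal J}_\alpha({\mathcal V})$-norms, which lets the estimates pass from ${\bf S}^T{\bf S}$ to ${\bf A}_\lambda$; and that ${\bf B}_\lambda$, being self-adjoint on the finite subspace, keeps $\|{\bf B}_\lambda^{\,k}\|_{{\mathcal B}^2}\le r_0^{\,k}$ so that the operator-norm side of \eqref{wienerlemma.pf.eq1} stays controlled at every step.
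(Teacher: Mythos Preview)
Your proposal is correct and follows essentially the same route as the paper's proof: both bound the Jaffard norm of ${\bf J}_{\lambda,N}=\chi_{\lambda,V}^N{\bf S}^T{\bf S}\chi_{\lambda,V}^N$ via Lemma~\ref{jaffard.algebra2.lm}(ii), use the $\ell^2$-stability of ${\bf S}$ for the lower spectral bound, rewrite ${\bf J}_{\lambda,N}=M({\bf I}-{\bf B}_{\lambda,N})$ with $\|{\bf B}_{\lambda,N}\|_{{\mathcal B}^2}\le r_0$ and $\|{\bf B}_{\lambda,N}\|_{{\mathcal J}_\alpha({\mathcal V})}\le 2$, and then sum the Neumann series after controlling $\|{\bf B}_{\lambda,N}^{\,n}\|_{{\mathcal J}_\alpha({\mathcal V})}$ by the dyadic iteration of the differential norm inequality \eqref{wienerlemma.pf.eq1} exactly as in \cite{suncasp05}. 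The paper's write-up is slightly more compressed (it simply cites \cite{suncasp05} for the iteration and records the resulting bound on $\|{\bf B}_{\lambda,N}^{\,n}\|_{{\mathcal J}_\alpha({\mathcal V})}$), but the mechanism and constants are identical.
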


\begin{proof}
Let ${\bf J}_{\lambda,N}:= \chi_{\lambda,V}^N {\bf S}^T {\bf S}\chi_{\lambda,V}^N$. By Lemma \ref{jaffard.algebra2.lm}, we have
\begin{eqnarray}  \label{convergence.lem.eq-1}
\hskip-0.1in
 \|{\bf J}_{\lambda,N}\|_{{\mathcal J}_\alpha({\mathcal V})}
&\hskip-0.1in \le & \hskip-0.1in \frac{2^{\alpha+1} D_1({\mathcal G})\alpha}{\alpha-d} \|{\bf S}\|_{{\mathcal J}_\alpha({\mathcal G}, {\mathcal V})}^2.
\end{eqnarray}
This together with  Propositions \ref{jaffard.banach.pr} implies that
\begin{equation*}
A^2 \|{\bf S}\|_{{\mathcal J}_\alpha({\mathcal G}, {\mathcal V})}^2 \|\chi_{\lambda,V}^{N}{\bf x}\|_2^2
  \le    \| {\bf S} \chi_{\lambda,V}^{N} {\bf x}\|_2^2  =  \langle {\bf J}_{\lambda,N}{\bf  x},{\bf  x}\rangle
\le  \frac{2^{\alpha+1} \alpha^2 D_1 D_1({\mathcal G})  }{(\alpha-d)^2} \|{\bf S}\|_{{\mathcal J}_\alpha({\mathcal G}, {\mathcal V})}^2
\|\chi_{\lambda,V}^N{\bf x}\|_2^2
\end{equation*}
for all ${\bf x}\in \ell^2$.
Hence
\begin{equation}\label{convergence.lem.pf.eq0}
{\mathbf J}_{\lambda,N}=
\frac{2^{\alpha+1} \alpha^2 D_1 D_1({\mathcal G})  }{(\alpha-d)^2} \|{\bf S}\|_{{\mathcal J}_\alpha({\mathcal G}, {\mathcal V})}^2
({\bf I}_{B_{\mathcal H} (\lambda,N)\cap V} -{\mathbf B}_{\lambda,N})\end{equation}
for some ${\mathbf B}_{\lambda,N}$  satisfying
\begin{equation}  \label{convergence.lem.pf.eq1}\|{\mathbf B}_{\lambda,N}\|_{{\mathcal B}^2}\le  r_0
\end{equation}
and
\begin{equation} \label{convergence.lem.pf.eq2} \|{\mathbf B}_{\lambda,N}\|_{{\mathcal J}_\alpha({\mathcal V})}
 \le
\|{\bf I}_{B_{\mathcal H} (\lambda,N)\cap V}\|_{{\mathcal J}_\alpha({\mathcal V})}+
\frac{2^{-\alpha-1} (\alpha-d)^2\|{\mathbf J}_{\lambda,N}\|_{{\mathcal J}_\alpha(\mathcal{V})}}
{ \alpha^2 D_1 D_1({\mathcal G})   \|{\bf S}\|_{{\mathcal J}_\alpha({\mathcal G}, {\mathcal V})}^2}
 \le  1+ \frac{\alpha-d}{\alpha D_1 }\le 2,
\end{equation}
where ${\bf I}_{B_{\mathcal H} (\lambda,N)\cap V}$ is the identity matrix on $B_{\mathcal H}(\lambda, N)\cap V$.
Then following the argument in \cite{suncasp05} and
applying \eqref{wienerlemma.pf.eq1} with ${\bf C}$ replaced by ${\bf B}_{\lambda,N}$
and $V$ by $B_{\mathcal H}(\lambda, N)\cap V$, we obtain the following estimate
$$\|({\mathbf B}_{\lambda,N})^{n}\|_{{\mathcal J}_{\alpha}(\mathcal{V})}\le
\Big(\frac{ D^{\frac{1}{1-\theta}} \|{\mathbf B}_{\lambda,N}\|_{{\mathcal J}_\alpha(\mathcal{V})}}{\|{\mathbf B}_{\lambda,N}\|_{{\mathcal B}^2}}\Big)^{\frac{2-\theta}{1-\theta}n^{\log_2(2-\theta)}} \|{\mathbf B}_{\lambda,N}\|_{{\mathcal B}^2}^n \ \ {\rm  for\  all}\ n\ge 1,
$$
where
$D=2^{2\alpha+d/2+3} D_1^{1/2} (D_1\alpha/(\alpha-d))^{2-\theta}$.
This together with \eqref{convergence.lem.pf.eq1} and
\eqref{convergence.lem.pf.eq2} leads to
\begin{equation}
\label{convergence.lem.pf.eq3}
\|({\mathbf B}_{k,N})^{n}\|_{{\mathcal J}_{\alpha}(\mathcal{V})}\le
(2D^{\frac{1}{1-\theta}}/r_0)^{\frac{2-\theta}{1-\theta}n^{\log_2(2-\theta)}} r_0^n \ {\rm \ for \ all } \ n\ge 1.
%
\end{equation}
Observe that
\begin{equation} \label{convergence.lem.pf.eq4}
\|({\mathbf J}_{\lambda,N})^{-1}\|_{{\mathcal J}_{\alpha}(\mathcal{V})}
\le
\frac{2^{-\alpha-1}(\alpha-d)^2}{\alpha^2D_1D_1(\mathcal{G}) \|{\bf S}\|_{{\mathcal J}_\alpha(\mathcal{G},\mathcal{V})}^2} \Big(1+\sum_{n=1}^\infty \|({\mathbf B}_{\lambda,N})^n\|_{{\mathcal J}_{\alpha}(\mathcal{V})}\Big)
\end{equation}
by \eqref{convergence.lem.pf.eq0}. Combining \eqref{convergence.lem.pf.eq3}
and \eqref{convergence.lem.pf.eq4} 
 completes the proof.
\end{proof}

\begin{proof}[Proof of Proposition \ref{convergence.prop}]
Observe from \eqref{normal.eqn} and \eqref{localinverse.eqn} that 
\begin{equation*}
\chi_{\lambda,V}^{N/2}({\bf d}_{\lambda, N}-{\bf d}_{2})
 = \chi_{\lambda,V}^{N/2} (\chi_{\lambda,V}^N {\bf S}^T{\bf S}\chi_{\lambda,V}^N)^{-1}
\chi_{\lambda,V}^N {\bf S}^T{\bf S} ({\mathbf I}-\chi_{\lambda,V}^N) {\bf d}_{2}.
\end{equation*}
This together with \eqref{jaffard.banach.pr.pfeq2}, Lemma \ref{jaffard.algebra2.lm}, and Propositions \ref{jaffard.banach.pr} and \ref{wienerbounded.pr} implies that
\begin{eqnarray*}
\|\chi_{\lambda,V}^{N/2}({\bf d}_{\lambda, N}-{\bf d}_{2})\|_\infty
& \le & \|(\chi_{\lambda,V}^N {\bf S}^T {\bf S} \chi_{\lambda,V}^N)^{-1}\chi_{\lambda,V}^N{\bf S}^T{\bf S}\|_{{\mathcal J}_{\alpha}(\mathcal{V})}\times\\
& & \big(\sup \limits_{i\in B_{\mathcal{H}}(\lambda,N/2)\cap V} \sum \limits_{j\notin B_{\mathcal{H}}(\lambda,N)\cap V}(1+\rho_{\mathcal{H}}(i,j))^{-\alpha}\big)\|{\bf d}_{2}\|_\infty\\
& \le & \frac{2^{\alpha+1}D_1\alpha}{\alpha-d} \|(\chi_{\lambda,V}^N {\bf S}^T {\bf S} \chi_{\lambda,V}^N)^{-1}\|_{{\mathcal J}_{\alpha}(\mathcal{V})} \|{\bf S}^T{\bf S}\|_{{\mathcal J}_{\alpha}(\mathcal{V})}\times\\
& & \big(\sup \limits_{i\in  V} \sum \limits_{\rho_{\mathcal{H}}(i,j)>N/2}(1+\rho_{\mathcal{H}}(i,j))^{-\alpha}\big)\|{\bf d}_{2}\|_\infty\\
& \le & 2^{\alpha+1}D_2 \big(\sup \limits_{i\in  V} \sum \limits_{\rho_{\mathcal{H}}(i,j)>N/2}(1+\rho_{\mathcal{H}}(i,j))^{-\alpha}\big)\|{\bf d}_{2}\|_\infty\\
& \le &   \frac{2^{\alpha+1}D_1 D_2\alpha}{\alpha-d}\Big(\frac{N}{2}+1\Big)^{-\alpha+d}\|{\bf d}_{2}\|_\infty\le  D_3(N+1)^{-\alpha+d}\|{\bf d}_{2}\|_\infty.
\end{eqnarray*}
This proves the estimate
\eqref{convergence.prop.eq2}.

Now we prove \eqref{convergence.prop.eq3}.
Set ${\bf y}_{LS}=({\bf S}^T{\bf S})^{-1} {\bf d}_{2}$. By \eqref{jaffard.banach.pr.pfeq2},
\begin{equation}\label{con.1}
\|{\bf y}_{LS}\|_\infty \le \frac{D_1\alpha}{\alpha-d}\|({\bf S}^T{\bf S})^{-1}\|_{{\mathcal J}_\alpha({\mathcal V})}\|{\bf d}_{2}\|_\infty.
\end{equation}
Moreover, following the proof of Proposition \ref{wienerbounded.pr} gives
\begin{equation}\label{con.2}
\|({\bf S}^T {\bf S})^{-1}\|_{{\mathcal J}_\alpha({\mathcal V})}
\le \frac{2^{-\alpha-1}(\alpha-d)^2D_2}{\alpha^2D_1D_1(\mathcal{G})\|{\bf S}\|_{{\mathcal J}_\alpha({\mathcal G}, {\mathcal V})}^2}.
\end{equation}
Write
\begin{eqnarray}\label{con.3}
 \chi_{\lambda,G}^{N/2} ({\bf w}_{\lambda, N}-{\bf w}_{LS})
& = &  \chi_{\lambda,G}^{N/2} (\chi_{\lambda,G}^N {\bf S} \chi_{\lambda,V}^N)
(\chi_{\lambda,V}^N {\bf S}^T{\bf S}\chi_{\lambda,V}^N)^{-2} \chi_{\lambda,V}^N {\bf S}^T{\bf S}(I-\chi_{\lambda,V}^N){\bf d}_{2}\nonumber\\
& & +  \chi_{\lambda,G}^{N/2} (\chi_{\lambda,G}^N {\bf S} \chi_{\lambda,V}^N)
(\chi_{\lambda,V}^N {\bf S}^T{\bf S}\chi_{\lambda,V}^N)^{-1} \chi_{\lambda,V}^N {\bf S}^T{\bf S}(I-\chi_{\lambda,V}^N){\bf y}_{LS}
\nonumber\\
& &  - \chi_{\lambda,G}^{N/2} {\bf S} (I-\chi_{\lambda,V}^N) {\bf y}_{LS}\ \nonumber\\
&=:& {\Rmnum 1}+{\Rmnum 2}+{\Rmnum 3}.
\end{eqnarray}
Using \eqref{bandapproximation.pr.pf.eq1}, \eqref{con.1}, \eqref{con.2}, Lemma \ref{jaffard.algebra2.lm}, and Propositions \ref{jaffard.banach.pr} and \ref{wienerbounded.pr}, we obtain
\begin{eqnarray*}
\|{\Rmnum 1}\|_\infty & \le & \|(\chi_{\lambda,G}^N {\bf S} \chi_{\lambda,V}^N)
(\chi_{\lambda,V}^N {\bf S}^T{\bf S}\chi_{\lambda,V}^N)^{-2} \chi_{\lambda,V}^N {\bf S}^T{\bf S}\|_{{\mathcal J}_\alpha({\mathcal G},{\mathcal V})}\times\\
& & \big(\sup \limits_{\lambda'\in B_{\mathcal{H}}(\lambda,N/2)\cap G} \sum \limits_{i\notin B_{\mathcal{H}}(\lambda,N)\cap V}(1+\rho_{\mathcal{H}}(\lambda',i))^{-\alpha}\big)\|{\bf d}_{2}\|_\infty\\
& \le & \frac{2^{2\alpha+2}LD_2^2}{\|\mathbf{S}\|_{{\mathcal J}_\alpha({\mathcal G},{\mathcal V})}}\big(\sup \limits_{\lambda'\in G} \sum \limits_{ \rho_{\mathcal{H}}(\lambda',i)>N/2}(1+\rho_{\mathcal{H}}(\lambda',i))^{-\alpha}\big)\|{\bf d}_{2}\|_\infty\\
& \le & \frac{2^{3\alpha-d+2}\alpha L^2D_1(\mathcal{G})D_2^2}{(\alpha-d)\|\mathbf{S}\|_{{\mathcal J}_\alpha({\mathcal G},{\mathcal V})}}(N+1)^{-\alpha+d}\|{\bf d}_{2}\|_\infty,
\end{eqnarray*}
\begin{eqnarray*}
\|{\Rmnum 2}\|_\infty & \le & \frac{2^{3\alpha-d+2}\alpha^2 L^2(D_1(\mathcal{G}))^2D_2}{(\alpha-d)^2}\|\mathbf{S}
\|_{\mathcal{J}_\alpha(\mathcal{G},\mathcal{V})}
(N+1)^{-\alpha+d}\|{\bf y}_{LS}\|_\infty\\
& \le & \frac{2^{2\alpha-d+1}\alpha L^2D_1(\mathcal{G})D_2^2}{(\alpha-d)\|\mathbf{S}\|_{{\mathcal J}_\alpha({\mathcal G},{\mathcal V})}}(N+1)^{-\alpha+d}\|{\bf d}_{2}\|_\infty,
\end{eqnarray*} and
\begin{eqnarray*}
\|{\Rmnum 3}\|_\infty & \le & \frac{LD_2}{\|\mathbf{S}\|_{{\mathcal J}_\alpha({\mathcal G},{\mathcal V})}}(N+1)^{-\alpha+d}\|{\bf d}_{2}\|_\infty.
\end{eqnarray*}These together with \eqref{con.3} prove \eqref{convergence.prop.eq3}.
\end{proof}

\subsection{Proof of Theorem \ref{convergence.thm}}
Let
\begin{equation}\label{un.def}{\bf u}_n={\bf S}^T({\bf w}_{n}-{\bf w}_{LS})={\bf S}^T {\bf w}_{n}-{\bf d}_{2} \ \ {\rm and} \ \ {\bf v}_n={\bf S} {\bf u}_n,  \ n\ge 1.\end{equation}
Then,
\begin{equation*}
{\bf u}_{n+1} = {\bf u}_n-{\bf S}^T {\bf R}_N {\bf S}^T{\bf S}
{\bf u}_n  =  {\bf S}^T \big({\bf S} ({\bf S}^T{\bf S})^{-2} {\bf S}^T {\bf v}_n-{\bf R}_N {\bf S}^T {\bf v}_n\big)
\end{equation*}
by \eqref{wn.def0}, \eqref{wn.def} and \eqref{un.def}. Therefore,
\begin{eqnarray}\label{con.5}
\|{\bf u}_{n+1}\|_\infty&\le &  \frac{D_1(\mathcal{G})L\alpha}{\alpha-d} \|{\bf S}\|_{{\mathcal J}_\alpha(\mathcal{G},\mathcal{V})}\|{\bf R}_N {\bf S}^T {\bf v}_n- {\bf S} ({\bf S}^T{\bf S})^{-2} {\bf S}^T {\bf v}_n\|_\infty\nonumber\\
& \le  &  \frac{D_1(\mathcal{G})D_4L\alpha}{\alpha-d} \|{\bf S}\|_{{\mathcal J}_\alpha(\mathcal{G},\mathcal{V})} (N+1)^{-\alpha+d} \|({\bf S}^T{\bf S})^{-1} {\bf S}^T {\bf v}_n\|_\infty\nonumber\\
&  = &  r_1 \|{\bf u}_{n}\|_\infty\le \cdots\le
r_1^n\|\mathbf{S}^T({\bf R}_N {\bf S}^T{\bf S}-{\bf S}({\bf S}^T{\bf S})^{-1}){\bf d}_{2}\|_\infty \nonumber\\
&\le &  r_1^{n+1}\|{\bf d}_{2}\|_\infty,
  \end{eqnarray}
where the second inequality follows from \eqref{rxerror} with ${\bf d}_{2}$ replaced by $({\bf S}^T{\bf S})^{-1} {\bf S}^T {\bf v}_n$,
and the  last inequality holds by
\eqref{rxerror} and Proposition \ref{jaffard.pr}.

Observe that
\begin{equation}
{\bf w}_{n+1}-{\bf w}_n=- {\bf R}_N {\bf S}^T{\bf S} {\bf u}_n.
\end{equation}
Using \eqref{con.4}, Proposition \ref{jaffard.pr} and Lemma \ref{jaffard.algebra2.lm} gives
\begin{equation}
\|{\bf w}_{n+1}-{\bf w}_n\|_\infty \leq
 \frac{2^{2\alpha+2}\alpha L^3(D_1(\mathcal{G}))^2D_2^2}{(\alpha-d)D_1\|{\bf S}\|_{{\mathcal J}_\alpha(\mathcal{G},\mathcal{V})}}\|{\bf u}_n\|_\infty.
\end{equation}
This together with \eqref{con.5} proves the exponential convergence \eqref{convergence.thm.eq1}.

 The conclusion \eqref{con.6}  follows from \eqref{un.def} by taking limit $n \to \infty$.

 The error estimate \eqref{convergence.thm.eq2} between the ``least square'' solution ${\bf d}_2$ and its sub-optimal approximation ${\bf S}^T{\bf w}_n, n\ge 1$, follows from \eqref{convergence.thm.eq1} and Proposition \ref{jaffard.pr}.

\end{document}